\renewenvironment{proof}[1][\proofname]{{\bfseries #1.} }{\qed}
\newcommand{\field}[1]{\mathbb{#1}}
\newcommand{\R}{\field{R}}
\newcommand{\N}{\field{N}}
\newcommand{\Z}{\field{Z}}
\newcommand{\Var}{{\rm Var}}
\newcommand{\e}{{\rm e}}
\newcommand{\Mc}{{\mathcal M}}
\newcommand{\for}{\; \mbox{for}\;}
\newcommand{\Tb}{{\mathbb{T}}}
\newcommand{\Ec}{{\mathcal{E}}}
\newcommand{\Lc}{{\mathcal{L}}}
\newcommand{\var}{\operatorname{Var}}
\newcommand{\Sc}{\mathcal{S}}
\newcommand{\Nc}{\mathcal{N}}
\def\authors#1{{ \begin{center} #1 \vspace{0pt} \end{center} } \smallskip}
\def\institution#1{{\sl \begin{center} #1 \vspace{0pt} \end{center} } }
\def\inst#1{\unskip $^{#1}$}
\def\title#1{{\huge\bf  \begin{center} #1 \vspace{0pt} \end{center}  } \smallskip}
\def\E{{\mathbb{ E}}}
\def\P{{\mathbb{P}}}
\def\paref#1{(\ref{#1})}
\newtheorem{theorem}{Theorem}[section]
\newtheorem{proposition}[theorem]{Proposition}
\newtheorem{lemma}[theorem]{Lemma}
\newtheorem{defn}[theorem]{Definition}
\newtheorem{remark}[theorem]{Remark}
\begin{document}

\title{{\textsc{Non-Universality \\ of Nodal Length Distribution \\for Arithmetic
Random Waves}}}
\date{July 10, 2015}
\authors{Domenico Marinucci
\inst{1}, Giovanni Peccati\inst{2}, Maurizia Rossi\inst{2} and Igor Wigman\inst{3}}

\institution{\inst{1}Dipartimento di Matematica, Universit\`a di Roma Tor Vergata\\
\inst{2}Unit\'e de Recherche en Math\'ematiques, Universit\'e du Luxembourg\\
\inst{3}Department of Mathematics, King's College London}

\begin{abstract} ``Arithmetic random waves" are the Gaussian Laplace eigenfunctions on
the two-dimensional torus \cite{RudWig,KKW}. In this paper we find that their nodal length converges to a non-universal (non-Gaussian) limiting distribution,
depending on the angular distribution of lattice points lying on circles.

Our argument has two main ingredients. An explicit derivation of the Wiener-It\^o chaos expansion for the nodal length shows that
it is dominated by its $4$th order chaos component (in particular, somewhat surprisingly, the second order chaos component vanishes).
The rest of the argument relies on the precise analysis of the fourth order chaotic component.

\smallskip

\noindent\textbf{Keywords and Phrases: }Arithmetic Random Waves, Nodal Lines, Non-Central Limit Theorem, Berry's Cancellation.

\smallskip

\noindent \textbf{AMS Classification:} 60G60, 60D05, 60B10, 58J50, 35P20

\end{abstract}

\section{Introduction and main results}

\subsection{Arithmetic random waves}

Let $\Tb:=\R^2/\Z^2$ be the standard
$2$-torus and $\Delta$ the Laplacian on $\Tb$. We are interested
in the (totally discrete) spectrum of $\Delta$ i.e., eigenvalues $E>0$
of the Schr\"{o}dinger equation
\begin{equation}
\label{eq:Schrodinger}
\Delta f + Ef=0.
\end{equation}
Let $$S=\{{{} n \in \Z : n} =  a^2+b^2 \,\,  \mbox{{} for some} \:a,\, b\in\Z\}$$ be the collection of all numbers
expressible as a sum of two squares. Then, the eigenvalues of \eqref{eq:Schrodinger}
(also called {\it energy levels} of the torus) are all numbers of the form $E_{n}=4\pi^{2}n$ with $n\in S$.

In order to describe the Laplace eigenspace corresponding to $E_{n}$, denote by
$\Lambda_n$ the set of {\it frequencies}:
\begin{equation*}
\Lambda_n := \lbrace \lambda =(\lambda_1,\lambda_2)\in \Z^2 : \lambda_1^2 + \lambda_2^2 = n\rbrace\
\end{equation*}
whose cardinality
\begin{equation}
\label{eq:Nn=|Lambda|}
\mathcal N_n := |\Lambda_n|=r_{2}(n)
\end{equation}
equals the number of ways to express $n$ as a sum of two squares.
(Geometrically, $\Lambda_{n}$ is the collection of all standard lattice points
lying on the centred circle with radius $\sqrt{n}$.)
For $\lambda\in \Lambda_{n}$ denote the complex exponential associated to the frequency $\lambda$
\begin{equation*}
e_{\lambda}(x) = \exp(2\pi i \langle \lambda, x \rangle)
\end{equation*}
with $x=(x_{1},x_{2})\in\Tb$.
The collection
\begin{equation*}
\{e_{\lambda}(x)\}_{\lambda\in \Lambda_n}
\end{equation*}
of the complex exponentials corresponding to the frequencies $\lambda\in\Lambda_{n}$,
is an $L^{2}$-orthonormal basis of the eigenspace $\Ec_n$ of $\Delta$ corresponding to the
eigenvalue $E_{n}$. In particular, the dimension of $\Ec_{n}$
is
\begin{equation*}
\dim \Ec_{n}=\mathcal N_n = |\Lambda_n|
\end{equation*}
(cf. \eqref{eq:Nn=|Lambda|}).
The number
$\Nc_{n}$ is subject to large and erratic fluctuations; it grows \cite{La} {\em on average}
as $\sqrt{\log{n}}$, but could be as small as $8$ for (an infinite sequence of) prime numbers
$p\equiv 1\mod{4}$, or as large as a power of $\log{n}$.

Following ~\cite{RudWig} and ~\cite{KKW}, we define
the {\it arithmetic random waves} (also called {\it random Gaussian toral Laplace eigenfunctions})
to be the random fields
\begin{equation}\label{defrf}
T_n(x)=\frac{1}{\sqrt{\mathcal N_n}}\sum_{ \lambda\in \Lambda_n}a_{\lambda}e_\lambda(x), \quad x\in \Tb,
\end{equation}
where the coefficients $a_{\lambda}$ are standard complex-Gaussian random variables verifying the following properties: $a_\lambda$ is stochastically independent of $a_\gamma$ whenever $\gamma \notin \{\lambda, -\lambda\}$, and
$$a_{-\lambda}= \overline{a_{\lambda}}$$ (ensuring that the $T_{n}$ are real-valued).\footnote{From now on, we assume that every random object considered in this paper is defined on a common probability space $(\Omega, \mathcal{F}, \P)$, with $\E$ denoting mathematical expectation with respect to $\P$.}
By the definition \eqref{defrf}, $T_n$ is a stationary
(i.e. the law of $T_{n}$ is invariant under all the translations $$f(\cdot)\mapsto f(x'+\cdot),$$
$x'\in \Tb$), centered
Gaussian random field with covariance function
\begin{equation*}
r_n(x,x') = r_{n}(x-x') := \E[T_n(x) \overline{T_n(x')}] = \frac{1}{\mathcal N_n}
\sum_{\lambda\in \Lambda_n}e_{\lambda}(x-x')=\frac{1}{\mathcal N_n}\sum_{\lambda\in \Lambda_n}\cos\left(2\pi\langle x-x',\lambda \rangle\right),
\end{equation*}
$x,x'\in\Tb$ (by the standard abuse of notation for stationary fields). Note that $r_{n}(0)=1$, i.e. $T_{n}$ has unit variance.

\subsection{Nodal length: mean and variance}

Consider the total {\it nodal length} of the random
eigenfunctions, i.e. the collection $\{\Lc_{n}\}_{n\in S}$ of all random variables with the form
\begin{equation}\label{e:length}
\Lc_n := \text{length}(T_n^{-1}\lbrace 0 \rbrace).
\end{equation}
The expected value of $\Lc_{n}$ was computed in \cite{RudWig} to be
\begin{equation}
\E[\mathcal L_n]= \frac{1}{2\sqrt{2}}\sqrt{E_n},
\end{equation}
consistent with Yau's conjecture ~\cite{Yau,DF}.
The more challenging question of the asymptotic behaviour of the variance $\var(\Lc_{n})$ of $\Lc_{n}$ was addressed in \cite{RudWig}, and fully resolved in \cite{KKW} as follows.

Given $n\in S$, define a probability measure $\mu_{n}$ on the unit circle $\Sc^{1}\subseteq\R^{2}$
supported on angles corresponding to lattice points in $\Lambda_{n}$:
\begin{equation*}
\mu_{n} := \frac{1}{\mathcal N_n} \sum_{\lambda\in \Lambda_n} \delta_{\frac{\lambda}{\sqrt{n}}}.
\end{equation*}
It is known ~\cite{EH} that for a density $1$ sequence of numbers $\{n_{j}\}\subseteq S$ the angles
of lattice points in $\Lambda_{n}$ tend to be equidistributed, in the sense that
\begin{equation}
\label{eq:mun equidist}
\mu_{n_{j}}\Rightarrow \frac{d\phi}{2\pi}
\end{equation}
(where $\Rightarrow$ indicates weak-$*$ convergence of probability measures, and $d\phi$ stands for the Lebesgue measure on
$\Sc^{1}$). However the sequence $\{\mu_{n}\}_{n\in S}$
has other weak-$*$ adherent points ~\cite{Ci,KKW} (called {\it attainable measures}), partially classified in \cite{KW}.

It was proved in ~\cite{KKW} that one has
\begin{equation}
\label{eq:var leading KKW}
\var(\Lc_{n}) =c_n \frac{E_n}{\Nc_{n}^2}(1 + o_{\Nc_{n}\rightarrow\infty}(1)),
\end{equation}
where
\begin{equation}\label{cn}
c_n = \frac{1+\widehat{\mu_n}(4)^2}{512},
\end{equation}
and, for a measure $\mu$ on $\Sc^{1}$,
\begin{equation*}\label{e:smet}
\widehat \mu(k) = \int_{\Sc^{1}} z^{-k}\,d\mu(z), \quad k\in \mathbb{Z},
\end{equation*}
are the Fourier coefficients of $\mu$ on the unit circle.
As $$|\widehat{\mu_{n}}(4)|\le 1$$ by the triangle inequality, the result \paref{eq:var leading KKW}
shows that the true order of magnitude of $\var(\Lc_{n})$ is
$ \frac{E_n}{\Nc_n^2} $: this is of smaller order than what would be a natural guess, namely
$ \frac{E_n}{\Nc_n} $; this situation (customarily called {\it arithmetic Berry's cancellation}, see \cite{KKW}) is similar to the {\it cancellation phenomenon} observed by Berry in a different setting, see \cite{Berry 2002,wig}.

In addition, \eqref{eq:var leading KKW} shows that, in order for $\var(\Lc_{n})$ to exhibit an
asymptotic law (equivalent to $\{c_{n}\}$ in \eqref{cn} being convergent along a subsequence)
we need to pass to a subsequence $\{ n_{j} \}\subseteq S$ such that the limit
$$\lim\limits_{j\rightarrow\infty}| \widehat{\mu}_{n_{j}}(4)|$$ exists. For example,
if $\{ n_{j}\} \subseteq S$ is a subsequence such that $\mu_{n_{j}}\Rightarrow\mu$ for some probability measure $\mu$ on $\Sc^{1}$,
then \eqref{eq:var leading KKW} reads (under the usual extra-assumption $\Nc_{n_{j}}\rightarrow\infty$)
\begin{equation}\label{eq:var leading KKW2}
\var(\Lc_{n_j})\sim c({ \mu}) \frac{E_{n_j}}{\Nc_{n_j}^2}
\end{equation}
with $$c(\mu) = \frac{1+\widehat{\mu}(4)^2}{512},$$
where, here and for the rest of the paper, we write $a_n\sim b_n$ to indicate that the two positive sequences $\{a_n\}$ and $\{b_n\}$ are such that $a_n/b_n \rightarrow 1$, as $n\to\infty$. Here, the set of the possible values for the $4$th Fourier coefficient $\widehat{\mu}(4)$ attains the whole interval $[-1,1]$ (see \cite{KKW,KW}). This implies in particular that the possible values of the asymptotic
constant $c(\mu)$ attain the whole interval
$\left[\frac{1}{512},\frac{1}{256}\right];$ the above is a complete classification of the asymptotic behaviour of $\var(\Lc_{n})$.

\subsection{Statement of the main results: asymptotic distribution of the nodal length}

Our main goal is the study of the fine asymptotic behaviour, as $\mathcal N_{n}\to \infty$, of the distributions of the sequence of normalised random variables
\begin{equation}\label{e:culp}
\widetilde{\Lc}_{n} :=  \frac{\mathcal{L}_{n} - \E[\mathcal{ L}_{n}]}{\sqrt{\Var(\mathcal{L}_{n} )}},
\quad  n\in S,
\end{equation}
(this is equivalent to studying $\widetilde{\Lc}_{n_j}$ along subsequences $\{n_{j}\}_{j\ge 1}\subseteq S$ satisfying $\mathcal N_{n_{j}}\rightarrow\infty$; note that it is possible to choose a full density subsequence in $S$ as above).
Since the variance \eqref{eq:var leading KKW} {diverges to infinity}, it seems
reasonable to expect a central limit result, that is, that the sequence $\{\widetilde{\Lc}_{n}\}$
converges in distribution to a standard Gaussian random variable. Our findings not only contradict this (somewhat naive) prediction,
but also classify all the weak-$\ast$ adherent points of the probability distributions associated with the collection of random variables $\left\{\widetilde{\Lc}_{n} : n\in S\right\}$
(where the adherent points are in the sense of weak-$\ast$ convergence of probability measures). In particular, we will show that such a set of weak-$\ast$ adherent points coincides with the collection of probability distributions associated with a family of linear combinations of two independent squared Gaussian random variables; these linear combinations are parameterized by the adherent points of the sequence $\left\{\left|\widehat{\mu_{n}}(4)\right|\right\}$ of real non-negative numbers
$\le 1$. This will show the remarkable fact that the angular distribution
of $\Lambda_{n}$ (or, more specifically, the $4$th Fourier coefficient of $\mu_{n}$)
does not only prescribe the leading term of the nodal length variance $\var(\mathcal{L}_{n})$, but, in addition, it prescribes the asymptotic distribution of $\widetilde{\Lc}_{n}$.

\vspace{5mm}

To state our results formally, we will need some more notation. For $\eta\in [0,1]$, let $\mathcal{M}_\eta$ be the random variable
\begin{equation}\label{e:r}
\mathcal{M}_\eta := \frac{1}{2\sqrt{1+\eta^2}} (2 - (1+\eta) X_1^2-(1-\eta) X_2^2),
\end{equation}
where $X=(X_{1},X_{2})$ are independent standard Gaussians. Note that
for $\eta_{1}\ne \eta_{2}$ the distributions of $\mathcal{M}_{\eta_1}$ and $ \mathcal{M}_{\eta_2}$
are genuinely different; this follows for example from the observation that the support of the distribution of $\Mc_\eta$ is $$\left ( -\infty,\frac{1}{\sqrt{1+\eta^2}}\right ].$$

Our first main result establishes a limiting law for the nodal length distribution for subsequences
$\{n_{j}\}_{j\ge 1}\subseteq S$ provided that the numerical sequence $$\big\{\big|\widehat{\mu}_{n_j}(4)\big| : j\geq 1 \big\}$$
of non-negative numbers is convergent. As it was mentioned above, for some full density subsequence $\{n_{j}\}_{j\ge 1}\subseteq S$ the corresponding lattice points $\Lambda_{n_{j}}$ are asymptotically equidistributed \eqref{eq:mun equidist},
so that for this subsequence, in particular, $$\widehat{\mu}_{n_j}(4) \rightarrow 0.$$ More generally,
if for some subsequence $\{n_{j}\}_{j\ge 1}\subseteq S$ the angular distribution of the corresponding
lattice points converges to $\mu$,
i.e. $\mu_{n_{j}}\Rightarrow \mu$, where $\mu$ is some probability measure on $\Sc^{1}$, then
$$\widehat{\mu}_{n_j}(4) \rightarrow \widehat{\mu}(4).$$ From now on, we use the symbol $\stackrel{\rm d}{\longrightarrow}$ to denote convergence in distribution of random variables; similarly, we will write $X \stackrel{\rm d}{=} Y$ to indicate that the random variables $X$ and $Y$ have the same distribution. Observe that a sequence of random variables converges in distribution if and only if
the corresponding sequence of probability laws is weak-$*$ convergent. We shall however use the sentence ``convergence in distribution" (resp. ``weak-$*$ convergence")  for random variables (resp.  for probability measures).

\begin{theorem}
\label{thm:lim dist sep}

Let $\{ n_{j} \}\subseteq S$ be a subsequence of $S$ satisfying $\mathcal N_{n_{j}}\rightarrow\infty$, such that the sequence
$\big\{\big|\widehat{\mu_{n_j}}(4)\big| : j\geq 1 \big\}$ of non-negative numbers converges, that is:
$$|\widehat{\mu}_{n_j}(4)\big|\rightarrow \eta,$$ for some $\eta \in [0,1]$.
Then
\begin{equation}
\label{eq:Lctild->Meta}
\widetilde{\Lc}_{n_j}  \stackrel{\rm d}{\longrightarrow} \mathcal{M}_\eta,
\end{equation}
where $\mathcal{M}_\eta$ was defined in \eqref{e:r}.
\end{theorem}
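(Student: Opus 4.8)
The plan is to base the whole argument on the Wiener--It\^o chaos expansion of the nodal length, as announced in the abstract. First I would represent $\Lc_{n}$ by a Kac--Rice type formula, writing it (in the sense of a suitable approximating sequence of smooth functionals) as
\[
\Lc_{n} \;=\; \int_{\Tb} \delta_{0}(T_{n}(x))\,\|\nabla T_{n}(x)\|\,dx ,
\]
and then expanding the integrand into Hermite polynomials in the Gaussian vector $(T_{n}(x),\nabla T_{n}(x))$, whose law does not depend on $x$ by stationarity. Projecting onto Wiener chaoses yields $\Lc_{n}=\sum_{q\ge 0}\Lc_{n}[q]$, where $\Lc_{n}[0]=\E[\Lc_{n}]$ and, by the symmetry $T_{n}\mapsto -T_{n}$ (which fixes $\|\nabla T_{n}\|$), every odd-order projection vanishes.

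The first substantive step is the chaos-level Berry cancellation: I would compute $\Lc_{n}[2]$ explicitly and show that it vanishes identically. After integrating over $\Tb$ and using the symmetries of $\Lambda_{n}$ (invariance under $\lambda\mapsto-\lambda$ and coordinate exchange, which force the covariance matrix of $\nabla T_{n}$ to be a scalar multiple of the identity), this reduces to checking that the relevant Hermite coefficients combine to zero. With the second chaos gone, the leading fluctuations come from the fourth chaos. I would next compute $\var(\Lc_{n}[4])$ and match it against the KKW asymptotics \eqref{eq:var leading KKW}, thereby establishing $\var(\Lc_{n}[4])/\var(\Lc_{n})\to 1$, equivalently $\var\big(\sum_{q\ge 6}\Lc_{n}[q]\big)=o(\var(\Lc_{n}))$.

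Granting the dominance of the fourth chaos, I would identify the limit as follows. Rewrite $\Lc_{n}[4]$, up to a negligible remainder, as a fixed quadratic expression in a finite family of suitably normalized second-chaos variables $\tfrac{1}{\sqrt{\Nc_{n}}}\sum_{\lambda}(|a_{\lambda}|^{2}-1)$ weighted by $1,\cos 2\theta_{\lambda},\sin 2\theta_{\lambda}$, where $\lambda/\sqrt{n}=e^{i\theta_{\lambda}}$; here the cross terms between distinct frequencies drop out upon integrating over $\Tb$, leaving only diagonal contributions. By a (multivariate) central limit theorem for second-chaos vectors these variables converge jointly to a centred Gaussian vector whose covariance is governed by the angular averages of $\cos 4\theta_{\lambda},\sin 4\theta_{\lambda}$, that is, by $\widehat{\mu}_{n}(4)$, and hence depends in the limit only on $\eta=\lim\munj$. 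Taking the fourth-chaos part of the resulting squares sends $\Lc_{n}[4]/\sqrt{\var(\Lc_{n}[4])}$ to a diagonalizable quadratic form in two independent Gaussians $X_{1},X_{2}$; the eigenvalues $\tfrac12(1\pm\eta)$ of the associated $2\times 2$ matrix produce the weights $1+\eta$ and $1-\eta$, giving exactly $\Mc_{\eta}$ as in \eqref{e:r}. Finally, writing $\lenj=\Lc_{n_{j}}[4]/\sqrt{\var(\Lc_{n_{j}})}+o_{\P}(1)$ and invoking Slutsky's theorem together with $\var(\Lc_{n_{j}}[4])/\var(\Lc_{n_{j}})\to 1$ yields \eqref{eq:Lctild->Meta}.

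The main obstacle I anticipate lies in the analytic control rather than in the algebra, and it is twofold. The delicate point is justifying the dominance estimate $\var\big(\sum_{q\ge 6}\Lc_{n}[q]\big)=o(\var(\Lc_{n}))$: because the Kac--Rice integrand is singular, one must first regularize $\delta_{0}$ and $\|\cdot\|$, control the $L^{2}(\P)$-error of these approximations, and then bound the higher-order chaos projections against the already small, post-cancellation size of $\var(\Lc_{n})$. Equally delicate is showing that the remainder separating $\Lc_{n}[4]$ from its finite-dimensional quadratic surrogate is negligible, so that the central limit theorem applies to a genuinely finite-dimensional object; this is precisely where the smallness of $\Nc_{n}^{-1}$ and the fine cancellations in the fourth-chaos kernel must be exploited.
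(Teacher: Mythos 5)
Your proposal is correct and follows essentially the same route as the paper: chaos expansion with vanishing odd and second-order projections, identification of the fourth-chaos limit via a CLT for the second-order spectral sums weighted by $1,\cos 2\theta_\lambda,\sin 2\theta_\lambda$ (equivalently $n,\lambda_1^2,\lambda_2^2,\lambda_1\lambda_2$, whose covariances involve $\widehat{\mu}_n(4)$), and dominance of the fourth chaos obtained by matching $\var(\Lc_n[4])$ against the known total variance asymptotics of \cite{KKW} and invoking orthogonality of the chaoses. Note that this matching step already disposes of all the higher-order projections at once, so the direct estimate $\var\big(\sum_{q\ge 6}\Lc_n[q]\big)=o(\var(\Lc_n))$ that you single out as the main analytic obstacle never has to be proved separately --- the paper explicitly adopts the KKW-matching shortcut for precisely this reason.
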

Since ~\cite{KKW,KW} showed that the set of adherent points of $\{\widehat{\mu}_{n}(4)\}_{n\in S}$ is all of $[-1,1]$, the result above clearly implies that $\widetilde{\Lc}_{n}$ does not converge in distribution for
$\mathcal N_{n}\rightarrow\infty$; in particular, if the sequence $\{|\widehat{\mu}_{n_j}(4)|\}$ does not converge, then the set of probability distributions associated with the random variables $\{ \widetilde{\Lc}_{n_j}\}$ has at least two different adherent points in the topology of weak-$\ast$ convergence.
It would be desirable to formulate a uniform asymptotic result a la \eqref{eq:Lctild->Meta}
with no separation of the full sequence $S$ into subsequences according to the angular distribution of $\Lambda_{n}$ (still as $\mathcal N_{n}\rightarrow \infty$). This has two subtleties though.


First, since there is no convergence in distribution,
we need to couple the random variables on the same probability space and work with some
metric on the space of probability
measures; we
choose to work with the $L^{p}$-metrics, $p\in (0,2)$.
Second, as, given a number $n\in S$, there is no limiting value $\eta$ of
$\widehat{\mu}_{n}(4)$, for each $n\in S$ the candidate $\mathcal{M}_\eta$ for the limiting random variable
will bear $$\eta=\eta_{n} = |\widehat{\mu}_{n}(4)|$$ rather than its limiting value. The following result is the desired refinement of Theorem \ref{thm:lim dist sep}. Its proof (omitted) boils down to a standard adaptation of the proof of \cite[Theorem 11.7.1]{D}, which is in turn an extension of the well-known {\it Skorohod representation Theorem} (see \cite[Theorem 11.7.2]{D}) to the framework of double sequences of probability measures.

\begin{theorem}
\label{c:coupling}
On some auxiliary probability space $(A, \mathscr{A},\widetilde \P)$ for every $n\in S$ there exists a coupling of the random variables $\widetilde{\Lc}_{n}$ and $\Mc_{|\widehat{\mu}_{n}(4)|}$
such that, as $\mathcal N_{n}\rightarrow\infty$,
\begin{equation}\label{e:bara1}
\E_{\widetilde \P} \left[\left|\widetilde{\Lc}_{n}-  \Mc_{|\widehat{\mu}_{n}(4)|}\right|^p\right]\rightarrow 0,
\end{equation}
for every $p \in (0,2)$, and
\begin{equation}\label{e:bara2}
\widetilde{\Lc}_{n}-  \Mc_{|\widehat{\mu}_{n}(4)|} \to 0, \quad \mbox{a.s.}-\widetilde \P.
\end{equation}
\end{theorem}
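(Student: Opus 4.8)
The plan is to deduce the statement from Theorem~\ref{thm:lim dist sep} together with a double-sequence version of the Skorohod representation theorem. Write $\nu_n := \Loi(\len)$ and $\rho_n := \Loi(\men)$ for the laws on $\R$ of the two families of random variables, and set $\eta_n := \mun \in [0,1]$. The heart of the matter is to show that these two sequences of laws are \emph{asymptotically equivalent}, in the sense that $d(\nu_n,\rho_n) \to 0$ as $\Nc_n \to \infty$, where $d$ is any fixed metric metrizing weak-$*$ convergence of probability measures on $\R$ (for instance the L\'evy or the bounded-Lipschitz metric). Granting this, the construction of the coupling and the almost sure statement \eqref{e:bara2} follow from a standard adaptation of \cite[Theorem 11.7.1]{D} to double sequences, realized concretely on $(A,\mathscr{A},\widetilde\P) = ([0,1],\mathcal{B},\mathrm{Leb})$ through the comonotone (quantile) coupling $\len := F_{\nu_n}^{-1}(U)$, $\men := F_{\rho_n}^{-1}(U)$ driven by a single uniform variable $U$.

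First I would establish the asymptotic equivalence by a compactness argument. Suppose, for contradiction, that $d(\nu_n,\rho_n)\not\to 0$; then there exist $\delta>0$ and a subsequence $\{n_j\}$ with $\Nc_{n_j}\to\infty$ and $d(\nu_{n_j},\rho_{n_j})\ge\delta$. Since $\eta_{n_j}\in[0,1]$, compactness of $[0,1]$ lets us pass to a further subsequence along which $\munj\to\eta$ for some $\eta\in[0,1]$. Theorem~\ref{thm:lim dist sep} then gives $\nu_{n_j}\Rightarrow\Loi(\Mc_\eta)$, while the explicit formula \eqref{e:r} shows that for fixed Gaussians $X_1,X_2$ the map $\eta\mapsto\Mc_\eta$ is continuous, so that $\menj\to\Mc_\eta$ almost surely and hence $\rho_{n_j}\Rightarrow\Loi(\Mc_\eta)$ as well. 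Consequently $d(\nu_{n_j},\rho_{n_j})\to 0$, contradicting $d(\nu_{n_j},\rho_{n_j})\ge\delta$. This proves $d(\nu_n,\rho_n)\to 0$ along the full sequence.

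The $L^p$ statement \eqref{e:bara1} for $p\in(0,2)$ would then follow from \eqref{e:bara2} by uniform integrability. By construction $\len$ has mean zero and unit variance, while a direct computation from \eqref{e:r} gives $\E[\Mc_\eta]=0$ and $\Var(\Mc_\eta)=1$ for every $\eta\in[0,1]$; hence the differences $\len-\men$ have second moments bounded uniformly in $n$ (by $4$, say, under the coupling). Since $|z|^p=o(|z|^2)$ as $|z|\to\infty$ for $p<2$, the de la Vall\'ee-Poussin criterion makes the family $\{|\len-\men|^p\}_n$ uniformly integrable, and combined with the almost sure convergence $\len-\men\to 0$ this yields $\E_{\widetilde\P}[|\len-\men|^p]\to 0$. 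It is precisely this second-moment bound that restricts the exponent to $p<2$.

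The main obstacle is the passage from the asymptotic equivalence $d(\nu_n,\rho_n)\to 0$ to a single coupling on a common space driving $\len-\men$ to zero almost surely along the full sequence $S$: neither $\{\nu_n\}$ nor $\{\rho_n\}$ converges, so the classical Skorohod theorem does not apply directly. The same subsequence-and-compactness scheme used above does, however, transfer to the quantile coupling: for each fixed $u\in(0,1)$ one shows $F_{\nu_n}^{-1}(u)-F_{\rho_n}^{-1}(u)\to 0$ by extracting a subsequence along which $\munj\to\eta$ and invoking the continuity of the limiting quantile function $F_\eta^{-1}$ (equivalently, the continuity and strict monotonicity of $t\mapsto\P(\Mc_\eta\le t)$ on the interior of its support, which follows from $\Mc_\eta$ being an absolutely continuous weighted sum of chi-squared variables). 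Making this bookkeeping uniform over all of $S$ is exactly the ``standard adaptation'' of \cite[Theorem 11.7.1]{D} alluded to in the statement.
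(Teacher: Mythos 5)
Your argument is correct and is exactly the route the paper has in mind: the paper omits this proof, describing it only as a standard adaptation of Dudley's Theorem 11.7.1 (Skorohod representation) to double sequences of laws, and your quantile-coupling construction on $([0,1],\mathrm{Leb})$ — with the subsequence/compactness reduction to Theorem \ref{thm:lim dist sep}, continuity of $\eta\mapsto\Mc_\eta$, and the uniform second-moment bound ($\Var(\widetilde{\Lc}_n)=\Var(\Mc_\eta)=1$) giving the $L^p$ statement for $p<2$ — is precisely that adaptation, carried out correctly.
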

Relation \eqref{e:bara2} is equivalent to saying that, for every sequence $\{n_j\}\subseteq S$ such that $\Nc_{n_j}\to \infty$, $\widetilde \P\big(\widetilde{\Lc}_{n_j}-  \Mc_{|\widehat{\mu}_{n_j}(4)|} \to 0\big) =1$.  The fact that Theorem \ref{c:coupling} is actually a strenghtening of Theorem \ref{thm:lim dist sep} follows from the observation that, under the most natural coupling of the family of variables $\{\Mc_{\eta}\}_{\eta\in [0,1]}$ we have
\begin{equation*}
\E\left[\left| \Mc_{\eta_{1}} - \Mc_{\eta_{2}}   \right|\right] \le c |\eta_{1}-\eta_{2} |,
\end{equation*}
for all $\eta_{1},\eta_{2}\in [0,1]$ (with $c>0$ an absolute constant).
In fact, by the triangle inequality and an immediate computation, Theorem \ref{c:coupling} implies the stronger $L^{p}$-convergence, $p \in (0,2)$, to suitably coupled $\Mc_{\eta}$ in \eqref{eq:Lctild->Meta}.

\subsection{On the proofs of the main results}\label{on the proofs}

In Proposition \ref{teoexp} we compute the \emph{Wiener-It\^o chaos expansion} for
the nodal length $\mathcal L_n$ (\ref{e:length}), i.e.
a series converging in $L^2(\P)$ of the form
\begin{equation}
\label{eq:len chaos exp}
\mathcal L_n = \sum_{q=0}^{\infty} \text{proj}(\mathcal L_n | C_q)= \sum_{q=0}^{\infty} \mathcal L_n[q].
\end{equation}
Here $C_q$, $q=0,1,\dots$ are the
so-called {\it Wiener chaoses} (see \S \ref{ss:berryintro}), namely the orthogonal components of the $L^2$-space of those random variables that are functionals of some Gaussian white noise on $\mathbb T$ --
while $\mathcal L_n[q]:=\text{proj}(\mathcal L_n | C_q)$ denotes the orthogonal projection of $\mathcal L_n$ onto
the $q$-th chaos.

The decomposition \eqref{eq:len chaos exp} is of independent interest, and entails in particular the vanishing of
all the odd-order chaotic components and the term of order two, i.e.
$\mathcal L_n[q]=0$ if
$q=2m+1, m=0,1,\dots$ or $q=2$.
The precise analysis of the asymptotic behavior of the fourth-order projection in
Proposition \ref{prop:main result on proj4} will allow us
to show that its variance 
is asymptotic to the total variance of
the nodal length (see Proposition \ref{eq:Lproj4 replace L});
since the different components are orthogonal by construction,
this will imply that all the projections other than the one on the fourth chaos are negligible.
We notice that it is relatively easy to show that the contribution to the nodal length
variance of each of the chaotic projections of order $q\neq 4$ is negligible.
It is in principle also possible to directly bound the total contribution to the variance of the sum of all these projections, thus establishing relation \eqref{eq:var leading KKW} independently. However, this task seems to be technically demanding, and would make our argument significantly longer. Since the asymptotic result \eqref{eq:var leading KKW} is already available from ~\cite{KKW}, we do not pursue such a strategy in the present manuscript.

As a consequence, to study the asymptotic behavior of $\mathcal L_n$ it will be sufficient to focus
on the above-mentioned fourth-order component; Proposition \ref{prop:main result on proj4}
shows that 
along subsequences $\lbrace n_j \rbrace$ satisfying the same hypothesis
as in Theorem \ref{thm:lim dist sep},
we have
$$
\frac{\mathcal L_{n_j}[4]}{\sqrt{\Var(\mathcal L_{n_j}[4])}}
\stackrel{\rm d}{\longrightarrow} \mathcal{M}_\eta,
$$
where $\mathcal M_\eta$ is as in (\ref{e:r}).

We are then able to prove Theorem \ref{thm:lim dist sep} thanks to
Proposition \ref{eq:Lproj4 replace L} and Proposition \ref{prop:main result on proj4}.

\subsection{Plan of the paper}

In \S \ref{ss:berryintro} we recall Wiener-It\^o chaotic expansions,
which we then exploit throughout the whole paper to prove the main results,
given in \S \ref{secProof}; \S \ref{expan} is devoted to the proof of the chaotic expansion for the nodal length (Proposition \ref{teoexp}), whereas in \S \ref{dimProp}
we prove Proposition \ref{prop:main result on proj4} and Proposition
\ref{eq:Lproj4 replace L}. Finally, in \S \ref{lemmaFor} we collect the technical proofs of auxiliary lemmas for the results given in \S \ref{dimProp}.

\subsection{Acknowledgements}

The research leading to these results has received funding from the
European Research Council under the European Union's Seventh
Framework Programme (FP7/2007-2013) / ERC grant agreements n$^{\text{o}}$ 277742
\emph{Pascal} (Domenico Marinucci and Maurizia Rossi) and
n$^{\text{o}}$ 335141 \emph{Nodal} (Igor Wigman), and
by the grant F1R-MTH-PUL-15STAR (STARS)
at Luxembourg University (Giovanni Peccati and Maurizia Rossi).
We are grateful to Zeev Rudnick and Peter Sarnak for many insightful conversations and to two anonymous referees for valuable suggestions and insightful remarks.

\section{Proofs of the main results}
The proofs of our results rely on a pervasive use of \emph{Wiener-It\^o chaotic expansions} for non-linear functionals of Gaussian fields; this notion is presented below in a form that is adapted to the random functions considered in the present paper (see e.g. \cite{N-P, P-T}
for an exhaustive discussion).
\subsection{Wiener Chaos}
\label{ss:berryintro}

 Denote by $\{H_k\}_{k\ge 0}$ the usual Hermite polynomials on $\mathbb{R}$. These are defined recursively as follows: $H_0 \equiv 1$, and, for $k\geq 1$,

 $$H_{k}(t) = tH_{k-1}(t) - H'_{k-1}(t).$$
Recall that $\mathbb{H} := \{[k!]^{-1/2} H_k : k\ge 0\}$ constitutes a complete orthonormal system in $$L^2(\mathbb{R}, \mathscr{B}(\mathbb{R}), \gamma(t)dt) :=L^2(\gamma),$$ where $\gamma(t) = (2\pi)^{-1/2}e^{-t^2/2}$ is the standard Gaussian density on the real line.

\medskip

The arithmetic random waves \eqref{defrf} considered in this work are a by-product of a family of complex-valued Gaussian random variables $\{a_\lambda : \lambda\in \mathbb{Z}^2\}$, defined on some probability space $(\Omega, \mathscr{F}, \mathbb{P})$ and satisfying the following properties: {\bf (a)} every $a_\lambda$ has the form $x_\lambda+iy_\lambda$, where $x_\lambda$ and $y_\lambda$ are two independent real-valued Gaussian random variables with mean zero and variance $1/2$; {\bf (b)} $a_\lambda$ and $a_\tau$ are stochastically independent whenever $\lambda \notin\{ \tau, -\tau\}$, and {\bf (c)} $a_\lambda = \overline{a_{-\lambda}}$. Define the space ${\bf A}$ to be the closure in $L^2(\mathbb{P})$ of all real finite linear combinations of random variables $\xi$ of the form $$\xi = z \, a_\lambda + \overline{z} \, a_{-\lambda},$$ where $\lambda\in \mathbb{Z}^2$ and $z\in \mathbb{C}$. The space ${\bf A}$ is a real centered Gaussian Hilbert subspace
of $L^2(\mathbb{P})$.

\begin{defn}\label{d:chaos}{\rm For an integer $q\ge 0$ the $q$-th {\it Wiener chaos} associated with ${\bf A}$, written $C_q$, is the closure in $L^2(\mathbb{P})$ of all real finite linear combinations of random variables of the form
$$
H_{p_1}(\xi_1)\cdot H_{p_2}(\xi_2)\cdots H_{p_k}(\xi_k)
$$
for $k\ge 1$, where the integers $p_1,...,p_k \geq 0$ satisfy $p_1+\cdots+p_k = q$, and $(\xi_1,...,\xi_k)$ is a standard real Gaussian vector extracted
from ${\bf A}$ (note that, in particular, $C_0 = \mathbb{R}$).}
\end{defn}

Using the orthonormality and completeness of $\mathbb{H}$ in $L^2(\gamma)$, together with a standard monotone class argument (see e.g. \cite[Theorem 2.2.4]{N-P}), it is not difficult to show that $C_q \,\bot\, C_m$ (where the orthogonality holds in the sense of $L^2(\mathbb{P})$) for every $q\neq m$, and moreover
\begin{equation*}
L^2(\Omega, \sigma({\bf A}), \mathbb{P}) = \bigoplus_{q=0}^\infty C_q;
\end{equation*}
that is, every real-valued functional $F$ of ${\bf A}$ can be (uniquely) represented in the form
\begin{equation}\label{e:chaos2}
F = \sum_{q=0}^\infty {\rm proj}(F \, | \, C_q)=\sum_{q=0}^\infty F[q],
\end{equation}
where as before $F[q]:={\rm proj}(F \, | \, C_q)$ stands for the the projection onto $C_q$, and the series converges in $L^2(\mathbb{P})$. Plainly, $F[0]={\rm proj}(F \, | \, C_0) = \E [F]$.

\smallskip

A straightforward differentiation of the definition \eqref{defrf} of $T_n$ yields, for $j=1,2$
\begin{equation}\label{e:partial}
\partial_j T_n(x) = \frac{2\pi i}{\sqrt{\mathcal{N}_n} }\sum_{(\lambda_1,\lambda_2)\in \Lambda_n} \lambda_j a_\lambda e_\lambda(x),
\end{equation}
(here $\partial_j = \frac{\partial}{\partial x_j}$).
Hence the random fields $T_{n},\partial_{1} T_n,\partial_{2} T_n$ viewed as collections of
Gaussian random variables
indexed by $x\in\Tb$ are all lying in ${\bf A}$, i.e. for every $x\in\Tb$ we have
\begin{equation*}
T_{n}(x),\, \partial_{1}T_{n}(x), \, \partial_{2}T_{n}(x) \in \bf A.
\end{equation*}

\subsection{Proof of Theorem \ref{thm:lim dist sep}}\label{secProof}

We apply the Wiener chaos decomposition \eqref{e:chaos2} on the nodal length
\begin{equation}\label{eq:len chaos decomp}
\mathcal{L}_{n} = \sum_{q=0}^\infty \mathcal{L}_{n}[q],
\end{equation}
in $L^2(\mathbb{P})$.
The following proposition is a reformulation
of Theorem \ref{thm:lim dist sep} with the projection $\mathcal{L}_{n}[4]$ of the nodal length
$\mathcal{L}_{n}$ onto the $4$th order chaos replacing $\mathcal{L}_{n}$ and
it will be proven in \S \ref{proof4}.

\begin{proposition}
\label{prop:main result on proj4}

Let $\{ n_{j} \}\subseteq S$ be a subsequence of $S$ satisfying $\mathcal N_{n_{j}}\rightarrow\infty$, such that the sequence
$\big\{\big|\widehat{\mu_{n_j}}(4)\big| : j\geq 1 \big\}$ of non-negative numbers converges, that is,
$$|\widehat{\mu_{n_j}}(4)\big|\rightarrow \eta,$$ for some $\eta \in [0,1]$.
Then, the corresponding sequences of random variables converges in distribution to $\mathcal{M}_\eta$ as defined in \eqref{e:r}, that is,
\begin{equation}
\label{eq:Lctild->Meta2}
\frac{\mathcal{L}_{n_j}[4] }{\sqrt{\var(\mathcal{L}_{n_j}[4] ) }} \stackrel{\rm d}{\longrightarrow} \mathcal{M}_\eta.
\end{equation}
Moreover,
\begin{equation}\label{perlavar}
\Var\left (  \mathcal{L}_{n_j}[4]    \right )\sim \frac{1+\eta^2}{512}\frac{E_{n_j}}{\mathcal N_{n_j}^2}.
\end{equation}
\end{proposition}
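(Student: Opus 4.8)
The plan is to prove Proposition \ref{prop:main result on proj4} by first obtaining an explicit formula for the fourth chaotic projection $\mathcal{L}_n[4]$ in terms of the Gaussian field $T_n$ and its derivatives, and then carrying out a careful asymptotic analysis of this formula as $\mathcal{N}_n \to \infty$. The starting point is the integral-geometric representation of the nodal length as $\mathcal{L}_n = \int_{\Tb} \delta_0(T_n(x)) |\nabla T_n(x)| \, dx$, understood in a regularized sense (via the coarea formula, or a Kac--Rice type approximation of the Dirac delta). To extract the fourth-order chaos, I would expand both factors into Hermite polynomials: the factor $\delta_0(T_n)$ contributes Hermite polynomials in the value $T_n(x)$, while the factor $|\nabla T_n(x)|$, being a function of the two Gaussian derivatives $\partial_1 T_n, \partial_2 T_n$, contributes its own Hermite expansion. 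Collecting all products whose total Hermite degree equals $4$ yields $\mathcal{L}_n[4]$ as an explicit combination of fourth-order Hermite functionals integrated over the torus.

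\textbf{Key steps in order.} First I would set up the normalized gradient and use stationarity to fix the second-moment normalizations: $T_n(x)$ has unit variance, and each $\partial_j T_n(x)$ has variance $2\pi^2 E_n / \mathcal{N}_n \cdot (\text{something})$ controlled by the measure $\mu_n$; in fact the covariance structure of $(\partial_1 T_n, \partial_2 T_n)$ at a single point is governed precisely by the Fourier coefficients $\widehat{\mu_n}(2)$ and $\widehat{\mu_n}(4)$ of the angular measure. Second, using the assumed separate expansion (Proposition \ref{teoexp}) that kills the odd chaoses and $C_2$, I would write $\mathcal{L}_n[4]$ as a double integral of Hermite-degree-$4$ terms against the covariance kernels $r_n$ and their derivatives. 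Third, to obtain the variance \eqref{perlavar}, I would compute $\Var(\mathcal{L}_n[4])$ as an integral over $\Tb \times \Tb$ of products of $r_n(x-y)$ and its first and second derivatives, reduce this by stationarity to a single integral over $z = x-y$, and show that the dominant contribution comes from a specific combination of the moments $\int r_n^a (\partial r_n)^b \cdots$, each of which is asymptotically evaluated using the arithmetic input from \cite{KKW}: namely that $\int_{\Tb} r_n(z)^4\,dz$ and the analogous derivative moments scale like $1/\mathcal{N}_n^2$ with a coefficient built out of $\widehat{\mu_n}(4)^2$. This produces the constant $(1+\eta^2)/512$. Fourth, for the distributional convergence \eqref{eq:Lctild->Meta2}, I would identify the limit: after normalization the fourth chaos should collapse onto a finite-dimensional functional, and I would show that $\mathcal{L}_n[4]/\sqrt{\Var(\mathcal{L}_n[4])}$ has the same limit as an explicit degree-$2$ polynomial in two independent Gaussians. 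Concretely, the leading fourth-chaos term reduces to a quadratic form in the real and imaginary parts of the (suitably normalized) spectral data, and one shows that the relevant covariance matrix converges, as $j\to\infty$, to a matrix depending only on $\eta = \lim |\widehat{\mu_{n_j}}(4)|$; diagonalizing gives exactly $\mathcal{M}_\eta = \frac{1}{2\sqrt{1+\eta^2}}(2 - (1+\eta)X_1^2 - (1-\eta)X_2^2)$ from \eqref{e:r}.

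\textbf{The main obstacle} I expect is the fourth step: proving that the normalized fourth-order chaos genuinely converges in distribution to the quadratic functional $\mathcal{M}_\eta$, rather than merely matching variances. Because the limit is \emph{not} Gaussian, one cannot simply invoke a fourth-moment/Nualart--Peccati central limit theorem; instead I would need to prove convergence of the full sequence of joint moments (or equivalently characteristic functions) of the two scalar quantities that the fourth chaos concentrates on. This requires showing that all the ``off-diagonal'' and higher-correlation integrals involving products of $r_n$ and its derivatives vanish in the limit, so that the fourth chaos asymptotically depends on only a two-dimensional Gaussian vector whose covariance is controlled by $\widehat{\mu_{n_j}}(4)$. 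The delicate part is the precise asymptotic evaluation of these oscillatory lattice-sum integrals over $\Tb$, where one must separate the genuinely contributing diagonal frequency configurations from the negligible ones, uniformly as $\mathcal{N}_n\to\infty$; the arithmetic fluctuations of $\mathcal{N}_n = r_2(n)$ and the need to track the exact angular structure through $\widehat{\mu_n}(4)$ (and to confirm that $\widehat{\mu_n}(2)$ and higher coefficients do not contaminate the limit) are what make this the technically demanding core of the argument.
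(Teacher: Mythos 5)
Your overall architecture matches the paper's: expand into Hermite polynomials, extract the degree-$4$ terms, identify the limit as a quadratic form in asymptotically Gaussian spectral data, and diagonalize the limiting covariance to recover $\mathcal{M}_\eta$ from \eqref{e:r}. The diagonalization step you describe at the end is exactly what the paper does. However, your plan for the two hard steps diverges from the paper in ways that matter.

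First, the distributional convergence. You identify this as the main obstacle and propose to prove it by convergence of all joint moments, handling ``oscillatory lattice-sum integrals'' and separating diagonal from off-diagonal frequency configurations \emph{asymptotically}. This misses the decisive structural point: because the integration is over the \emph{full} torus, orthogonality of the characters $e_\lambda$ kills every off-diagonal term \emph{exactly}, reducing each degree-$4$ integral to a sum over the spectral correlation set $S_n(4)=\{\lambda-\lambda'+\lambda''-\lambda'''=0\}$; and the KKW classification of $S_n(4)$ (Lemma \ref{lem:S4 fine struct}: all solutions are diagonal pairings) then shows that $\mathcal{L}_n[4]$ is \emph{identically} equal to a quadratic polynomial in the four second-chaos variables $W_1(n),\dots,W_4(n)$ of \eqref{Wdef}, plus diagonal sums $\sum_\lambda |a_\lambda|^4$ (with various weights) that converge in probability to constants by the law of large numbers (Lemma \ref{p:formula4}). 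Consequently the non-Gaussian limit is obtained by continuous mapping from a \emph{multivariate CLT for $W(n_j)$}, which is elementary: each $W_k(n_j)$ is a weighted sum of the i.i.d.\ centered variables $|a_\lambda|^2-1$ with maximal weight tending to zero, so Lindeberg (or the second-chaos multivariate fourth-moment theorem of Nourdin--Peccati) applies, and the covariance matrix converges to $\Sigma(\eta)$ of \eqref{e:sig} by Lemma \ref{semplice}. There is no need for joint moments of the fourth chaos itself, and your concern about the inapplicability of the fourth-moment theorem to the non-Gaussian limit dissolves once the problem is pushed down to the second chaos. Also note $\widehat{\mu_n}(2)=0$ identically (by invariance of $\Lambda_n$ under rotation by $\pi/2$), so there is nothing to ``confirm'' there.

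Second, the variance \eqref{perlavar}. You propose to compute $\Var(\mathcal{L}_n[4])$ as a double integral of products of $r_n$ and its derivatives and to evaluate the resulting moments $\int_{\Tb} r_n^4$, etc., via the arithmetic estimates of \cite{KKW}. That is the KKW route for the full variance and is considerably heavier than what the paper does: once $\mathcal{L}_{n_j}[4]/v(n_j)$ is known to converge in distribution to $1+Z_1^2-2Z_2^2-2Z_3^2-4Z_4^2$, a one-line Gaussian moment computation gives the limiting variance $1+\eta^2$, and since the normalized sequence lives in a fixed Wiener chaos, hypercontractivity gives uniform integrability of all moments, so second moments converge and \eqref{perlavar} follows. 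As written, your proposal would likely still reach the correct answer, but only after re-deriving a substantial part of \cite{KKW} and carrying out a moment-method argument whose combinatorics you have not actually controlled; the gap is that you have not identified the exact cancellation (character orthogonality plus the structure of $S_n(4)$) that makes the whole argument finite-dimensional.
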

%

The next proposition, whose proof is given in \S \ref{proof4}, entails that the fourth-order chaotic component gives the leading term in the expansion,
i.e. its behaviour asymptotically dominates the nodal length on the torus.

\begin{proposition}
\label{eq:Lproj4 replace L}
For every $\{n_j : j\geq 1\}\subseteq S$ subsequence of $S$ such that
$\lim_{j\to\infty }{\mathcal N}_{n_j}  = \infty$ and the sequence
$\big\{\big|\widehat{\mu_{n_j}}(4)\big| : j\geq 1 \big\}$ of non-negative numbers converges,
\begin{equation}\label{vaar2}
\var\left(\mathcal{L}_{n_j}- \mathcal{L}_{n_j}[4]\right)=o\left( \frac{E_{n_j}}{\Nc_{n_j}^2}\right).
\end{equation}
Equivalently, under the above assumptions we have that
\begin{equation}\label{vaar}
\var\left(\mathcal{L}_{n_j}\right)\sim \var\left(\mathcal{L}_{n_j}[4] \right).
\end{equation}
\end{proposition}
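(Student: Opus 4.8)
The plan is to exploit the orthogonality of the Wiener--It\^o chaotic projections, which turns the claim into an elementary comparison of two variance asymptotics that are already available. First I would record the additivity of variance across chaoses: since the projections $\{\mathcal{L}_{n}[q]\}_{q\ge 1}$ are centered and mutually orthogonal in $L^2(\P)$ (by the orthogonality $C_q\perp C_m$, $q\ne m$, recalled in \S\ref{ss:berryintro}), and since $\mathcal{L}_{n}[0]=\E[\mathcal{L}_{n}]$ is deterministic and drops out of any variance, one has $\var(\mathcal{L}_{n})=\sum_{q\ge 1}\var(\mathcal{L}_{n}[q])$, and consequently the key identity
\[
\var\big(\mathcal{L}_{n}-\mathcal{L}_{n}[4]\big)=\sum_{q\ge 1,\,q\ne 4}\var(\mathcal{L}_{n}[q])=\var(\mathcal{L}_{n})-\var(\mathcal{L}_{n}[4]).
\]
Thus it suffices to show that $\var(\mathcal{L}_{n_j})$ and $\var(\mathcal{L}_{n_j}[4])$ carry the same leading asymptotic $\frac{1+\eta^2}{512}\frac{E_{n_j}}{\mathcal{N}_{n_j}^2}$.

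Second, I would supply the two asymptotics. For the full variance I would invoke \eqref{eq:var leading KKW} from \cite{KKW}: combining $\var(\mathcal{L}_{n_j})=c_{n_j}\frac{E_{n_j}}{\mathcal{N}_{n_j}^2}(1+o(1))$ with the hypothesis $|\widehat{\mu}_{n_j}(4)|\to\eta$, hence $\widehat{\mu}_{n_j}(4)^2\to\eta^2$ and $c_{n_j}\to\frac{1+\eta^2}{512}$, yields $\var(\mathcal{L}_{n_j})\sim\frac{1+\eta^2}{512}\frac{E_{n_j}}{\mathcal{N}_{n_j}^2}$. For the fourth-chaos variance I would simply quote \eqref{perlavar} of Proposition \ref{prop:main result on proj4}, namely $\var(\mathcal{L}_{n_j}[4])\sim\frac{1+\eta^2}{512}\frac{E_{n_j}}{\mathcal{N}_{n_j}^2}$.

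Finally, subtracting the two asymptotics in the key identity gives
\[
\var\big(\mathcal{L}_{n_j}-\mathcal{L}_{n_j}[4]\big)=\Big(\tfrac{1+\eta^2}{512}-\tfrac{1+\eta^2}{512}+o(1)\Big)\frac{E_{n_j}}{\mathcal{N}_{n_j}^2}=o\!\left(\frac{E_{n_j}}{\mathcal{N}_{n_j}^2}\right),
\]
which is exactly \eqref{vaar2}; the equivalent formulation \eqref{vaar} is then immediate, since $\var(\mathcal{L}_{n_j})-\var(\mathcal{L}_{n_j}[4])=o(E_{n_j}/\mathcal{N}_{n_j}^2)=o(\var(\mathcal{L}_{n_j}[4]))$, using $\frac{1+\eta^2}{512}>0$. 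The argument is short precisely because the substance is front-loaded into Proposition \ref{prop:main result on proj4} and the external input \eqref{eq:var leading KKW}; the only point requiring care is that the two leading constants must coincide \emph{exactly} (both equal to $\frac{1+\eta^2}{512}$) so that the cancellation produces a genuine $o(\cdot)$ and not merely an $O(\cdot)$, and this is guaranteed by matching \eqref{perlavar} with the limit of $c_{n_j}$. I do not foresee a serious obstacle here: the delicate work lies upstream, in establishing that the fourth chaos alone carries the full leading order of the variance, which is the content of the cited Proposition rather than of the present one.
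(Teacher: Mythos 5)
Your proposal is correct and follows essentially the same route as the paper: both arguments combine the external variance asymptotics \eqref{eq:var leading KKW} from \cite{KKW} (under the hypothesis $|\widehat{\mu}_{n_j}(4)|\to\eta$) with \eqref{perlavar} of Proposition \ref{prop:main result on proj4} to obtain $\var(\mathcal{L}_{n_j})\sim\var(\mathcal{L}_{n_j}[4])$, and then use the orthogonality of the chaotic projections to convert this into \eqref{vaar2}. The only difference is the order in which the two equivalent statements are derived, which is immaterial.
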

\noindent \begin{proof}[Proof of Theorem \ref{thm:lim dist sep} assuming Proposition \ref{prop:main result on proj4} and Proposition \ref{eq:Lproj4 replace L}] The chaotic expansion \eqref{eq:len chaos decomp} and Proposition \ref{eq:Lproj4 replace L} entail that, as $j\to +\infty$,
$$
\widetilde{\mathcal L}_{n_j} = \widetilde{\mathcal{L}}_{n_j}[4] + o_\P(1),
$$
where $o_\P(1)$ denotes a sequence of random variables converging to zero in probability. Actually, by linearity
we have
\begin{equation}\label{facile}
\widetilde{\mathcal{L}}_{n_j}[4]=\frac{\mathcal{L}_{n_j}[4]}{\sqrt{\Var(\mathcal L_{n_j})}}.
\end{equation}
It hence follows that $\widetilde{\mathcal L}_{n_j}$ and the random variable $\widetilde{\mathcal{L}}_{n_j}[4]$ have the same asymptotic distribution.
Proposition \ref{eq:Lproj4 replace L} together with \paref{eq:Lctild->Meta2} and \paref{facile} allow to conclude the proof, i.e. they immediately imply \paref{eq:Lctild->Meta}.
\end{proof}
\begin{remark}[On the length of $u$-level curves]\label{remark non-nodal}\rm

For $u\in \mathbb R$, let us consider the total length of $u$-level curves for arithmetic random waves, i.e. the sequence of random variables $\lbrace \mathcal L_{n;u}\rbrace_{n\in S}$ defined as
$$ \mathcal L_{n;u}:= \text{length}(T_n^{-1}\lbrace u\rbrace).$$
Of course, $\mathcal L_{n;0}=\mathcal L_n$. The behaviour of $\lbrace \mathcal L_{n;u}\rbrace_{n\in S}$ for $u \neq 0$ exhibits rather different characteristics than for the nodal case.
Indeed, following a slightly modified version of the arguments we develop here (based on Green's formula and the properties of Laplacian eigenfunctions - see i.e. \cite[\S 7.3 and p.134]{rossiphd}), it can be shown that the second order chaotic projection of $\mathcal L_{n;u}$ is given by
\begin{equation}\label{2chaosu}
\mathcal L_{n;u}[2]= \sqrt{\frac{E_n}{2}} \sqrt{\frac{\pi}{8}} \phi(u)u^2 \frac{1}{\mathcal N_n} \sum_{\lambda\in \Lambda_n} (|a_\lambda|^2-1),
\end{equation}
where $\phi$ denotes the standard Gaussian density; note that \paref{2chaosu} confirms $\mathcal L_n[2]=0$ in the nodal case (see \S \ref{on the proofs} and Proposition \ref{teoexp} (a)).

A few comments are in order. Let us first notice that the asymptotic variance of $\mathcal L_{n;u}[2]$ satisfies, as $n\to +\infty$ such that $\mathcal N_n\to +\infty$,
\begin{equation}\label{variancechaos2}
\Var(\mathcal L_{n;u}[2]) \sim \frac{\e^{-u^2}u^4}{8}\frac{E_n}{\mathcal N_n}.
\end{equation}
The variance of the length of $u$-level curves for $u \ne 0$ can be derived exploiting the same computations as in \cite{KKW}, and it is then possible to check the following asymptotic equivalence: for $u\ne 0$, as $\mathcal N_n\to +\infty$,
\begin{equation*}
\Var(\mathcal L_{n;u}) \sim \Var(\mathcal L_{n;u}[2]).
\end{equation*}
Hence, the variance of the length of non-zero level curves has a larger asymptotic order of magnitude than in the nodal case (compare (\ref{variancechaos2}) - (\ref{perlavar})); indeed, the former is dominated by the term corresponding to the second-order chaos, rather than the fourth. At $u=0$, the second-order chaos component of the length of $u$-level curves vanishes exactly, and thus the variance has a lower asymptotic magnitude, consistently with the so-called Berry's cancellation phenomenon  \cite{Berry 2002, wig, wigsurvey}.  Also, because the second-order chaos term  \paref{2chaosu} is proportional to a simple sum of independent, identically distributed, finite-variance centred random variables (discounting repetitions coming from the symmetric structure of $\Lambda_n$), it is trivial to show that it exhibits limiting Gaussian behaviour, in marked contrast with the non-universal and non-Central Limit Theorem emerging in the nodal case.

\end{remark}

\begin{remark}[On local statistics]\label{remark local}\rm

 Our method can be applied, in principle, to prove limit theorems for the nodal length within a proper subregion of the torus too. While the derivation of the $L^2$-expansion into Hermite polynomials does not require any new ideas or techniques, some of the variance computations, and consequently the limiting distribution, will be affected.
Note indeed that the main results of the present paper are obtained by exploiting
 some exact cancellations which are taking place when evaluating integrals of the eigenfunctions on the full torus. We hence leave these generalizations as a topic for future research.
\end{remark}

\section{Chaotic expansion of ${\Lc}_{n}$
}
\label{expan}

In order to prove Proposition \ref{prop:main result on proj4} and Proposition \ref{eq:Lproj4 replace L}
we need to compute the Wiener-It\^o chaotic expansion \eqref{eq:len chaos decomp} of the random variable $\mathcal L_n$; we refer to \cite{KL} for analogous computations involving the length of level curves in the case of two-dimensional Gaussian fields on the Euclidean plane.

\subsection{Statement}

Let us introduce some more notation to properly state the main result of this section. The nodal length (\ref{e:length}) can be formally written as
\begin{equation}\label{formalLength}
\mathcal L_n = \int_{\mathbb T} \delta_0(T_n(\theta))\|\nabla T_n(\theta)\|\,d\theta,
\end{equation}
where $\delta_0$ denotes the Dirac delta function and $\|\cdot \|$ the
Euclidean norm in $\R^2$ (see \cite[Lemma 3.1]{RudWig} and \S \ref{prelRes}).

We shall often use the following easy result from \cite{RudWig}:

\begin{lemma}[\cite{RudWig}, (4.1)]\label{lemmavar}
For $j=1,2$ we have that
$$\displaylines{
\Var[\partial_j T_n(x)] = \frac{4\pi^2}{\mathcal N_n}
\sum_{\lambda\in \Lambda_n} \lambda_j^2 = 4\pi^2 \frac{n}{2},
}$$
where the derivatives $\partial_j T_n(x)$ are as in \eqref{e:partial}.
\end{lemma}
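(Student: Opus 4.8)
The plan is to compute the variance directly from the explicit series representation \eqref{e:partial} of $\partial_j T_n$, exploiting the fact that $\partial_j T_n(x)$ is a real-valued, centred Gaussian variable, so that its variance equals $\E\big[|\partial_j T_n(x)|^2\big]$. Expanding the square produces a double sum over pairs $(\lambda,\lambda')\in\Lambda_n\times\Lambda_n$, weighted by $\lambda_j\lambda'_j\,e_\lambda(x)\overline{e_{\lambda'}(x)}$ together with the prefactor $|2\pi i|^2/\mathcal N_n = 4\pi^2/\mathcal N_n$ and the second moments $\E[a_\lambda\overline{a_{\lambda'}}]$.

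The crux is therefore to determine the covariance structure of the coefficients $\{a_\lambda\}$. Using the properties (a)--(c) recalled in \S\ref{ss:berryintro}, I would check the three cases: for $\lambda'=\lambda$ one has $\E[|a_\lambda|^2]=\E[x_\lambda^2]+\E[y_\lambda^2]=1$; for $\lambda'=-\lambda$ the relation $a_{-\lambda}=\overline{a_\lambda}$ gives $\E[a_\lambda\overline{a_{-\lambda}}]=\E[a_\lambda^2]=\big(\E[x_\lambda^2]-\E[y_\lambda^2]\big)+2i\,\E[x_\lambda y_\lambda]=0$; and for all remaining $\lambda'\notin\{\lambda,-\lambda\}$ stochastic independence and centring force the expectation to vanish. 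Hence only the diagonal $\lambda'=\lambda$ survives, and since $|e_\lambda(x)|=1$ the $x$-dependence disappears altogether, yielding the first claimed identity $\Var[\partial_j T_n(x)]=\frac{4\pi^2}{\mathcal N_n}\sum_{\lambda\in\Lambda_n}\lambda_j^2$.

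For the second identity I would invoke the symmetry of the frequency set. Since the defining relation $\lambda_1^2+\lambda_2^2=n$ is invariant under the coordinate interchange $(\lambda_1,\lambda_2)\mapsto(\lambda_2,\lambda_1)$, the set $\Lambda_n$ is stable under this involution, whence $\sum_{\lambda\in\Lambda_n}\lambda_1^2=\sum_{\lambda\in\Lambda_n}\lambda_2^2$. Averaging the two, $\sum_{\lambda\in\Lambda_n}\lambda_j^2=\tfrac12\sum_{\lambda\in\Lambda_n}(\lambda_1^2+\lambda_2^2)=\tfrac12\,\mathcal N_n\,n$, and substituting this into the first identity gives $4\pi^2\,\tfrac{n}{2}$, as required.

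This argument involves no genuine obstacle: the computation is elementary, and the only two points requiring a little care are the distinction between $\E[|a_\lambda|^2]=1$ and $\E[a_\lambda^2]=0$ (which is precisely what makes the anti-diagonal $\lambda'=-\lambda$ cross-terms drop out, rather than contributing), and the coordinate-swap symmetry of $\Lambda_n$ underpinning the final averaging step.
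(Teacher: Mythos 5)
Your proof is correct and complete: the covariance computation for the coefficients (distinguishing $\E[|a_\lambda|^2]=1$ from $\E[a_\lambda^2]=0$ so that the $\lambda'=-\lambda$ terms vanish) and the coordinate-swap symmetry of $\Lambda_n$ are exactly the two points that matter. The paper itself does not prove this lemma but imports it from \cite{RudWig}, equation (4.1); your direct verification is the standard argument and matches what that reference does.
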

Accordingly, for $x=(x_1, x_2)\in \mathbb T$ and $j=1,2$, we will denote by $\partial_j \widetilde T_n(x)$ the normalized derivative
\begin{equation}\label{e:norma}
\partial_j \widetilde T_n(x) := \frac{1}{2\pi} \sqrt{\frac{2}{n}} \frac{\partial}{\partial x_j}  T_n(x) = \sqrt{\frac{2}{n}}\frac{ i}{\sqrt{\mathcal N_n}}\sum_{ \lambda\in \Lambda_n}\lambda_j\,
a_{\lambda}e_\lambda(x).
\end{equation}
In view of convention \eqref{e:norma}, we formally rewrite \eqref{formalLength} as
\begin{equation*}\label{formalLength2}
\mathcal L_n = \sqrt{\frac{4\pi^2n}{2}}\int_{\mathbb T}
\delta_0(T_n(x)) \sqrt{\partial_1 \widetilde
T_n(x)^2+\partial_2 \widetilde T_n(x)^2}\,dx.
\end{equation*}
We also introduce two collections of coefficients
$\{\alpha_{2n,2m} : n,m\geq 1\}$ and $\{\beta_{2l} : l\geq 0\}$, that are related to the (formal) Hermite expansions of the norm $\| \cdot
\|$ in $\R^2$ and the Dirac mass $ \delta_0(\cdot)$ respectively.
These are given by
\begin{equation}\label{e:beta}
\beta_{2l}:= \frac{1}{\sqrt{2\pi}}H_{2l}(0),
\end{equation}
where $H_{2l}$ denotes the $2l$-th Hermite polynomial, and
\begin{equation}\label{e:alpha}
\alpha_{2n,2m}=\sqrt{\frac{\pi}{2}}\frac{(2n)!(2m)!}{n!
m!}\frac{1}{2^{n+m}} p_{n+m}\left (\frac14 \right),
\end{equation}
where for $N=0, 1, 2, \dots $ and $x\in \R$
\begin{equation*}
\displaylines{ p_{N}(x) :=\sum_{j=0}^{N}(-1)^{j}\cdot(-1)^{N}{N
\choose j}\ \ \frac{(2j+1)!}{(j!)^2} x^j, }
\end{equation*}
$\frac{(2j+1)!}{(j!)^2}$ being the so-called {\it swinging factorial}
restricted to odd indices.

We are now ready to state the main result of this section. It illustrates the cancellations that occur for the components of the chaotic expansion \paref{eq:len chaos decomp} of $\mathcal L_n$ (precisely, odd terms and the second-order one). Consistent to Proposition \ref{eq:Lproj4 replace L},
computing the fourth-order component only is sufficient to establish the asymptotic behavior of the nodal length.
However, we believe that the complete expansion is of clear independent interest; for instance, (a) it gives the basic building block
to extend our results to other random fields on the torus and (b) it sheds some light on the
Berry's cancellation phenomenon \cite{Berry 2002, wig, wigsurvey}, as discussed also earlier in Remark \ref{remark non-nodal}.

More precisely, as far as point (b) is concerned, we note that the nodal length $\mathcal L_\ell$ of Gaussian Laplace eigenfunctions $T_\ell$, $\ell\in \N$,
on the two-dimensional sphere have the same qualitative behavior. Indeed, on one hand in the chaotic expansion of $\mathcal L_\ell$, the odd terms and the second chaotic projection vanish and the fourth-order component exhibits the same asymptotic variance as the full nodal length (see \cite{rossiphd}). On the other hand, it is also shown in \cite{rossiphd} that the second chaotic projection in the Wiener-It\^o expansion of the length of level curves $T_\ell^{-1}(u)$, $u\in \mathbb R$ vanishes if and only if $u=0$. These results explain why the asymptotic variance of the length of level curves is consistent to the natural scaling, except for the nodal case \cite{wig,wigsurvey}. Finally, we note that an analogous cancellation phenomenon occurs for the excursion area and the Euler-Poincar\'e characteristic of excursion sets for spherical eigenfunctions, see \cite{MaWi1,mrossi,cmw2015}.

\begin{proposition}[\bf Chaotic expansion of $\Lc_n$]
\label{teoexp}
\
{\rm (a)} For $q=2$ or $q=2m+1$ odd ($m\ge 1$),
\begin{equation*}\label{point1}
\Lc_n[q] \equiv 0,
\end{equation*}
that is, the corresponding chaotic projection vanishes.

{\rm (b)} For $q\geq 2$
\begin{eqnarray}\label{e:pp}
\nonumber &&\Lc_n[2q]\\
&&= \sqrt{\frac{4\pi^2n}{2}}\sum_{u=0}^{q}\sum_{k=0}^{u}
\frac{\alpha _{2k,2u-2k}\beta _{2q-2u}
}{(2k)!(2u-2k)!(2q-2u)!} \times\\
&&\hspace{4cm}  \times \int_{\mathbb T}\!\! H_{2q-2u}(T_n(x))
H_{2k}(\partial_1 \widetilde T_n(x))H_{2u-2k}(\partial_2
\widetilde T_n(x))\,dx.\notag
\end{eqnarray}
Consolidating the above, the Wiener-It\^o chaotic expansion of $\Lc_n$ is
\begin{eqnarray*}
\Lc_n = \E \Lc_n + \sqrt{\frac{4\pi^2n}{2}}\sum_{q=2}^{+\infty}\sum_{u=0}^{q}\sum_{k=0}^{u}
\frac{\alpha _{2k,2u-2k}\beta _{2q-2u}
}{(2k)!(2u-2k)!(2q-2u)!}\times\\
\nonumber
\times \int_{\mathbb T}H_{2q-2u}(T_n(x))
H_{2k}(\partial_1 \widetilde T_n(x))H_{2u-2k}(\partial_2
\widetilde T_n(x))\,dx,
\end{eqnarray*}
in $L^2(\P)$.
\end{proposition}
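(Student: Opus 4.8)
The plan is to make rigorous sense of the formal expression \eqref{formalLength} through a mollification of the Dirac mass, and then to expand the regularised integrand into Hermite polynomials, exploiting the fact that at each fixed point the value and the normalised gradient of $T_n$ form a standard Gaussian vector. First I would replace $\delta_0$ by the approximate identity $\varphi_\eps:=\frac{1}{2\eps}\mathbf 1_{[-\eps,\eps]}$, setting
\[
\Lc_n^\eps := \sqrt{\tfrac{4\pi^2 n}{2}}\int_{\mathbb T}\varphi_\eps(T_n(x))\sqrt{\partial_1\widetilde T_n(x)^2+\partial_2\widetilde T_n(x)^2}\,dx .
\]
By the coarea formula together with the a.s.\ non-degeneracy of the zero set of $T_n$ (as in \cite{RudWig}), one checks that $\Lc_n^\eps\to\Lc_n$ in $L^2(\P)$; since the orthogonal projections onto the chaoses $C_q$ are $L^2(\P)$-continuous, it is then enough to compute the chaotic components of $\Lc_n^\eps$ and let $\eps\to 0$.

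The structural input is that for every fixed $x\in\mathbb T$ the triple $(T_n(x),\partial_1\widetilde T_n(x),\partial_2\widetilde T_n(x))$ is a standard three-dimensional Gaussian vector: $T_n$ has unit variance, the normalisation \eqref{e:norma} together with Lemma \ref{lemmavar} gives $\Var[\partial_j\widetilde T_n(x)]=1$, stationarity forces $\E[T_n(x)\,\partial_j T_n(x)]=0$, and the two partials are uncorrelated because their covariance is proportional to $\sum_{\lambda\in\Lambda_n}\lambda_1\lambda_2=0$ (the sum vanishing by the symmetry of $\Lambda_n$ under sign flips). I would then expand $\varphi_\eps$ in $\{H_{2l}\}$ and the norm $(s,t)\mapsto\sqrt{s^2+t^2}$ in the product system $\{H_{2k}(s)H_{2j}(t)\}$ --- only even indices appear, by parity --- the coefficients converging as $\eps\to0$ to $\beta_{2l}$ and $\alpha_{2k,2j}$ of \eqref{e:beta}--\eqref{e:alpha}. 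Multiplying the two series and using the independence above, each summand $H_{2l}(T_n(x))H_{2k}(\partial_1\widetilde T_n(x))H_{2j}(\partial_2\widetilde T_n(x))$ lies in $C_{2(l+k+j)}$; collecting the terms with $l+k+j=q$ and reindexing by $u:=k+j$, $l=q-u$ produces formula (b).

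Part (a) then splits in two. All odd-order projections vanish at once, since only even Hermite indices occur in both expansions. For the second chaos I would specialise the formula to $q=1$, which leaves the three terms carrying $\int_{\mathbb T}H_2(T_n)$, $\int_{\mathbb T}H_2(\partial_1\widetilde T_n)$ and $\int_{\mathbb T}H_2(\partial_2\widetilde T_n)$. A direct computation from \eqref{defrf}, \eqref{e:norma} and the orthonormality of $\{e_\lambda\}$ gives $\int_{\mathbb T}H_2(T_n)=\frac{1}{\mathcal N_n}W$ and $\int_{\mathbb T}[H_2(\partial_1\widetilde T_n)+H_2(\partial_2\widetilde T_n)]=\frac{2}{\mathcal N_n}W$, where $W:=\sum_{\lambda\in\Lambda_n}(|a_\lambda|^2-1)$ and the factor $2$ is forced by $\lambda_1^2+\lambda_2^2=n$ (again Lemma \ref{lemmavar}). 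Hence $\Lc_n[2]$ is a multiple of $W$ with prefactor proportional to $\alpha_{0,0}\beta_2+2\alpha_{2,0}\beta_0$, and the claim follows by reading off from \eqref{e:beta}--\eqref{e:alpha} that $\beta_0=-\beta_2=\frac{1}{\sqrt{2\pi}}$, $\alpha_{0,0}=\sqrt{\pi/2}$ and $\alpha_{2,0}=\tfrac12\sqrt{\pi/2}$, so that $\alpha_{0,0}\beta_2+2\alpha_{2,0}\beta_0=0$ exactly.

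I expect the main obstacle to be the rigorous justification of the mollification step: establishing $\Lc_n^\eps\to\Lc_n$ in $L^2(\P)$ and licensing the interchange of the Hermite expansions with both the integration over $\mathbb T$ and the limit $\eps\to0$. This demands control of the integrand near the non-degenerate zeros of $T_n$ and a uniform-integrability argument. Once this analytic groundwork is in place, identifying the coefficients, reorganising the double series into chaoses, and verifying the elementary cancellation in (a) are all routine.
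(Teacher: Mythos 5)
Your proposal follows essentially the same route as the paper: mollify the Dirac mass by $\frac{1}{2\eps}\mathbf 1_{[-\eps,\eps]}$, prove $L^2(\P)$-convergence of $\Lc_n^\eps$ to $\Lc_n$ via the a.s.\ convergence and uniform bound of \cite{RudWig} (the paper makes the dominated-convergence step explicit through the bound $\Lc_n^\eps\le 12\sqrt{E_n}$), expand the indicator and the Euclidean norm in even Hermite polynomials, exploit the independence of $(T_n(x),\partial_1\widetilde T_n(x),\partial_2\widetilde T_n(x))$ to collect terms by total Hermite degree, and obtain part~(a) from parity plus the exact cancellation $\alpha_{0,0}\beta_2+2\alpha_{2,0}\beta_0=0$ after computing $\int_{\mathbb T}H_2(T_n)$ and $\int_{\mathbb T}H_2(\partial_j\widetilde T_n)$ explicitly. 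This matches the paper's proof in both structure and all key computations.
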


\subsection{Proof of Proposition \ref{teoexp}}\label{proofVar}

Let us start with an approximating result in $L^2(\P)$ for the nodal length $\mathcal L_n$.

\subsubsection{Approximating the nodal length}\label{prelRes}


Consider the family of random variables $\lbrace \mathcal L_n^\varepsilon,
\varepsilon > 0\rbrace$ defined as
\begin{equation}\label{napprox}
\mathcal L_n^\varepsilon = \frac{1}{2\varepsilon}\int_{\mathbb T}
1_{[-\varepsilon, \varepsilon]}(T_n(x)) \| \nabla T_n(x)\|dx,
\end{equation}
where $1_{[-\varepsilon, \varepsilon]}$ is the indicator function of
the interval $[-\varepsilon, \varepsilon]$, and $\|\cdot \|$ is the standard Euclidean norm in $\R^2$.

In view of the convention \eqref{e:norma} we rewrite \paref{napprox} as
\begin{equation*}\label{formal2}
\mathcal L_n^\varepsilon = \sqrt{\frac{4\pi^2n}{2}}\frac{1}{2\varepsilon}\int_{\mathbb T}
1_{[-\varepsilon, \varepsilon]}(T_n(x)) \sqrt{\partial_1 \widetilde
T_n(x)^2+\partial_2 \widetilde T_n(x)^2}\,dx.
\end{equation*}
In \cite[Lemma 3.1]{RudWig} it was shown that, a.s.
\begin{equation}\label{as-conv}
\mathcal L_n = \lim_{\varepsilon \to 0} \mathcal L^\varepsilon_n,
\end{equation}
(a rigorous manifistation of \eqref{formalLength}), and moreover, by
\cite[Lemma 3.2]{RudWig}, $\mathcal L_n^\varepsilon$ is uniformly bounded, that is:
\begin{equation}
\label{eq:leneps unif bounded}
\Lc_n^{\varepsilon} \leq 12 \sqrt{E_n}.
\end{equation}
Applying the Dominated Convergence Theorem to \eqref{as-conv} while bearing in mind
the uniform bound \eqref{eq:leneps unif bounded} implies that the convergence in \eqref{as-conv} is
in $L^2(\P)$, i.e. the following result:

\begin{lemma}\label{approx}
For every $n\in S$, we have
$$\lim_{\varepsilon \to 0} \E[|\Lc_n^{\varepsilon}-\mathcal L_n|^2]=0.$$
\end{lemma}

\subsubsection{Proof of Proposition \ref{teoexp}: technical computations}\label{proofVar2}

In view of Lemma \ref{approx}, we first compute the chaotic expansion
of $\mathcal L_n^\varepsilon$ and then deduce Proposition \ref{teoexp} by letting $\varepsilon\to 0$. Let us start by expanding the function $\frac{1}{2\varepsilon}{1}_{[-\varepsilon,\varepsilon]}(\cdot)$
into Hermite polynomials, as defined in \S \ref{ss:berryintro}.

\begin{lemma}\label{indicator function}
The following decomposition holds in $L^2(\gamma)$ (where, as before, $\gamma$ is the standard Gaussian density on $\R$):
\begin{equation*}
\frac{1}{2\varepsilon}{1}_{[-\varepsilon ,\varepsilon ]}(\cdot
)=\sum_{l=0}^{+\infty }\frac{1}{l!}\beta_l^{\varepsilon}\,H_{l}(\cdot ),
\end{equation*}
where, for $l \geq 1$
\[
\beta_l^{\varepsilon}=-\frac{1}{2\varepsilon}
\gamma \left (\varepsilon \right) \left (H_{l-1} \left (\varepsilon \right)-
H_{l-1} \left (-\varepsilon \right) \right ),
\]
while for $l=0$
\[
\beta_0^{\varepsilon} = \frac{1}{2\varepsilon} \int_{-\varepsilon}^{\varepsilon} \gamma(t)\,dt.
\]
Moreover, as $\varepsilon\to 0$,
$$
\beta_{l}^\varepsilon \to \beta_l,
$$
where for odd $\ell$, $\beta_l=0$ whereas $\beta_l$ coincides with \paref{e:beta} for even $\ell$.
\end{lemma}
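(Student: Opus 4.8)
The plan is to exploit the fact, recalled in \S\ref{ss:berryintro}, that $\mathbb{H}=\{[l!]^{-1/2}H_l : l\ge 0\}$ is a complete orthonormal system in $L^2(\gamma)$. First I would observe that $f_\varepsilon(t):=\frac{1}{2\varepsilon}\mathbf{1}_{[-\varepsilon,\varepsilon]}(t)$ is bounded with compact support, hence trivially an element of $L^2(\gamma)$; by completeness it therefore admits the $L^2(\gamma)$-convergent expansion $f_\varepsilon=\sum_{l\ge 0}\frac{1}{l!}\beta_l^\varepsilon H_l$, with coefficients $\beta_l^\varepsilon=\int_{\mathbb R} f_\varepsilon(t)H_l(t)\gamma(t)\,dt=\frac{1}{2\varepsilon}\int_{-\varepsilon}^{\varepsilon}H_l(t)\gamma(t)\,dt$. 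This already yields the claimed formula for $l=0$, since $H_0\equiv 1$.

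The key computational ingredient is the identity $\frac{d}{dt}\big[H_{l-1}(t)\gamma(t)\big]=-H_l(t)\gamma(t)$, valid for $l\ge 1$, which I would derive directly from $\gamma'(t)=-t\gamma(t)$ together with the defining recurrence $H_l(t)=tH_{l-1}(t)-H_{l-1}'(t)$: indeed $\frac{d}{dt}[H_{l-1}\gamma]=(H_{l-1}'-tH_{l-1})\gamma=-H_l\gamma$. Integrating over $[-\varepsilon,\varepsilon]$ and using that $\gamma$ is even gives $\int_{-\varepsilon}^{\varepsilon}H_l\gamma=-\gamma(\varepsilon)\big(H_{l-1}(\varepsilon)-H_{l-1}(-\varepsilon)\big)$, which after multiplication by $\frac{1}{2\varepsilon}$ is exactly the stated expression for $\beta_l^\varepsilon$.

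Finally I would pass to the limit $\varepsilon\to 0$. For $l=0$, the fundamental theorem of calculus gives $\beta_0^\varepsilon\to\gamma(0)=(2\pi)^{-1/2}=\beta_0$. For $l\ge 1$, I would write $\beta_l^\varepsilon=-\gamma(\varepsilon)\cdot\frac{H_{l-1}(\varepsilon)-H_{l-1}(-\varepsilon)}{2\varepsilon}$; the first factor tends to $\gamma(0)=(2\pi)^{-1/2}$, while the second is a central difference quotient of the smooth function $H_{l-1}$, hence converges to $H_{l-1}'(0)$. Thus $\beta_l^\varepsilon\to-(2\pi)^{-1/2}H_{l-1}'(0)$. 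Evaluating the recurrence at $t=0$ gives $H_l(0)=-H_{l-1}'(0)$, so this limit equals $(2\pi)^{-1/2}H_l(0)$; invoking the parity relation $H_l(-t)=(-1)^lH_l(t)$, which forces $H_l(0)=0$ for odd $l$, identifies the limit with $0$ for odd $l$ and with $\beta_l$ of \eqref{e:beta} for even $l$.

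As for the main obstacle: there is no serious difficulty here, the argument being essentially a computation. The only points demanding a little care are the verification of the Hermite--Gaussian derivative identity, which hinges on using the correct probabilistic normalization of $H_l$ fixed in \S\ref{ss:berryintro}, and the recognition that it is the \emph{symmetric} difference quotient (rather than a one-sided one) that appears, so that its limit is genuinely $H_{l-1}'(0)$.
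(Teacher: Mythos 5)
Your proof is correct and follows essentially the same route as the paper's: expand in the orthonormal Hermite system, compute $\beta_l^\varepsilon$ by recognizing $H_l\gamma$ as an exact derivative, and pass to the limit. The only (immaterial) differences are that you derive $\frac{d}{dt}[H_{l-1}\gamma]=-H_l\gamma$ from the recurrence rather than from Rodrigues' formula $H_l\gamma=(-1)^l\gamma^{(l)}$ as the paper does, and you identify the limit via $H_l(0)=-H_{l-1}'(0)$ instead of the paper's explicit factorial evaluation of $H_{2l}(0)$.
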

\noindent\begin{proof}
Using the completeness and orthonormality of the set $\mathbb{H}$ in $L^2(\gamma)$, one has that $\beta_0^{\varepsilon} = \frac{1}{2\varepsilon} \int_{-\varepsilon}^{\varepsilon} \gamma(t)\,dt$, and, for $l\geq 1$,
$$\displaylines{
\beta_l^{\varepsilon} = \frac{1}{2\varepsilon} \int_{-\varepsilon}^{\varepsilon} \gamma(t) H_l(t)    \,dt=
\frac{1}{2\varepsilon} \int_{-\varepsilon}^{\varepsilon} \gamma(t) (-1)^l \gamma^{-1}(t) \frac{d^l}{dt^l} \gamma(t)  \,dt =\cr
=\frac{1}{2\varepsilon} (-1)^l \left (\frac{d^{l-1}}{dt^{l-1}}\gamma \left (\varepsilon \right)-
\frac{d^{l-1}}{dt^{l-1}}\gamma \left (-\varepsilon \right) \right )=-\frac{1}{2\varepsilon}
\gamma \left (\varepsilon \right) \left (H_{l-1} \left (\varepsilon \right)-
H_{l-1} \left (-\varepsilon \right) \right ).
}$$
Now, if $l$ is odd, then $H_{l-1}$ is an even function, and therefore $\beta_l^{\varepsilon}=0$: it follows that
\begin{equation*}
\frac{1}{2\varepsilon}{1}_{[-\varepsilon,\varepsilon ]}(\cdot
)=\beta_0^{\varepsilon} + \sum_{l=1}^{+\infty }\frac{1}{(2l)!}\left (-\frac{1}{\varepsilon}\gamma
\left (\varepsilon \right )H_{2l-1}\left (\varepsilon \right )\right )\,H_{2l}(\cdot ).
\end{equation*}%
Using the notation \eqref{e:beta}, we have that, for all $l\geq 0$,
\begin{equation}\label{e:sat}
\lim_{\varepsilon} \beta_{2l}^{\varepsilon} =- \frac{1}{\sqrt{2\pi }}(2l-1)!\frac{(-1)^{l-1}}{\left( l-1\right)!2^{l-1}} =\frac{1}{\sqrt{2\pi }} H_{2l}(0) = \beta_{2l}.
\end{equation}
\end{proof}

Note that the set $\{\beta_l : l = 0,1,2,... \}$ can be interpreted as the sequence of the coefficients appearing in the formal Hermite expansion of the Dirac mass $\delta_0$.

\smallskip

Now fix $x\in \mathbb{T}$, and recall that the coordinates of the vector $$\nabla \widetilde T_n(x)
 := (\partial_1 \widetilde T_n(x), \partial_2 \widetilde T_n(x)),
$$ are unit variance centered independent Gaussian random variables (see i.e., \cite{KKW}).
Now, since the random variable $\| \nabla \widetilde T_n(x) \|$ is square-integrable,
it can be expanded into an (infinite) series of Hermite polynomials, as detailed in the following statement.

\begin{lemma}\label{Euclidean norm}
For $(Z_1,Z_2)$ a standard Gaussian bivariate vector, we have the $L^2$-expansion
\begin{equation*}
\|(Z_1,Z_2)\| = \sum_{n=0}^{+\infty}
\sum_{m=0}^{n} \frac{\alpha_{2n,2n-2m}}{(2n)! (2n-2m)!} H_{2n}(Z_1) H_{2n-2m}(Z_2),
\end{equation*}
where the $\alpha_{2n,2n-2m}$ are as in \eqref{e:alpha}.
\end{lemma}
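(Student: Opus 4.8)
The plan is to recognise the claimed series as nothing but the Hermite--Itô expansion, in $L^2(\R^2,\gamma\otimes\gamma)$, of the function $f(z_1,z_2):=\|(z_1,z_2)\|=\sqrt{z_1^2+z_2^2}$, and then to evaluate its coefficients explicitly. First I would check integrability: since $\int_{\R^2}(z_1^2+z_2^2)\,\gamma(z_1)\gamma(z_2)\,dz_1dz_2=2<\infty$, the map $f$ lies in $L^2(\gamma\otimes\gamma)$. Because $\{[k!]^{-1/2}H_k\}_{k\ge0}$ is a complete orthonormal system in $L^2(\gamma)$ (as recalled in \S\ref{ss:berryintro}), the products $H_a(z_1)H_b(z_2)$, $a,b\ge0$, form a complete orthogonal system in $L^2(\gamma\otimes\gamma)$, so $f$ admits a unique $L^2$-convergent expansion $f=\sum_{a,b\ge0}c_{a,b}\,H_a(Z_1)H_b(Z_2)$ with
\[
c_{a,b}=\frac{1}{a!\,b!}\,\E\big[\,\|(Z_1,Z_2)\|\,H_a(Z_1)H_b(Z_2)\,\big].
\]

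Second I would use parity to discard most coefficients. Since $f$ is even in each of its two arguments while $H_k$ has the parity of $k$, the integrand $f\cdot(H_a\otimes H_b)$ is odd in $z_1$ (resp.\ $z_2$) whenever $a$ (resp.\ $b$) is odd; hence $c_{a,b}=0$ unless both indices are even. The whole expansion therefore collapses onto even indices, and it remains to prove the single identity
\[
I_{a,b}:=\int_{\R^2}\sqrt{z_1^2+z_2^2}\;H_{2a}(z_1)H_{2b}(z_2)\,\gamma(z_1)\gamma(z_2)\,dz_1\,dz_2=\alpha_{2a,2b}
\]
for all $a,b\ge0$, with $\alpha_{2a,2b}$ as in \eqref{e:alpha}; note that the right-hand side of \eqref{e:alpha} is manifestly symmetric under $a\leftrightarrow b$, matching the symmetry $I_{a,b}=I_{b,a}$, so computing $I_{a,b}$ for every pair determines all the surviving coefficients and thereby yields the stated expansion.

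Third, the heart of the argument is the evaluation of $I_{a,b}$. I would pass to polar coordinates $z_1=\rho\cos\theta$, $z_2=\rho\sin\theta$, so that $\sqrt{z_1^2+z_2^2}=\rho$ and $\gamma(z_1)\gamma(z_2)\,dz_1dz_2=\frac{1}{2\pi}e^{-\rho^2/2}\rho\,d\rho\,d\theta$, giving
\[
I_{a,b}=\frac{1}{2\pi}\int_0^{2\pi}\!\!\int_0^\infty \rho^2\,e^{-\rho^2/2}\,H_{2a}(\rho\cos\theta)H_{2b}(\rho\sin\theta)\,d\rho\,d\theta .
\]
Expanding $H_{2a}$ and $H_{2b}$ into monomials reduces this to a finite sum of products of an angular integral $\int_0^{2\pi}\cos^{2i}\theta\sin^{2j}\theta\,d\theta$, which is an elementary Beta-function value, and a radial integral $\int_0^\infty\rho^{2k+2}e^{-\rho^2/2}\,d\rho=2^{k+1/2}\Gamma(k+\tfrac32)$, a half-integer Gamma value; the latter is precisely the source of the swinging factorials $\frac{(2j+1)!}{(j!)^2}$ appearing in the definition of $p_N$. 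The main obstacle is the ensuing bookkeeping: reorganising this Beta/Gamma double sum so that it collapses into the single polynomial $p_{a+b}$ evaluated at $1/4$, thus matching $\sqrt{\pi/2}\,\frac{(2a)!(2b)!}{a!\,b!}2^{-(a+b)}p_{a+b}(\tfrac14)=\alpha_{2a,2b}$. This is exactly the type of Hermite-coefficient computation for the modulus of a planar Gaussian carried out in \cite{KL}. As a consistency check, the base case is immediate: $I_{0,0}=\E\,\|(Z_1,Z_2)\|=\sqrt{\pi/2}$, which equals $\alpha_{0,0}$ since $p_0\equiv1$.
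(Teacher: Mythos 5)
Your framework coincides with the paper's up to the decisive step: both expand $\|(Z_1,Z_2)\|$ in the tensor Hermite basis of $L^2(\gamma\otimes\gamma)$, both use parity to kill every coefficient with an odd index, and both reduce the lemma to the single identity $I_{a,b}:=\frac{1}{2\pi}\int_{\R^2}\sqrt{y^2+z^2}\,H_{2a}(y)H_{2b}(z)e^{-(y^2+z^2)/2}\,dy\,dz=\alpha_{2a,2b}$. From there your route genuinely diverges from the paper's. The paper never passes to polar coordinates: it integrates $\sqrt{y^2+z^2}$ against the Hermite generating functions $e^{\lambda y-\lambda^2/2}e^{\mu z-\mu^2/2}$, recognises the result as $\E\sqrt{Y^2+Z^2}$ for a Gaussian vector with means $(\lambda,\mu)$, evaluates this via the noncentral $\chi^2$ density as a double power series in $\lambda,\mu$, and reads off all coefficients simultaneously by matching powers. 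Your polar-coordinate/monomial-expansion route is more elementary and, once executed, arguably cleaner; the paper's route computes all $\alpha_{2a,2b}$ in one stroke but at the cost of a heavier series manipulation (see \paref{trallala}--\paref{sciuri}).

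The gap is that you stop exactly where the content of the lemma lies: the reduction of the Beta/Gamma double sum to $\sqrt{\pi/2}\,\frac{(2a)!(2b)!}{a!\,b!}2^{-(a+b)}p_{a+b}(\tfrac14)$ is asserted, not proved, and you verify only $I_{0,0}$. This identity is precisely what \eqref{e:alpha} claims, and it cannot be outsourced to \cite{KL}, whose coefficients are in a different (if equivalent) form. The good news is that your plan does close, and the ``bookkeeping'' collapses for a clean reason. Writing $H_{2a}(y)=(2a)!\sum_{k=0}^{a}\frac{(-1)^k}{k!\,(2a-2k)!\,2^k}\,y^{2a-2k}$, using $\frac{1}{2\pi}\int_0^{2\pi}\cos^{2i}\theta\,\sin^{2j}\theta\,d\theta=\frac{(2i)!\,(2j)!}{4^{i+j}\,i!\,j!\,(i+j)!}$ and $\int_0^{\infty}\rho^{2N+2}e^{-\rho^2/2}\,d\rho=\sqrt{\pi/2}\;\frac{(2N+1)!}{2^N N!}$, and substituting $i=a-k$, $j=b-l$, one finds
\[
I_{a,b}=\sqrt{\tfrac{\pi}{2}}\,\frac{(2a)!(2b)!}{a!\,b!}\,\frac{(-1)^{a+b}}{2^{a+b}}\sum_{N=0}^{a+b}(-1)^{N}\frac{(2N+1)!}{(N!)^2}\Big(\tfrac14\Big)^{N}\sum_{i+j=N}\binom{a}{i}\binom{b}{j},
\]
and the inner sum equals $\binom{a+b}{N}$ by Vandermonde's convolution, which is exactly $\alpha_{2a,2b}$ as defined in \eqref{e:alpha}. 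You should supply this computation (or an equivalent one); without it the proof establishes only that \emph{some} even-index expansion exists, not that its coefficients are the ones the rest of the paper uses to evaluate $\alpha_{0,0}$, $\alpha_{0,2}$, $\alpha_{2,2}$, etc.
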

\noindent\begin{proof}
We may expand
\begin{equation*}
\|(Z_1,Z_2)\| = \sum_{u=0}^{+\infty}
\sum_{m=0}^{u} \frac{\alpha_{u,u-m}}{u! (u-m)!} H_u(Z_1) H_{u-m}(Z_2),
\end{equation*}
where
\begin{equation*}
\alpha_{n,n-m}=\frac{1}{2\pi} \int_{\R^2} \sqrt{y^2 + z^2} H_{n}(y) H_{n-m}(z)
\mathrm{e}^{-\frac{y^2+z^2}{2}}\,dy dz.
\end{equation*}
Our aim is to compute $\alpha_{n,n-m}$ as explicitly as possible. First of all, we observe that, if $n$ or $n-m$ is odd, then the above integral
vanishes (since the two mappings $z\mapsto \sqrt{y^2 + z^2}$ and $y\mapsto \sqrt{y^2 + z^2}$ are even). It follows therefore that
\begin{equation*}
\|(Z_1,Z_2)| = \sum_{n=0}^{+\infty}
\sum_{m=0}^{n} \frac{\alpha_{2n,2n-2m}}{(2n)! (2n-2m)!} H_{2n}(Z_1) H_{2n-2m}(Z_2).
\end{equation*}
We are therefore left with the task of showing that the integrals
\begin{equation*}  \label{coeff}
\alpha_{2n,2n-2m}=\frac{1}{2\pi} \int_{\R^2} \sqrt{y^2 + z^2} H_{2n}(y)
H_{2n-2m}(z) \mathrm{e}^{-\frac{y^2+z^2}{2}}\,dy dz,
\end{equation*}
where $n\ge 0$ and $m=0, \dots, n$, are given by \eqref{e:alpha}. One elegant way for dealing with this task is to use the following Hermite polynomial expansion (see e.g. \cite[Proposition 1.4.2]{N-P})
\begin{equation}\label{hermiteExp}
\mathrm{e}^{\lambda y - \frac{\lambda^2}{2}} = \sum_{a=0}^{+\infty} H_a(y)
\frac{\lambda^a}{a!}, \quad  \lambda \in \R.
\end{equation}
Let us consider the integral
\begin{equation*}
\frac{1}{2\pi} \int_{\mathbb R^2} \sqrt{y^2 + z^2} \e^{\lambda y - \frac{\lambda^2}{2}} \e^{\mu z - \frac{\mu^2}{2}}\e^{-\frac{y^2+z^2}{2}}\,dy dz
=\frac{1}{2\pi} \int_{\mathbb R^2} \sqrt{y^2 + z^2} \e^{-\frac{(y-\lambda)^2+(z-\mu)^2}{2}}\,dy dz.
\end{equation*}
This integral coincides with the expected value of the random variable $W:=\sqrt{%
Y^2 + Z^2}$ where $(Y,Z)$ is a vector of independent Gaussian random variables with variance one and mean $\lambda $ and $\mu$, respectively. Note that $W^2=Y^2 + Z^2$ has a non-central $\chi^2$-distribution: more precisely, $Y^2 + Z^2\sim \chi^2(2,\lambda^2 +
\mu^2)$. Its density $f_{W^2}$ (see e.g. \cite[(26.4.25)]{AS}) is given by
\begin{equation*}
f_{W^2}(t) = \sum_{j=0}^{+\infty} \mathrm{e}^{-(\lambda^2 + \mu^2)/2}\frac{%
((\lambda^2 + \mu^2)/2)^j }{j!}f_{2+2j}(t)\,{1}_{\{t> 0\}},
\end{equation*}
where $f_{2+2j}$ is the density function  of a $\chi^2_{2+2j}$-distributed random variable (see e.g. \cite[(26.4.1)]{AS}).
Therefore, the density $f_W$ of $W$ is $f_W(t) = f_{W^2}(t^2)\,2t$, i.e.
\begin{equation*}
f_W(t) = \sum_{j=0}^{+\infty} \mathrm{e}^{-(\lambda^2 + \mu^2)/2}\frac{%
((\lambda^2 + \mu^2)/2)^j }{j!}f_{2+2j}(t^2)\, 2t \,{1}_{\{t> 0\}}.
\end{equation*}
Hence, the expected
value of $W$ is
\begin{equation}\label{mediaW}
\E[W]=2\sum_{j=0}^{+\infty} \mathrm{e}^{-(\lambda^2 + \mu^2)/2}\frac{((\lambda^2 +
\mu^2)/2)^j }{j!}\int_{0}^{+\infty} f_{2+2j}(t^2)\, t^2\,dt.
\end{equation}
From the definition of $f_{2+2j}$ in \cite[(26.4.1)]{AS} we have
\begin{eqnarray}\nonumber
\int_{0}^{+\infty} f_{2+2j}(t^2)\, t^2\,dt &=&
\frac{1}{2^{1+j}\Gamma(1+j)} \int_{0}^{+\infty} t^{2j+2}\e^{-t^2/2}\,dt \\
\label{ciaobella}
&=& \frac{\prod_{i=1}^{1+j} (2i-1)\sqrt{\frac{\pi}{2}}}{2^{1+j}\Gamma(1+j)}.
\end{eqnarray}
Substituting  \paref{ciaobella} into \paref{mediaW} we have
\begin{eqnarray}\label{fico2}
 \E[W]
  =2\e^{-(\lambda^2 + \mu^2)/2}\sum_{j=0}^{+\infty} \frac{((\lambda^2
+ \mu^2)/2)^j }{j!}\frac{\prod_{i=1}^{1+j} (2i-1)\sqrt{\frac{\pi}{2}}}{2^{1+j}\Gamma(1+j)} =: F(\lambda, \mu).
\end{eqnarray}
Applying Newton's binomial formula to $((\lambda^2
+ \mu^2)/2)^j$, we may expand the function $F$ in \paref{fico2} as follows:
\begin{equation*}
\displaylines{
F(\lambda, \mu) =
2\sum_{a=0}^{+\infty} \frac{(-1)^a\lambda^{2a}}{2^a a!}
\sum_{b=0}^{+\infty} \frac{(-1)^b\mu^{2b}}{2^b b!}
\sum_{j=0}^{+\infty} \frac{1 }{j!} \sum_{l=0}^{j} {j \choose l} \lambda^{2l}
\mu^{2j-2l}
\frac{\prod_{i=1}^{1+j} (2i-1)\sqrt{\frac{\pi}{2}}}{2^{1+2j}\Gamma(1+j)} = \cr
= \sum_{a,b=0}^{+\infty} \frac{(-1)^a}{2^a a!}
 \frac{(-1)^b}{2^b b!}\sum_{j=0}^{+\infty}
 \frac{\prod_{i=1}^{1+j} (2i-1)\sqrt{\frac{\pi}{2}}}{j!2^{2j}\Gamma(1+j)}
 \sum_{l=0}^{j} {j \choose l} \lambda^{2l+2a}
\mu^{2j+2b-2l}.
}
\end{equation*}
Setting $n:=l+a$ and $%
m:=j+b-l$, we also have that
\begin{eqnarray}\nonumber
&& F(\lambda, \mu)  = \sum_{a,b=0}^{+\infty} \frac{(-1)^a}{2^a a!}
 \frac{(-1)^b}{2^b b!}\sum_{j=0}^{+\infty} \frac{\prod_{i=1}^{1+j} (2i-1)\sqrt{\frac{\pi}{2}}}{j!2^{2j}\Gamma(1+j)}
 \sum_{l=0}^{j} {j \choose l} \lambda^{2l+2a}
\mu^{2j+2b-2l} \\ \label{trallala}
&&=\sum_{n,m}^{} \sum_{j}^{}\frac{\prod_{i=1}^{1+j} (2i-1)\sqrt{\frac{\pi}{2}}}{j!2^{2j}\Gamma(1+j)}  \sum_{l=0}^{j}\frac{(-1)^{(n-l)}}{2^{n-l} {(n-l)}!}
 \frac{(-1)^{m+l-j}}{2^{m+l-j} {(m+l-j)}!}
  {j \choose l} \lambda^{2n}
\mu^{2m}.
\end{eqnarray}
Since $F(\lambda, \mu)=\E[W]$ from \paref{fico2}, on the other hand \paref{hermiteExp} yields
\begin{eqnarray}\nonumber
{ F(\lambda, \mu) }&=&\frac{1}{2\pi} \int_{\mathbb R^2} \sqrt{y^2 + z^2} \e^{\lambda y - \frac{\lambda^2}{2}} \e^{\mu z - \frac{\mu^2}{2}}\e^{-\frac{y^2+z^2}{2}}\,dy dz\\ \nonumber
&& = \frac{1}{2\pi} \int_{\mathbb R^2} \sqrt{y^2 + z^2} \sum_{a=0}^{+\infty} H_a(y) \frac{\lambda^a}{a!} \sum_{b=0}^{+\infty} H_b(z) \frac{\mu^b}{b!}\e^{-\frac{y^2+z^2}{2}}\,dy dz \\ \label{d def}
&& =  \sum_{a,b=0}^{+\infty} \left( \frac{1}{a!b!2\pi} \underbrace{\int_{\mathbb R^2} \sqrt{y^2 + z^2}  H_a(y) H_b(z) \e^{-\frac{y^2+z^2}{2}}\,dy dz}_{:= d(a,b)} \right)\lambda^a \mu^b.
\end{eqnarray}
By the same reasoning as above, if $a$ or $b$ is odd, then $d(a,b)$ in \paref{d def} must vanish. By combining the expansions in \paref{trallala} and \paref{d def}, we have
\begin{eqnarray}\nonumber
&&\alpha_{2n, 2m}
 = \frac{1}{2\pi} \int_{\mathbb R^2} \sqrt{y^2 + z^2}  H_{2n}(y) H_{2m}(z) \e^{-\frac{y^2+z^2}{2}}\,dy dz \\
&=& (2n)!(2m)!\frac{(-1)^{m+n}}{2^{n+m}}\sum_{j}^{}\frac{(-1)^{j}\prod_{i=1}^{1+j} (2i-1)
\sqrt{\frac{\pi}{2}}}{2^j j!\Gamma(1+j)}  \sum_{l=0}^{j}\frac{{j \choose l}}{ {(n-l)}! {(m+l-j)}!}. \label{sciuri}
\end{eqnarray}
The equality \eqref{e:alpha} now follows from \paref{sciuri} and some computations:
\begin{eqnarray*}
\alpha_{2n,2m}&=&\frac{1}{2\pi} \int_{\mathbb R^2} \sqrt{y^2 + z^2}  H_{2n}(y) H_{2m}(z) \e^{-\frac{y^2+z^2}{2}}\,dy dz \\
&=& (2n)!(2m)!\frac{(-1)^{m+n}}{2^{n+m}}\sum_{j}^{}(-1)^{j}
\frac{\prod_{i=1}^{1+j} (2i-1)\sqrt{\frac{\pi}{2}}}{2^j j!\Gamma(1+j)}
\sum_{l=0}^{j}\frac{{j \choose l}}{ {(n-l)}! {(m+l-j)}!} \\
&=& (2n)!(2m)!\frac{(-1)^{m+n}}{2^{n+m}}
\sum_{j}^{}(-1)^{j}\frac{(2j+1)!!\sqrt{\frac{\pi}{2}}}{2^j(j!)^2}
 \sum_{l=0}^{j}\frac{{j \choose l}}{ {(n-l)}! {(m+l-j)}!}\\
 &=&\frac{(2n)!(2m)!}{n! m!}\frac{(-1)^{m+n}}{2^{n+m}}
\sum_{j=0}^{n+m}(-1)^{j}\frac{(2j+1)!!\sqrt{\frac{\pi}{2}}}{2^j j!}
 \sum_{l=0}^{j}{ n\choose l}{ m\choose j-l} \\
&=&\sqrt{\frac{\pi}{2}}\frac{(2n)!(2m)!}{n! m!}\frac{(-1)^{m+n}}{2^{n+m}}
\sum_{j=0}^{n+m}(-1)^{j}\frac{(2j+1)!!}{2^j j!}
 { n+m\choose j} \\
&=&\sqrt{\frac{\pi}{2}}\frac{(2n)!(2m)!}{n! m!}\frac{(-1)^{m+n}}{2^{n+m}}
\sum_{j=0}^{n+m}(-1)^{j}\frac{(2(j+1))!}{2^{j+1} 2^j j! (j+1)!}
 { n+m\choose j} \\
&=&\sqrt{\frac{\pi}{2}}\frac{(2n)!(2m)!}{n! m!}\frac{(-1)^{m+n}}{2^{n+m}}
\sum_{j=0}^{n+m}(-1)^{j}\frac{(2j+1)!}{2^{2j} (j!)^2}
 { n+m\choose j}.
\end{eqnarray*} \end{proof}
{


We note that for two-dimensional random fields on the plane, the chaos decomposition of the length of level curves was derived earlier by Kratz and Le{\'o}n, see \cite{KL}; our derivation of the projection coefficients in Lemma \ref{Euclidean norm} is different from theirs (albeit equivalent, by uniqueness), and it was hence reported for the sake of completeness.

\noindent\begin{proof}[Proof of Proposition \ref{teoexp}]
In view of Definition \ref{d:chaos}, the computations in Lemma
\ref{indicator function} and Lemma \ref{Euclidean norm} (together with the fact that the three
random variables $T_n(x),\, \partial_1 \widetilde T_n(x)$ and $\partial_2 \widetilde T_n(x)$
are stochastically independent, as recalled above) show that, for fixed $x\in \mathbb{T}$,
the projection of the random variable
$$
\frac{1}{2\varepsilon} 1_{[-\varepsilon, \varepsilon]}(T_n(x)) \sqrt{\partial_1 \widetilde
T_n(x)^2+\partial_2 \widetilde T_n(x)^2}
$$
onto each odd chaos vanishes, whereas the projection onto the chaos $C_{2q}$, for $q\geq 1$, equals
$$
\sum_{u=0}^{q}\sum_{m=0}^{u}
\frac{\alpha _{2m,2u-2m}\beta^{\varepsilon} _{2q-2u}
}{(2m)!(2u-2m)!(2q-2u)!}  H_{2q-2u}(T_n(x))
H_{2m}(\partial_1 \widetilde T_n(x))H_{2u-2m}(\partial_2
\widetilde T_n(x)).
$$ Since $\int_{\mathbb{T}}dx<\infty$, standard arguments based on Jensen's
inequality and dominated convergence yield that
$\Lc^{\varepsilon}_n[q]=0$ if $q$ is odd and
for every $q\geq 1$,
\begin{eqnarray*}
&&\Lc^{\varepsilon}_n[2q]\\
&&= \sqrt{\frac{4\pi^2n}{2}}\sum_{u=0}^{q}\sum_{m=0}^{u}
\frac{\alpha _{2m,2u-2m}\beta^{\varepsilon} _{2q-2u}
}{(2m)!(2u-2m)!(2q-2u)!}\times \\
&& \hspace{4cm} \times\!\!\int_{\mathbb T}\!\! H_{2q-2u}(T_n(x))
H_{2m}(\partial_1 \widetilde T_n(x))H_{2u-2m}(\partial_2
\widetilde T_n(x))\,dx.\notag
\end{eqnarray*}
In view of Lemma \ref{approx} and \paref{eq:len chaos decomp} one has that for every $q\ge 0$,
as $\varepsilon \to 0$, $\Lc^\varepsilon_n[q]$
necessarily converge to $\Lc_n[q]$ in $L^2$. We just proved that
$$\Lc^\varepsilon_n[q]=0$$ for $q=2m+1$ as stated in part (a) of Proposition
\ref{teoexp}.
Moreover, using \eqref{e:sat}, we deduce from this fact that representation \eqref{e:pp} in part (b) of Proposition
\ref{teoexp} is valid. To complete the proof of part (a) of Proposition \ref{teoexp}, we need first to show that that
$\Lc_n[2]=0.$
From the previous discussion we deduce that $\Lc_n[2]$ equals
\begin{equation}
\label{eq:L proj C2}
\begin{split}
\Lc_n[2]&=
\sqrt{4\pi^2}\sqrt{\frac{n}{2}}\bigg(
\frac{\alpha_{0,0}\beta_{2}}{2} \int_{\mathbb T}
H_{2}(T_n(x))\,dx + \frac{\alpha_{0,2}\beta_{0}}{2}
\int_{\mathbb T}  H_{2}(\partial_2 \widetilde
T_n(x))\,dx \\&+  \frac{\alpha_{2,0}\beta_{0}}{2}
\int_{\mathbb T} H_{2}(\partial_1 \widetilde T_n(x))\,dx
\bigg).
\end{split}
\end{equation}
Since $H_2(x)=x^2-1$, we may write
\begin{equation}
\label{eq:int H2der expl}
\begin{split}
\int_{\mathbb T} H_{2}(T_n(x))\,dx &= \int_{\mathbb T}
\left ( T_n(x)^2 -1                           \right)\,dx=
\int_{\mathbb T}
\left ( \frac{1}{\mathcal N_n}
\sum_{\lambda, \lambda'\in \Lambda_n} a_\lambda \overline a_{\lambda'}
e_{\lambda - \lambda'}(x)
 -1                           \right)\,dx\\
&=
 \frac{1}{\mathcal N_n}\sum_{\lambda, \lambda'\in \Lambda_n} a_\lambda \overline a_{\lambda'}
 \underbrace{\int_{\mathbb T}e_{\lambda - \lambda'}(x)\,dx}_{\delta_{\lambda}^{\lambda'}}
-1= \frac{1}{\mathcal N_n} \sum_{\lambda_\in \Lambda_n} (|a_\lambda|^2 -1),
\end{split}
\end{equation}
where $\delta_{\lambda}^{\lambda'}$ is the Kronecker symbol. (Observe that $\E[|a_\lambda|^2]=1$, hence the expected value of the integral
$\int_{\mathbb T} H_{2}(T_n(x))\,dx$ is $0$, as expected.) Analogously, for $j=1,2$ we have
\begin{equation}
\label{eq:int H2 expl}
\begin{split}
\int_{\mathbb T} H_{2}(\partial_j \widetilde T_n(x))\,dx
&= \int_{\mathbb T} \left ( \frac{2}{n}\frac{1}{\mathcal N_n}
\sum_{\lambda, \lambda'\in \Lambda_n} \lambda_j \lambda'_j
a_\lambda \overline a_{\lambda'} e_{\lambda - \lambda'}(x )
 -1                           \right)\,dx\\
 &=\frac{2}{n}\frac{1}{\mathcal N_n}
\sum_{\lambda\in \Lambda_n} \lambda_j^2 |a_\lambda|^2 - 1
=\frac{1}{\mathcal N_n} \frac{2}{n} \sum_{\lambda \in \Lambda_n}
\lambda_j^2 (|a_\lambda|^2 - 1),
\end{split}
\end{equation}
where the used Lemma \ref{lemmavar} to establish the last equality.

Since $\alpha_{2n,2m}=\alpha_{2m,2n}$, and in light of \eqref{eq:int H2der expl} and \eqref{eq:int H2 expl},
we may rewrite \eqref{eq:L proj C2} as
\begin{equation}
\label{eq:L proj C2 expl}
\begin{split}
&\Lc_n[2]=\sqrt{4\pi^2}\sqrt{\frac{n}{2}}\left( \frac{\alpha_{0,0}\beta_{2}}{2}
\frac{1}{\mathcal N_n} \sum_{\lambda_\in \Lambda_n} (|a_\lambda|^2 -1)
+ \frac{\alpha_{0,2}\beta_{0}}{2}
\frac{1}{\mathcal N_n} \frac{2}{n} \sum_{\lambda \in \Lambda_n}
\underbrace{(\lambda_1^2+\lambda_2^2)}_{=n} (|a_\lambda|^2 - 1)
  \right)\\
&=\sqrt{4\pi^2}\sqrt{\frac{n}{2}}\frac{1}{2\mathcal N_n}
\left( \alpha_{0,0}\beta_{2}
 \sum_{\lambda_\in \Lambda_n} (|a_\lambda|^2 -1)
+ 2\alpha_{0,2}\beta_{0}
 \sum_{\lambda \in \Lambda_n}
 (|a_\lambda|^2 - 1)
  \right)\\
&=\sqrt{4\pi^2}\sqrt{\frac{n}{2}}\frac{1}{2\mathcal N_n}
\left( \alpha_{0,0}\beta_{2}+ 2\alpha_{0,2}\beta_{0}\right)
 \sum_{\lambda_\in \Lambda_n} (|a_\lambda|^2 -1).
\end{split}
\end{equation}
Since
$$
\alpha_{0,0} = \sqrt{\frac{\pi}{2}},\quad \alpha_{0,2}=\alpha_{2,0} = \frac12
\sqrt{\frac{\pi}{2}}, \quad \beta_0 =\frac{1}{ \sqrt{2\pi}}, \quad \beta_2 = - \frac{1}{ \sqrt{2\pi}},
$$
we have that
$$
 \alpha_{0,0}\beta_{2}+ 2\alpha_{0,2}\beta_{0}=0
$$
and hence $\Lc_n[2]=0$ from \paref{eq:L proj C2 expl}.
The proof of Proposition \ref{teoexp} is hence concluded, in view of \eqref{eq:len chaos decomp}.
\end{proof}

\section{Proofs of Proposition \ref{prop:main result on proj4} and Proposition \ref{eq:Lproj4 replace L}
}
\label{dimProp}

%
%
%
%

One of the main findings of the present paper is that, for any sequence $\lbrace n_j\rbrace$ such that
$\Nc_{n_j}\to \infty$ and $|\widehat \mu_{n_j}(4)|$ converges, the distribution of the normalised sequence
$\{ \widetilde{\Lc}_{n_j}\}$ in \eqref{e:culp} is asymptotic to one of its
fourth-order chaotic projections. The aim of this section is a precise analysis of the asymptotic behavior of the sequence
\begin{equation*}\label{e:ron}
 \frac{ \mathcal{L}_{n_j}[4]}
{\sqrt{\Var(\mathcal{L}_{n_j}[4] )}} , \quad j\geq 1.
\end{equation*}
which will allow us to prove Proposition \ref{prop:main result on proj4} and Proposition \ref{eq:Lproj4 replace L}.

\subsection{Preliminary results}

Here we state the key tools for our proofs: first an explicit formula for $\mathcal{L}_{n_j}[4]$ and then a Central Limit Theorem for some of its ingredients.
First we need some intermediate results, whose proofs follow immediately from the fact that, for every $n\in S$,
$$
\widehat{\mu}_n(4) = \frac{1}{n^2 \Nc_n} \sum_{\lambda=(\lambda_1, \lambda_2)\in \Lambda_n} \left(\lambda_1^4 +\lambda_2^4-6\lambda_1^2\lambda_2^2\right),
$$
as well as from the identity $\lambda_1^2 + \lambda_2^2 = n$ for $\lambda=(\lambda_1, \lambda_2)\in \Lambda_n$, and  the following elementary symmetry consideration:
$$
 \frac{1}{n^2 \Nc_n} \sum_{\lambda=(\lambda_1, \lambda_2)\in \Lambda_n} \lambda_1^4 =  \frac{1}{n^2 \Nc_n} \sum_{\lambda=(\lambda_1, \lambda_2)\in \Lambda_n} \lambda_2^4.
$$
\begin{lemma}\label{semplice}
For every $n\in S$
we have
\begin{equation*}
\frac{1}{n^2 \mathcal N_n} \sum_{\lambda=(\lambda_1,\lambda_2)\in  \Lambda_n} \lambda_\ell^4 = \frac{3+\widehat \mu_n(4)}{8},
\end{equation*}
where $\ell=1,2$, and moreover
\begin{equation*}
\frac{1}{n^2 \mathcal N_n} \sum_{\lambda=(\lambda_1,\lambda_2)\in  \Lambda_n} \lambda_1^2 \lambda_2^2 = \frac{1-\widehat \mu_n(4)}{8}.
\end{equation*}
\end{lemma}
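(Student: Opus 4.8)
The plan is to reduce the two identities to the solution of a linear system in two unknowns, using only the three facts recalled just before the statement. Introduce the abbreviations
$$
A := \frac{1}{n^2 \Nc_n} \sum_{\lambda=(\lambda_1,\lambda_2)\in \Lambda_n} \lambda_1^4,
\qquad
B := \frac{1}{n^2 \Nc_n} \sum_{\lambda=(\lambda_1,\lambda_2)\in \Lambda_n} \lambda_1^2 \lambda_2^2,
$$
and note that, by the symmetry consideration recalled in the excerpt, one also has $\frac{1}{n^2\Nc_n}\sum_{\lambda\in\Lambda_n} \lambda_2^4 = A$. The two quantities to be evaluated are exactly $A$ and $B$, so it suffices to produce two independent linear relations between them and solve.

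First I would exploit the defining identity $\lambda_1^2 + \lambda_2^2 = n$, valid for every $\lambda=(\lambda_1,\lambda_2)\in\Lambda_n$. Squaring it yields $\lambda_1^4 + 2\lambda_1^2\lambda_2^2 + \lambda_2^4 = n^2$; summing over $\lambda\in\Lambda_n$, dividing through by $n^2\Nc_n$, and using the symmetry to identify both quartic sums with $A$, I obtain the first relation
$$
2A + 2B = 1.
$$
Second, I would invoke the formula for the fourth Fourier coefficient recalled before the statement, namely
$$
\widehat{\mu}_n(4) = \frac{1}{n^2\Nc_n}\sum_{\lambda=(\lambda_1,\lambda_2)\in\Lambda_n}\left(\lambda_1^4 + \lambda_2^4 - 6\lambda_1^2\lambda_2^2\right),
$$
which, again collapsing the two quartic sums to $A$, reads $\widehat{\mu}_n(4) = 2A - 6B$.

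Solving the resulting $2\times 2$ system is then immediate: subtracting the second relation from the first gives $8B = 1 - \widehat{\mu}_n(4)$, hence $B = (1-\widehat{\mu}_n(4))/8$, which is the second claimed identity; back-substituting into $2A + 2B = 1$ yields $A = (3+\widehat{\mu}_n(4))/8$, which is the first. Since every step is an elementary manipulation of finite sums, there is no genuine obstacle; the only point requiring a little care is the consistent use of the symmetry identity, so that all three quartic and mixed sums are expressed through the single pair of unknowns $A$ and $B$ before the system is solved.
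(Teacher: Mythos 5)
Your proof is correct and follows exactly the route the paper indicates (the paper gives no separate written proof, only the remark that the lemma follows from the stated formula for $\widehat{\mu}_n(4)$, the identity $\lambda_1^2+\lambda_2^2=n$, and the symmetry between the two quartic sums). Your explicit reduction to the $2\times 2$ linear system $2A+2B=1$, $2A-6B=\widehat{\mu}_n(4)$ and its solution are exactly the intended computation.
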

Let us state now the above mentioned CLT result.
Let us define, for $n\in S$,
\begin{eqnarray}
W(n):=\left(
\begin{array}{c}
W_{1}(n) \\
W_{2}(n) \\
W_{3}(n) \\
W_{4}(n)
\end{array}%
\right ) & :=& \frac{1}{n\sqrt{\Nc_{n}/2}}\sum_{\substack{\lambda = (\lambda_1,\lambda_2)\in \Lambda_{n} \\ \lambda_2> 0 }}\left(
|a_{\lambda }|^{2}-1\right) \left(
\begin{array}{c}\label{Wdef}
n \\
\lambda _{1}^{2} \\
\lambda _{2}^{2} \\
\lambda _{1}\lambda _{2}%
\end{array}%
\right).
\end{eqnarray}


Exploiting the representation \eqref{e:pp} in the case $q=2$, one can show the following.
\begin{lemma}\label{p:formula4}
We have, for diverging subsequences $\lbrace n_j \rbrace \subseteq S$ such that $\mathcal N_{n_j}\to +\infty$ and $\widehat \mu_{n_j}(4)$ converges,
\begin{equation}
\mathcal{L}_{n_j}[4]=\sqrt{\frac{E_{n_j}}{ 512\, \mathcal N^2_{n_j}}}\Big(1 + W_1(n_j) ^2-2W_2(n_j)^2-2W_3(n_j)^2 - 4W_4(n_j)^2+ o_\P (1) \Big).
\end{equation}
\end{lemma}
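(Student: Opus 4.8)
The plan is to start from the explicit representation \eqref{e:pp} specialised to $q=2$, which writes $\mathcal L_n[4]$ as $\sqrt{E_n/2}$ times a linear combination of the six integrals
$\int_{\mathbb T}H_{4-2u}(T_n)H_{2k}(\partial_1\widetilde T_n)H_{2u-2k}(\partial_2\widetilde T_n)\,dx$ with $u\in\{0,1,2\}$, $0\le k\le u$. First I would record the values of the prefactors: from \eqref{e:beta}, $\beta_0=\tfrac1{\sqrt{2\pi}}$, $\beta_2=-\tfrac1{\sqrt{2\pi}}$, $\beta_4=\tfrac3{\sqrt{2\pi}}$ (since $H_4(0)=3$); from \eqref{e:alpha}, $\alpha_{0,0}=\sqrt{\pi/2}$, $\alpha_{0,2}=\alpha_{2,0}=\tfrac12\sqrt{\pi/2}$, $\alpha_{2,2}=-\tfrac18\sqrt{\pi/2}$, $\alpha_{4,0}=\alpha_{0,4}=-\tfrac38\sqrt{\pi/2}$, using $p_2(1/4)=-1/8$ and the symmetry $\alpha_{2k,2m}=\alpha_{2m,2k}$.

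The core is to expand each integral in the frequency domain. Writing $T_n$ and its normalised derivatives as $\sum_\lambda(\cdots)a_\lambda e_\lambda$ and using $\int_{\mathbb T}e_\mu\,dx=\delta_{\mu,0}$, every integral becomes a sum over quadruples $(\lambda^{(1)},\dots,\lambda^{(4)})\in\Lambda_n^4$ with $\lambda^{(1)}+\cdots+\lambda^{(4)}=0$, weighted by the derivative factors $\sqrt{2/n}\,\lambda^{(i)}_j$ and by the Wick monomial $:a_{\lambda^{(1)}}\cdots a_{\lambda^{(4)}}:$. I would split the quadruples into the \emph{diagonal} ones (pairing up as $\{\lambda,-\lambda\},\{\mu,-\mu\}$ under one of the three pairings of the four slots) and the \emph{non-degenerate} ones. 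On a diagonal quadruple each pair gives $a_\lambda a_{-\lambda}=|a_\lambda|^2$, so, modulo the $\lambda=\mu$ self-contractions produced by Wick ordering (which live in the zeroth and second chaos), the diagonal part of each integral is a quadratic form in the $|a_\lambda|^2-1$. Here both \emph{within-field} pairings (the two factors of $\partial_1\widetilde T_n$ meeting at $\lambda,-\lambda$, giving weight $\lambda_1^2$) and \emph{cross-field} pairings (a factor of $\partial_1\widetilde T_n$ at $\lambda$ meeting a factor of $\partial_2\widetilde T_n$ at $-\lambda$, giving weight $-\lambda_1\lambda_2$) must be tracked. Note that the cross-field contributions of the two integrals $\int H_2(T_n)H_2(\partial_j\widetilde T_n)$ carry an \emph{odd} weight $\lambda_j$ and hence vanish, since $\sum_{\lambda\in\Lambda_n}\lambda_j(|a_\lambda|^2-1)=0$ by the symmetry $\lambda\mapsto-\lambda$ together with $|a_{-\lambda}|^2=|a_\lambda|^2$; only the cross-field term of $\int H_2(\partial_1\widetilde T_n)H_2(\partial_2\widetilde T_n)$ survives, producing the weight $\lambda_1\lambda_2$. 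Reducing all the resulting sums to $\lambda_2>0$ and comparing with \eqref{Wdef} identifies them, up to negligible axis terms, with $W_1(n),\dots,W_4(n)$.

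Collecting the six diagonal contributions with the coefficients above and dividing by $\sqrt{E_n/(512\,\mathcal N_n^2)}=\sqrt{E_n/2}\cdot\tfrac1{16\mathcal N_n}$, I would obtain the quadratic form $6W_1^2-3W_2^2-3W_3^2-4W_1W_2-4W_1W_3-2W_2W_3-4W_4^2$, where the three $H_4$-integrals yield the pure squares (via the combinatorial factor $3$), the two $H_2(T_n)H_2(\partial_j\widetilde T_n)$ integrals yield the mixed terms $W_1W_j$, and the last integral yields $-2W_2W_3-4W_4^2$. Invoking the exact arithmetic relation $\lambda_1^2+\lambda_2^2=n$ on $\Lambda_n$, equivalently the identity $W_1(n)=W_2(n)+W_3(n)$, this form collapses to $W_1^2-2W_2^2-2W_3^2-4W_4^2$ (both equal $-(W_2-W_3)^2-4W_4^2$). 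The additive constant $1$ comes from the zeroth-chaos part of the Wick self-contractions; its value is forced by the fact that $\mathcal L_n[4]$ is centred, and one checks this directly from Lemma \ref{semplice}, which gives $\E[W_1^2]\to1$, $\E[W_2^2]=\E[W_3^2]=\tfrac{3+\widehat\mu_n(4)}8$ and $\E[W_4^2]=\tfrac{1-\widehat\mu_n(4)}8$, whence $\E[\,1+W_1^2-2W_2^2-2W_3^2-4W_4^2\,]\to0$. The fluctuating second-chaos parts of the self-contractions have variance $O(1/\mathcal N_n)$ and are absorbed into $o_{\mathbb P}(1)$.

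The main obstacle is controlling the non-degenerate quadruples, whose total contribution $R_n$ must satisfy $R_n\big/\sqrt{E_n/(512\mathcal N_n^2)}\to0$ in $L^2(\mathbb P)$, hence in probability. By orthogonality of distinct Wick monomials and the boundedness of the normalised derivative weights $\lambda^{(i)}_j/\sqrt n$, one has $\Var(R_n)\asymp (E_n/\mathcal N_n^4)\cdot\#\{\text{non-degenerate solutions of }\lambda^{(1)}+\cdots+\lambda^{(4)}=0\}$, so the normalised remainder has variance of order $\#\{\cdots\}/\mathcal N_n^2$. The required input is therefore that the number of non-degenerate length-$4$ spectral correlations is $o(\mathcal N_n^2)$; this follows from the estimate $r(v):=\#\{(\lambda,\lambda')\in\Lambda_n^2:\lambda+\lambda'=v\}\le2$ for $v\neq0$ and the cancellation of the leading $\mathcal N_n^2$ terms against the degenerate count (the arithmetic fact established in the framework of \cite{KKW,KW}). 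Granting this, the normalised $R_n$ is $o_{\mathbb P}(1)$, and combining with the diagonal computation yields exactly the representation claimed in Lemma \ref{p:formula4}.
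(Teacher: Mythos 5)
Your proposal is correct and follows essentially the same route as the paper: the $q=2$ specialisation of the chaos expansion with the same coefficient values $\alpha_{\cdot,\cdot},\beta_\cdot$, the frequency-domain expansion of the six integrals, the reduction of the quadruple sums via the structure of the length-$4$ spectral correlation set (the paper quotes this as Lemma \ref{lem:S4 fine struct} from \cite{KKW}, under which your ``non-degenerate'' remainder is in fact exactly empty, so the $o(\mathcal N_n^2)$ bound you ask for holds trivially), and the identification of the resulting quadratic form in $W_1,\dots,W_4$. The only cosmetic difference is that you retain the cross terms $W_1W_j$ and $W_2W_3$ and collapse them at the very end via the exact identity $W_1=W_2+W_3$, whereas the paper substitutes $\lambda_1^2+\lambda_2^2=n$ inside each integral (Lemmas \ref{lem3} and \ref{lem4}); the two bookkeepings agree, and your constant term and $o_{\P}(1)$ control match the paper's law-of-large-numbers treatment of the $\sum_\lambda|a_\lambda|^4$ terms.
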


The proof of Lemma \ref{p:formula4} will be given in \S \ref{sec:proof expl proj C2}.

\begin{lemma}\label{p:clt} Assume that the subsequence $\{n_j\}\subseteq S$ is such that $\Nc_{n_j} \to +\infty$ and $\widehat{\mu}_{n_j}(4) \to \eta\in [-1,1]$.
Then, as $n_j\to \infty$, the following CLT holds:
\begin{eqnarray}\label{e:punz}
W(n_j) \stackrel{\rm d}{\longrightarrow} Z(\eta)= \left(
\begin{array}{c}
Z_{1} \\
Z_{2} \\
Z_{3} \\
Z_{4}%
\end{array}%
\right ),
\end{eqnarray}
where $Z(\eta)$ is a centered Gaussian vector with covariance
\begin{equation}\label{e:sig}
\Sigma=\Sigma(\eta) =\left(
\begin{array}{cccc}
1 & \frac{1}{2} & \frac{1}{2} & 0 \\
\frac{1}{2} & \frac{3+\eta}{8}  & \frac{1-\eta}{8}  & 0 \\
\frac{1}{2} & \frac{1-\eta}{8}  & \frac{3+\eta}{8}  & 0 \\
0 & 0 & 0 & \frac{1-\eta}{8}
\end{array}%
\right).
\end{equation}%
The eigenvalues of $\Sigma $ are $ 0,\frac{3}{2},%
\frac{1-\eta}{8},\frac{1+\eta}{4}$ and hence, in particular, $\Sigma$ is singular.
\end{lemma}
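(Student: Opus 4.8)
The plan is to exploit the fact that, by construction, $W(n)$ is a \emph{normalized sum of independent centered random variables}, and then to obtain the multivariate CLT \eqref{e:punz} through the Cram\'er--Wold device combined with a one-dimensional Lyapunov criterion. First I would record the probabilistic structure. For $\lambda$ with $\lambda_2>0$ set $\xi_\lambda := |a_\lambda|^2-1$ and $v_\lambda := \frac{1}{n\sqrt{\mathcal N_n/2}}\,(n,\lambda_1^2,\lambda_2^2,\lambda_1\lambda_2)^T$, so that $W(n)=\sum_{\lambda_2>0}\xi_\lambda v_\lambda$. Since the constraint $\lambda_2>0$ selects exactly one representative out of each pair $\{\lambda,-\lambda\}$, property {\bf (b)} of the coefficients forces the $\xi_\lambda$ occurring in the sum to be mutually independent; moreover each $|a_\lambda|^2$ is a mean-one exponential variable, whence $\E[\xi_\lambda]=0$ and $\E[\xi_\lambda^2]=1$. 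Consequently $\Sigma_n:=\mathrm{Cov}(W(n))=\sum_{\lambda_2>0}v_\lambda v_\lambda^T$, and the whole argument reduces to (i) showing $\Sigma_n\to\Sigma(\eta)$ and (ii) checking a Lyapunov condition.

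For (i) I would compute the distinct entries of the symmetric matrix $\Sigma_n$. The entries $(1,4),(2,4),(3,4)$ vanish identically, because $\Lambda_n$ is invariant under $\lambda_1\mapsto-\lambda_1$ while the relevant summands ($\lambda_1\lambda_2$, $\lambda_1^3\lambda_2$, $\lambda_1\lambda_2^3$) are odd in $\lambda_1$. For the remaining entries I would pass from the sum over $\{\lambda_2>0\}$ to one-half of the sum over all of $\Lambda_n$; the discrepancy is carried only by the at most two axis points with $\lambda_2=0$, whose contribution is $O(1/\mathcal N_n)$ and hence negligible as $\mathcal N_{n_j}\to\infty$. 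Then the entry $(1,1)$ tends to $1$ trivially, the entries $(1,2),(1,3)$ reduce to $\frac12$ via Lemma \ref{lemmavar}, and the entries $(2,2),(3,3),(2,3),(4,4)$ are read off from Lemma \ref{semplice} as $\frac{3+\widehat\mu_n(4)}{8},\frac{3+\widehat\mu_n(4)}{8},\frac{1-\widehat\mu_n(4)}{8},\frac{1-\widehat\mu_n(4)}{8}$. Letting $\widehat\mu_{n_j}(4)\to\eta$ yields $\Sigma_{n_j}\to\Sigma(\eta)$.

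For (ii) I would invoke Cram\'er--Wold: fix $t\in\R^4$ and consider the scalar sum $\langle t,W(n)\rangle=\sum_{\lambda_2>0}\xi_\lambda\langle t,v_\lambda\rangle$ of independent centered variables with total variance $t^T\Sigma_n t\to t^T\Sigma t$. If $t^T\Sigma t>0$ the classical one-dimensional Lyapunov CLT applies, since the uniform bound $\|v_\lambda\|_\infty\le\sqrt{2/\mathcal N_n}$ forces the third-moment Lyapunov ratio to be $O(\mathcal N_n^{-1/2})\to0$; if $t^T\Sigma t=0$ then $\langle t,W(n)\rangle\to0$ in $L^2$, hence in law to the degenerate limit. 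In both cases $\langle t,W(n)\rangle\stackrel{\mathrm{d}}{\longrightarrow}N(0,t^T\Sigma t)$, which is exactly the law of $\langle t,Z(\eta)\rangle$, and \eqref{e:punz} follows. The only genuinely delicate point is the singularity of $\Sigma$, which rules out a direct appeal to a nondegenerate CLT; the Cram\'er--Wold route circumvents this precisely by treating the null directions of $\Sigma$ separately via $L^2$-convergence.

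Finally, to identify the spectrum I would use the block structure of $\Sigma(\eta)$: the fourth coordinate decouples and contributes the eigenvalue $\frac{1-\eta}{8}$, while the $3\times3$ principal block is handled through its symmetry under exchanging the second and third coordinates. The antisymmetric vector $(0,1,-1)^T$ is an eigenvector with eigenvalue $\frac{1+\eta}{4}$; on the complementary symmetric plane the block acts with rank one and trace $\frac32$, giving the eigenvalues $0$ (kernel spanned by $(1,-1,-1)^T$) and $\frac32$. This produces the full list $0,\frac32,\frac{1-\eta}{8},\frac{1+\eta}{4}$, so $\Sigma(\eta)$ is singular, as claimed.
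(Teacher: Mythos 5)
Your proof is correct, but it follows a genuinely different route from the paper's. The paper invokes the second-Wiener-chaos criterion of Nourdin--Peccati (\cite[Theorem~6.2.3]{N-P}): since each $W_k(n_j)$ lives in the second chaos of ${\bf A}$, joint Gaussian convergence follows from (i) convergence of the covariance matrices to $\Sigma(\eta)$ and (ii) \emph{one-dimensional} convergence of each component, the latter obtained from a quantitative Lindeberg criterion applied to the sum of i.i.d.\ variables $|a_\lambda|^2-1$ with uniformly vanishing coefficients. The special feature being exploited there is that, inside a fixed Wiener chaos, marginal asymptotic normality upgrades automatically to joint asymptotic normality. You instead bypass the chaos machinery entirely: writing $W(n)=\sum_{\lambda_2>0}\xi_\lambda v_\lambda$ as a triangular array of independent centered vectors, you apply Cram\'er--Wold and verify Lyapunov's condition for each linear combination (with the degenerate directions $t^T\Sigma t=0$ handled by $L^2$-convergence, which is the right way to deal with the singularity of $\Sigma$). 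This is more elementary and self-contained; the price is that you must check the Lyapunov ratio for every $t$, though here the uniform bound $\|v_\lambda\|_\infty\le\sqrt{2/\Nc_n}$ makes that immediate. Your covariance computation (vanishing of the $(1,4),(2,4),(3,4)$ entries by the $\lambda_1\mapsto-\lambda_1$ symmetry, the $O(1/\Nc_n)$ correction from axis points when $n$ is a perfect square, and the remaining entries via Lemmas~\ref{lemmavar} and~\ref{semplice}) is in fact more detailed than what the paper records, and your spectral analysis of $\Sigma(\eta)$ via the swap symmetry of the second and third coordinates (eigenvector $(0,1,-1)^T$ with eigenvalue $\tfrac{1+\eta}{4}$, kernel vector $(1,-1,-1)^T$, and the trace argument giving $\tfrac32$) checks out.
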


\noindent\begin{proof}{According to \cite[Theorem 6.2.3]{N-P}, in order to achieve the desired conclusion it is sufficient to prove the following relations: (a) for every fixed integer $n_j$, each component of the vector $W(n_j)$ is an element of the second Wiener chaos associated with ${\bf A}$ (see \S \ref{ss:berryintro}), (b) as $n_j\to \infty$, the covariance matrix of $W(n_j)$ converges to $\Sigma$, and (c) for every $k=1,2,3,4$, as $n_j\to \infty$, one has that $W_k(n_j)$ converges in distribution to a one-dimensional centered Gaussian random variable. Part (a) is trivially verified. Part (b) follows by a direct computation based on Lemma \ref{semplice},
as well as on the fact that the random variables in the set $$\left\{
|a_{\lambda }|^{2}-1  : \lambda \in \Lambda_{n_j}, \, \lambda_2> 0 \right\}$$ are centered,
independent, identically distributed and with unit variance. To prove part (c),
write $\Lambda_{n_j}^+ := \{ \lambda \in \Lambda_{n_j}, \, \lambda_2> 0\}$ and observe that,
for every $k$ and every $n_j$, the random variable $W_k(n_j)$ is of the form
$$
W_k(n_j) = \sum_{\lambda \in \Lambda_{n_j}^+} c_k(n_j, \lambda)\times  (|a_{\lambda }|^{2}-1)
$$
where $\{c_k(n_j, \lambda)\}$ is a collection of positive deterministic coefficients
such that $$\max_{\lambda \in  \Lambda_{n_j}^+} c_k(n_j, \lambda)\to 0,$$ as $n_j\to\infty$.
An application of the Lindeberg criterion, e.g. in the quantitative form stated
in \cite[Proposition 11.1.3]{N-P}, yields that $W_k(n_j)$ converges in distribution
to a Gaussian random variable, and therefore that (c) also holds. This proves \eqref{e:punz}. Since it is easy to verify
the claimed eigenvalues of $\Sigma$ in \paref{e:sig} via an explicit computation, this concludes the proof of Lemma \ref{p:formula4}.
}
\end{proof}

\subsection{Proof of Proposition \ref{prop:main result on proj4}:
asymptotic behaviour of $\Lc_n[4]$}\label{proof4}

\begin{proof}[Proof of Proposition \ref{prop:main result on proj4} assuming Lemma \ref{p:formula4}]\label{p:4nclt} Let $\{n_j\} \subseteq S$ be such that $\Nc_{n_j}\to \infty$ and $|\widehat{\mu}_{n_j}(4)|\to \eta\in [0,1]$. For each subsequence $\{ n'_j \}\subseteq \{n_j\}$ there exists a subsubsequence $\{n''_j \}\subseteq \{n'_j\}$ such that it holds either (i) $\widehat{\mu}_{n''_j}(4)\to \eta$ or
(ii) $\widehat{\mu}_{n''_j}(4)\to -\eta$. Set
\begin{equation*}
v(n''_j) :=\sqrt{\frac{E_{n''_j}}{512\,\Nc^2_{n''_j}}}, \quad j\geq 1.
\end{equation*}
Then, as $n''_j\to \infty$
\begin{equation}\label{e:4nclt}
Q(n''_j) := \frac{\mathcal{L}_{n''_j}[4]}{v(n''_j) } \stackrel{\rm d}{\longrightarrow} 1+Z_1^2-2Z_2^2-2Z_3^2-4Z_4^2,
\end{equation}
by Lemma \ref{p:clt} and Lemma \ref{p:formula4};
here $Z = Z(\eta)\in \R^{4}$ is as in \eqref{e:punz}, i.e. a centred Gaussian $4$-variate vector with covariance matrix as in \paref{e:sig}.


\noindent Actually, the multidimensional CLT stated in \eqref{e:punz} implies that
$$
(W_1(n_j)^2, W_2(n_j)^2, W_3(n_j)^2, W_4(n_j)^2) \stackrel{\rm d}{\longrightarrow} (Z_1^2, Z_2^2, Z_3^2, Z_4^2).
$$
A simple computation of Gaussian moments now yields%
\begin{eqnarray}\nonumber
&& \Var\left ( {Z}_{1}^{2}-2{Z}_{2}^{2}-2{Z}_{3}^{2}-4{Z}_{4}^{2}\right )
\\ \nonumber
\\ \nonumber
&&=2+8\left (\frac{3+\eta}{8} \right )^{2}+8\left (\frac{3+\eta}{8} \right )^{2}+32\left (\frac{1-\eta}{8} \right  )^{2}-2-2+4\left (\frac{1-\eta}{4} \right )^{2}
= \eta^2+1,\label{e:magic}
\end{eqnarray}
entailing in particular that $\Var\left ( {Z}_{1}^{2}-2{Z}_{2}^{2}-2{Z}_{3}^{2}-4{Z}_{4}^{2}\right )$ is the same in both cases (i) -- (ii). We can rewrite \paref{e:4nclt} as, for $n''_j\to +\infty$,
\begin{equation*}
\frac{ \mathcal{L}_{n''_j}[4]}{\sqrt{1+\eta^2}\,v(n''_j) } \stackrel{\rm d}{\longrightarrow} \frac{1}{\sqrt{1+\eta^2}}(1+Z_1^2-2Z_2^2-2Z_3^2-4Z_4^2).
\end{equation*}
We claim that, in both cases (i)--(ii), the random variable
$$
\frac{1}{\sqrt{1+\eta^2}}\Big( 1+Z_1^2-2Z_2^2-2Z_3^2-4Z_4^2 \Big)
$$
has the same law as $\mathcal{M}_{\eta}$, as defined in $\eqref{e:r}$.

To verify this, let $\widetilde \Sigma=\widetilde \Sigma(\eta)$ be the covariance matrix of $(\widetilde Z_1,\widetilde Z_2,\widetilde Z_3)$, where $\widetilde Z_i = \frac{Z_i}{\sqrt{\Var(Z_i)}}$ for $i=1,2,3$, i.e.
\begin{equation*}
\widetilde \Sigma(\eta)=\left(\begin{matrix}
1 & \frac{\sqrt{2}}{\sqrt{3+\eta }} & \frac{\sqrt{2}}{\sqrt{3+\eta }} \\
\frac{\sqrt{2}}{\sqrt{3+\eta }} & 1 & \frac{1-\eta}{3+\eta} \\
\frac{\sqrt{2}}{\sqrt{3+\eta }} & \frac{1-\eta}{3+\eta} & 1
\end{matrix}
\right).
\end{equation*}
We diagonalize $\widetilde \Sigma$:
\begin{equation*}
\widetilde \Sigma = ADA^{t},
\end{equation*}
where $A$ is the orthogonal matrix
\begin{equation*}
A = \left(\begin{matrix}
\frac{\sqrt{3+\eta}}{\sqrt{\eta+7}} &0 &-\frac{2}{\sqrt{\eta+7}} \\
\frac{\sqrt{2}}{\sqrt{\eta+7}}  &\frac{1}{\sqrt{2}} &\frac{\sqrt{3+\eta}}{\sqrt{2}\sqrt{\eta+7}}\\
\frac{\sqrt{2}}{\sqrt{\eta+7}} &-\frac{1}{\sqrt{2}} &\frac{\sqrt{3+\eta}}{\sqrt{2}\sqrt{\eta+7}}
\end{matrix} \right),
\end{equation*}
and
\begin{equation*}
D = \operatorname{diag}\left(\frac{\eta+7}{3+\eta}, \frac{2(1+\eta)}{3+\eta}, 0\right).
\end{equation*}
Hence
$\widetilde Z_{1}=X_{1}$, $$\widetilde Z_{2}=\frac{\sqrt{2}}{\sqrt{3+\eta}}X_{1} + \frac{\sqrt{1+\eta}}{\sqrt{3+\eta}}X_2,$$
$$\widetilde Z_{3}=\frac{\sqrt{2}}{\sqrt{3+\eta}}X_{1} - \frac{\sqrt{1+\eta}}{\sqrt{3+\eta}}X_{2},$$
where $X=(X_{1},X_{2})\in\R^{2}$ is a bivariate standard Gaussian random vector.
Let us now set
\begin{equation*}
\psi=\frac{3+\eta}{8}.
\end{equation*}
Adding $X_{3}$ as one more standard Gaussian, independent of $(X_1,X_2)$,  we find
\begin{equation*}
\begin{split}
&\frac{1}{\sqrt{1+\eta^2}}  (1+Z_1^2-2Z_2^2-2Z_3^2-4Z_4^2)\cr
&= \frac{1}{\sqrt{64\psi^2 -48\psi + 10}} \bigg (
X_{1}^{2}-2\psi \left( \frac{\sqrt{2}}{\sqrt{3+\eta}}X_{1}+\frac{\sqrt{1+\eta}}{\sqrt{3+\eta}}X_{2}\right)^{2}-
2\psi \left( \frac{\sqrt{2}}{\sqrt{3+\eta}}X_{1}-\frac{\sqrt{1+\eta}}{\sqrt{3+\eta}}X_{2}\right)^{2}
\\&-4\left (\frac{1}{2}-\psi \right )X_{3}^{2} + 1 \bigg )
\\&= \frac{1}{\sqrt{1+\eta^{2}}}
\left (X_{1}^{2}-\frac{1}{2}(3+\eta) \left( \frac{2}{3+\eta}X_{1}^{2}+\frac{1+\eta}{3+\eta}X_{2}^{2}\right)
-\frac{1}{2}(1-\eta)X_{3}^{2} + 1 \right )
\\&= \frac{1}{2\sqrt{1+\eta^{2}}}\left(2-(1+\eta)X_{2}^{2} -(1-\eta)X_{3}^{2} \right ) \mathop{=}^d \mathcal M_\eta.
\end{split}
\end{equation*}
In particular, we have therefore that
\begin{equation}\label{newnice}
K(n_j) := \frac{ \mathcal{L}_{n_j}[4]}{\sqrt{1+\eta^2}\,v(n_j) } \stackrel{\rm d}{\longrightarrow} \mathcal M_\eta.
\end{equation}
To conclude the proof, we observe that, since $\{K(n_j)\}$ is a sequence of random variables belonging to a fixed Wiener chaos and converging in distribution, one has necessarily that (by virtue e.g. of \cite[Lemma 2.1-(ii)]{N-R})
$$
\sup_{n_j}\mathbb{E} | K(n_j)|^p <\infty, \quad \forall p>0.
$$
Standard arguments based on uniform integrability yield therefore that, as $n_j\to\infty$,
$$
\frac{{\rm Var}(\mathcal{L}_{n_j}[4])}{v(n_j)^2(1+\eta^2)} = \E(K(n_j)^2) \to  \E(\mathcal{M}_\eta^2) =1,
$$
which is the same as \eqref{perlavar}.

\end{proof}

\subsection{Proof of Proposition \ref{eq:Lproj4 replace L}: $\mathcal{L}_{n_j}[4]$ dominates $\mathcal L_{n_j}$}

Now we are able to prove one of the main findings in this paper, i.e. that the fourth-chaotic projection and the total nodal length have the same asymptotic behavior.

\smallskip

\noindent \begin{proof}[Proof] 
Let us first prove \paref{vaar}.
Note that \paref{eq:var leading KKW2} and
 Proposition \ref{prop:main result on proj4} immediately give \paref{vaar} i.e., as $\mathcal N_{n_j}\to +\infty$,
\begin{equation*}
\Var(\mathcal L_{n_j})\sim \Var(\mathcal L_{n_j}[4]).
\end{equation*}
%
Now, since different chaotic projections are orthogonal in $L^2$, from part (b) of Proposition \ref{teoexp} we have
\begin{equation}\label{varexp}
\Var(\mathcal L_{n_j})  = \Var(\mathcal L_{n_j}[4]) + \sum_{q=3}^{+\infty} \Var(\mathcal L_{n_j}(2q)).
\end{equation}
%
%
%
Dividing both sides of \paref{varexp} by $\Var(\mathcal L_{n_j}[4])$, we immediately  conclude the proof of Proposition \ref{eq:Lproj4 replace L}.

\end{proof}

\section{Proof of Lemma \ref{p:formula4}: explicit formula for $\mathcal{L}_{n_j}[4]$}\label{lemmaFor}

\label{sec:proof expl proj C2}

Consider the following representation of $\mathcal{L}_{n_j}[4]$, that is a particular case $q=2$ of \eqref{e:pp}:
\begin{eqnarray}\label{e:nus}
\mathcal{L}_{n_j}[4]&=&\sqrt{4\pi^2}\sqrt{\frac{n}{2}}\Big(\frac{\alpha_{0,0}\beta_4}{4!}\int_{\mathbb
T} H_4(T_n(x))\,dx\\ \notag
&&+\frac{\alpha_{0,4}\beta_0}{4!}\int_{\mathbb T} H_4(\partial_2
\widetilde T_n(x))\,dx
+\frac{\alpha_{4,0}\beta_0}{4!}\int_{\mathbb T} H_4(\partial_1
\widetilde T_n(x))\,dx+\\ \notag
&&+\frac{\alpha_{0,2}\beta_2}{2!2!}\int_{\mathbb T}
H_2(T_n(x))H_2(\partial_2 \widetilde
T_n(x))\,dx\\
\notag
&&+\frac{\alpha_{2,0}\beta_2}{2!2!}\int_{\mathbb T}
H_2(T_n(x))H_2(\partial_1 \widetilde
T_n(x))\,dx+\\ \notag
&& +\frac{\alpha_{2,2}\beta_0}{2!2!}\int_{\mathbb T}
H_2(\partial_1 \widetilde T_n(x))H_2(\partial_2 \widetilde
T_n(x))\,dx \Big),
\end{eqnarray}
where the coefficients $\alpha_{\cdot, \cdot}$ and $\beta_{\cdot}$ are defined according to equation \eqref{e:alpha} and equation \eqref{e:beta}, respectively.

\subsection{Auxiliary results}

The next four lemmas yield a useful representation for the six summands appearing on the right-hand side of \eqref{e:nus}. In what follows, $n\in S$ 
and, moreover, to simplify the discussion we will sometimes use the shorthand
\begin{eqnarray*}
&& \sum_{\lambda} = \sum_{\substack{\lambda = (\lambda_1,\lambda_2)\in \Lambda_{n} }}, \quad\quad \sum_{\lambda, \lambda'} = \sum_{\substack{\lambda, \lambda' \in \Lambda_{n} }}  \quad \quad\mbox{and}\quad \quad \sum_{\lambda : \lambda_2> 0} = \sum_{\substack{\lambda = (\lambda_1,\lambda_2)\in \Lambda_{n} \\ \lambda_2> 0 }},
\end{eqnarray*}
in such a way that the exact value of the integer $n$ will always be clear from the context. Also, the symbol $\{n_j\}$ will always denote a subsequence of integers contained in $S$ such that $\Nc_{n_j}\to \infty$ and $\widehat{\mu}_{n_j} (4) \to \eta\in [-1,1]$, as $n_j\to \infty$. As before, we write `$\stackrel{\P}\longrightarrow$' to denote convergence in probability, and we use the symbol $o_{\P}(1)$ to denote a sequence of random variables converging to zero in probability, as $\mathcal{N}_n\to\infty$.

Following \cite{KKW}, we will abundantly use the fine structure of
the {\it length-$4$ spectral correlation set}:
\begin{equation}
\label{eq:S4 def}
S_n(4) := \lbrace  (\lambda, \lambda', \lambda'', \lambda''')\in (\Lambda_n)^4 :
\lambda + \dots + \lambda''' = 0          \rbrace.
\end{equation}

\begin{lemma}[\cite{KKW}, p. $31$]
\label{lem:S4 fine struct}
Let $S_{n}(4)$ be the length-$4$ spectral correlation set defined in \eqref{eq:S4 def}. Then
$S_{n}(4)$ is the disjoint union
\begin{equation*}
S_{n}(4) = A_{n}(4)\cup B_{n}(4),
\end{equation*}
where $A_{n}(4)$ is all the $3$ permutations of
\begin{equation*}
\widetilde{A}_{n}(4) = \{(\lambda,\lambda',-\lambda,-\lambda'):\: \lambda,\lambda'\in \Lambda_{n},\, \lambda\ne\lambda'\},
\end{equation*}
and $B_{n}(4)$ is all the $3$ permutations of
\begin{equation*}
\widetilde{B}_{n}(4) = \{(\lambda,\lambda,-\lambda,-\lambda):\: \lambda \in \Lambda_{n}\}.
\end{equation*}
In particular, using the inclusion-exclusion principle,
\begin{equation*}
| S_{n}(4)| =3\Nc_{n}(\Nc_{n}-1).
\end{equation*}
\end{lemma}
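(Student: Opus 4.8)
The plan is to reduce the entire statement to one algebraic fact: any four vectors sharing a common length $\sqrt n$ whose sum vanishes must split into two antipodal pairs. Once this is established, the description of $S_n(4)$ and its cardinality follow by elementary counting.

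First I would prove the key fact. Identifying $\R^2$ with $\C$, I write the four frequencies as nonzero complex numbers $z_1,z_2,z_3,z_4$ with $|z_i|^2=n$ for every $i$ (this is exactly the requirement $\lambda_i\in\Lambda_n$). The hypothesis $z_1+z_2+z_3+z_4=0$ says that the first elementary symmetric function $e_1$ vanishes. Since $|z_i|^2=n$ gives $\bar z_i=n/z_i$, conjugating the relation yields $\sum_i n/z_i=0$; because $e_4=\prod_i z_i\neq 0$, this is equivalent to $e_3=e_4\sum_i z_i^{-1}=0$. Hence the monic degree-four polynomial with roots $z_1,\dots,z_4$ is $z^4+e_2z^2+e_4$, which is even, and the substitution $w=z^2$ shows that its roots occur in antipodal pairs $\{a,-a\}$ and $\{b,-b\}$, where $a,b\in\Lambda_n$ and $-a,-b\in\Lambda_n$ by the symmetry of $\Lambda_n$ under negation. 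This is the geometric heart of the lemma, and essentially all the content resides here.

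Next I would organize the zero-sum tuples according to the position pattern of the antipodal pairs. Every solution has, as a multiset of entries, the form $\{a,-a,b,-b\}$; the two positions carrying a mutually antipodal pair single out one of the three partitions of $\{1,2,3,4\}$ into two pairs, namely $\{\{1,3\},\{2,4\}\}$, $\{\{1,2\},\{3,4\}\}$ and $\{\{1,4\},\{2,3\}\}$. These three patterns are precisely the three permuted copies of $\widetilde A_n(4)$ when the two antipodal pairs are distinct, and the corresponding copies of $\widetilde B_n(4)$ when they coincide (i.e. $b=\pm a$). This gives $S_n(4)=A_n(4)\cup B_n(4)$; the union is disjoint because the tuples in $A_n(4)$ have four distinct entries, whereas those in $B_n(4)$ have a single antipodal pair repeated.

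Finally, for the cardinality I would apply inclusion–exclusion to the three position patterns, viewed without nondegeneracy restrictions. Each pattern, e.g. $(\lambda,\lambda',-\lambda,-\lambda')$ with $\lambda,\lambda'$ ranging freely over $\Lambda_n$, contributes exactly $\mathcal N_n^2$ tuples, so the three patterns contribute $3\mathcal N_n^2$. Any two distinct patterns intersect precisely in the degenerate tuples with one antipodal pair repeated (for instance $(\lambda,-\lambda,-\lambda,\lambda)$), of which there are $\mathcal N_n$, and there are three such pairwise intersections. The triple intersection is empty, since simultaneous antipodality under all three patterns would force $\lambda=-\lambda$, impossible because $0\notin\Lambda_n$. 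Therefore $|S_n(4)|=3\mathcal N_n^2-3\mathcal N_n=3\mathcal N_n(\mathcal N_n-1)$. The main obstacle is the antipodal-pairing fact above; the rest is routine bookkeeping, where the only care needed is to attribute the degenerate overlaps to $B_n(4)$ so as not to double-count them.
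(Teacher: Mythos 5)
The paper gives no proof of this lemma---it is imported verbatim from \cite{KKW}---so your argument has to stand on its own, and it does. The heart of the matter, that a zero-sum quadruple of vectors of common length $\sqrt{n}$ must split into two antipodal pairs, is exactly the right key fact, and your derivation is correct: $e_1=0$ together with $\bar z_i=n/z_i$ and $e_4\neq 0$ forces $e_3=0$, so the quartic $\prod_i(z-z_i)=z^4+e_2z^2+e_4$ is even and its root multiset (which omits $0$) is closed under negation, hence pairs up antipodally. This symmetric-function route is a clean algebraic variant of the geometric argument used in \cite{KKW} (where one observes that $z_1+z_2=-(z_3+z_4)$ determines the unordered pairs on the circle whenever the partial sum is nonzero, and the case of vanishing partial sum is antipodal by definition); both are equally valid. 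Your inclusion--exclusion count is also correct: three unrestricted position patterns of size $\Nc_n^2$, three pairwise intersections each consisting of the $\Nc_n$ tuples whose multiset is a doubled antipodal pair, and an empty triple intersection (which would force $\lambda=-\lambda$), giving $3\Nc_n^2-3\Nc_n=3\Nc_n(\Nc_n-1)$. The one imprecision is your justification of disjointness: as $\widetilde{A}_{n}(4)$ is literally defined, the choice $\lambda'=-\lambda$ is allowed and produces tuples such as $(\lambda,-\lambda,-\lambda,\lambda)$ with only two distinct entries, so it is not true that every tuple arising from the $A$-patterns has four distinct entries, and the three permuted copies of $\widetilde{A}_{n}(4)$ in fact overlap one another. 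You effectively acknowledge and repair this at the end by attributing the degenerate tuples to $B_{n}(4)$, and your cardinality computation is unaffected since it works directly with the three unrestricted patterns; this looseness is inherited from the (informal) statement of the lemma rather than introduced by you, and it is immaterial for the way the lemma is used later in the paper, where only the multiset structure of $S_n(4)$ and the count $3\Nc_n(\Nc_n-1)$ enter.
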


\begin{lemma}\label{lem1} One has the following representation:
\begin{equation}\label{aaa}
\int_{\mathbb T} H_{4}(T_n(x))\,dx
=\frac{6}{\Nc_{n}}\left (\frac{1}{\sqrt{\Nc_{n}/2}}\sum_{\lambda :\lambda
_{2}> 0}(|a_{\lambda }|^{2}-1) + o_{\P}(1)  \right )^{2}-\frac{3}{\Nc_{n}^{2}}\sum_{\lambda
}|a_{\lambda }|^{4}.
\end{equation}
Also, as $ {n_j}\to\infty$,
\begin{equation}\label{bbb}
\frac{3}{\Nc_{n_j}}\sum_{\lambda
}|a_{\lambda }|^{4}\stackrel{\P}\longrightarrow 6.
\end{equation}

\end{lemma}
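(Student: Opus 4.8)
The plan is to reduce the whole statement to the evaluation of the two integrals $\int_{\mathbb T} T_n(x)^2\,dx$ and $\int_{\mathbb T} T_n(x)^4\,dx$, using the identity $H_4(t)=t^4-6t^2+3$ and the orthogonality relation $\int_{\mathbb T} e_\mu(x)\,dx = \delta_{\mu,0}$ for the complex exponentials. For the quadratic piece I would simply recall from the computation leading to \eqref{eq:int H2der expl} that $\int_{\mathbb T} T_n(x)^2\,dx = \frac{1}{\mathcal N_n}\sum_\lambda |a_\lambda|^2$. For the quartic piece, expanding $T_n^4$ and integrating kills every exponential except those with vanishing total frequency, so that $\int_{\mathbb T} T_n(x)^4\,dx = \frac{1}{\mathcal N_n^2}\sum_{(\lambda,\lambda',\lambda'',\lambda''')\in S_n(4)} a_\lambda a_{\lambda'}a_{\lambda''}a_{\lambda'''}$, where $S_n(4)$ is the length-$4$ spectral correlation set.

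Next I would invoke Lemma \ref{lem:S4 fine struct} to split this sum along the two families of solutions. Tuples supported on two genuinely distinct antipodal pairs $\{\lambda,-\lambda\}\ne\{\lambda',-\lambda'\}$ contribute $a_\lambda a_{-\lambda}a_{\lambda'}a_{-\lambda'}=|a_\lambda|^2|a_{\lambda'}|^2$ (using $a_{-\lambda}=\overline{a_\lambda}$), while the degenerate tuples supported on a single antipodal pair contribute $a_\lambda^2 a_{-\lambda}^2=|a_\lambda|^4$. Counting the orderings of each multiset ($24$ for the first family, $6$ for the second, which incidentally reproduces $|S_n(4)|=3\mathcal N_n(\mathcal N_n-1)$ as a sanity check) and re-expressing the resulting pair-sums through the full sums $\Sigma_1:=\sum_\lambda|a_\lambda|^2$ and $\Sigma_2:=\sum_\lambda|a_\lambda|^4$ by inclusion--exclusion, I expect to arrive at the clean identity $\int_{\mathbb T} T_n(x)^4\,dx = \frac{3}{\mathcal N_n^2}\bigl(\Sigma_1^2-\Sigma_2\bigr)$.

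The heart of the argument is then a cancellation. Setting $D:=\Sigma_1-\mathcal N_n=\sum_\lambda(|a_\lambda|^2-1)$ and assembling $\int_{\mathbb T} H_4(T_n)\,dx = \int T_n^4 - 6\int T_n^2 + 3$, the constant and the first-order-in-$D$ contributions cancel exactly, leaving $\int_{\mathbb T} H_4(T_n(x))\,dx = \frac{3D^2}{\mathcal N_n^2}-\frac{3\Sigma_2}{\mathcal N_n^2}$. To match \eqref{aaa} I would relate $D$ to the half-sum over $\lambda_2>0$: since $|a_\lambda|^2=|a_{-\lambda}|^2$, the region $\{\lambda_2>0\}$ selects exactly one representative of each antipodal pair off the horizontal axis, whence $D = 2\sum_{\lambda:\lambda_2>0}(|a_\lambda|^2-1)+2R_n$, where $R_n$ gathers the single possible on-axis pair and so is $O_\P(1)$. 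Writing $D=\sqrt{2\mathcal N_n}\,U+2R_n$ with $U=\frac{1}{\sqrt{\mathcal N_n/2}}\sum_{\lambda:\lambda_2>0}(|a_\lambda|^2-1)$, the correction $2R_n/\sqrt{2\mathcal N_n}$ is $o_\P(1)$ as $\mathcal N_n\to\infty$, so $\frac{3D^2}{\mathcal N_n^2}=\frac{6}{\mathcal N_n}\bigl(U+o_\P(1)\bigr)^2$, which is precisely \eqref{aaa}. For \eqref{bbb} I would use a law of large numbers: a direct moment computation (with $|a_\lambda|^2$ standard exponential) gives $\E[|a_\lambda|^4]=2$, and grouping $\sum_\lambda|a_\lambda|^4=2\sum_{\{\lambda,-\lambda\}}|a_\lambda|^4$ as a sum of $\mathcal N_{n_j}/2$ i.i.d.\ terms of mean $2$ and finite variance, Chebyshev's inequality yields $\frac{1}{\mathcal N_{n_j}}\sum_\lambda|a_\lambda|^4\stackrel{\P}{\longrightarrow}2$, hence $\frac{3}{\mathcal N_{n_j}}\sum_\lambda|a_\lambda|^4\stackrel{\P}{\longrightarrow}6$.

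\textbf{The main obstacle} is the combinatorial evaluation of $\int_{\mathbb T}T_n^4\,dx$: one must enumerate correctly the orderings within each family of $S_n(4)$ and convert the ``distinct antipodal pair'' sums into $\Sigma_1^2$ and $\Sigma_2$ so that, upon forming $H_4$, the leading terms conspire to cancel and only the genuinely fluctuating quantity $D^2$ (together with the diagonal $\Sigma_2$) survives. By comparison, controlling the on-axis remainder $R_n$ to justify the $o_\P(1)$ and the law of large numbers behind \eqref{bbb} are routine.
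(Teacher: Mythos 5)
Your proposal is correct and follows essentially the same route as the paper: expand $H_4(t)=t^4-6t^2+3$, reduce $\int_{\mathbb T}T_n^4\,dx$ to a sum over the spectral correlation set $S_n(4)$, use its fine structure (Lemma \ref{lem:S4 fine struct}) to obtain $\frac{3}{\mathcal N_n^2}\bigl(\Sigma_1^2-\Sigma_2\bigr)$, observe the exact cancellation against $-6\int T_n^2+3$, pass to the half-sum over $\lambda_2>0$ with an $o_{\P}(1)$ correction from the possible on-axis pair, and conclude \eqref{bbb} by the law of large numbers with $\E[|a_\lambda|^4]=2$. The only (cosmetic) difference is that you organize the combinatorics by counting orderings of multisets ($24$ and $6$) rather than via the paper's ``three permutations'' phrasing, and your sanity check against $|S_n(4)|=3\mathcal N_n(\mathcal N_n-1)$ confirms the count is right.
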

\noindent\begin{proof}
Using the explicit expression $H_4(x)=x^4 - 6x^2 +3$, we deduce that
\begin{align}
\nonumber
\int_{\mathbb T} H_{4}(T_n(x))\,dx &= \int_{\mathbb T}
\left ( T_n(x)^4 -6T_n(x)^2  +3 \right)\,dx\\
\nonumber
&= \frac{1}{\mathcal N_n^2}
\sum_{\lambda, \dots, \lambda''' \in \Lambda_n} a_{\lambda} \overline a_{\lambda'} a_{\lambda''}
\overline a_{\lambda'''}
\int_{\mathbb T}
\exp(2\pi i\langle \lambda -\lambda' + \lambda'' - \lambda''', x\rangle )\,dx+\\
\nonumber
 & -\, 6\frac{1}{\mathcal N_n}
\sum_{\lambda, \lambda' \in \Lambda_n} a_{\lambda}
\overline a_{\lambda'}
\int_{\mathbb T}\exp(2\pi i\langle \lambda - \lambda', x\rangle )\,dx   +3 \\
\label{eq:int H4 expl}
&= \frac{1}{\mathcal N_n^2}
\sum_{ \lambda - \lambda' + \lambda'' - \lambda'''=0} a_{\lambda} \overline a_{\lambda'} a_{\lambda''}
\overline a_{\lambda'''}
  -6\frac{1}{\mathcal N_n}
\sum_{\lambda \in \Lambda_n} |a_{\lambda}|^2
 +3,
\end{align}
where the summation with the subscript $ \lambda - \lambda' + \lambda'' - \lambda'''=0$ is over
$(\lambda, -\lambda', \lambda'', -\lambda''')\in S_n(4)$. By the fine structure of $S_n(4)$ described in Lemma \ref{lem:S4 fine struct}, the right-hand side of
 \eqref{eq:int H4 expl} simplifies to
\begin{eqnarray*}
\int_{\mathbb T} H_{4}(T_n(x))\,dx&=&3\frac{1}{\mathcal N_n^2}\Big( \sum_{\lambda, \lambda'\in \Lambda_n} |a_\lambda|^2 |a_{\lambda'}|^2
   -\sum_\lambda |a_{\lambda}|^4   \Big)
  -6\frac{1}{\mathcal N_n}
\sum_{\lambda \in \Lambda_n} |a_{\lambda}|^2 +3 \\
&& =3\frac{1}{\mathcal N_n} \Big( \frac{1}{\sqrt{\mathcal N_n}}
   \sum_{\lambda \in \Lambda_n} (|a_\lambda|^2-1 ) \Big)^2 -
   3\frac{1}{\mathcal N_n^2}
   \sum_{\lambda \in \Lambda_n} |a_\lambda|^4\\
   && =  \frac{6}{\Nc_{n}}\left (\frac{1}{\sqrt{\Nc_{n}/2}}\sum_{\lambda
:\lambda _{2}> 0}(|a_{\lambda }|^{2}-1) +o_{\P}(1) \right
)^{2}-\frac{3}{\Nc_{n}^{2}}\sum_{\lambda }|a_{\lambda }|^{4},
\end{eqnarray*}
where $o_\P(1)=0$ if $n^{1/2}$ is not an integer, otherwise
 $$
o_{\P}(1) =   (\Nc_{n_j}/2)^{-1/2} (|a_{(n^{1/2}, 0)}|^2-1),
$$ thus yielding \paref{aaa} immediately.
The limit \paref{bbb} follows from a standard application of the law of large numbers to the sum,
$$
\frac{3}{\Nc_{n_j}}\sum_{\lambda }|a_{\lambda }|^{4} =\frac{3}{\Nc_{n_j}/2}\sum_{\lambda : \lambda_2> 0}|a_{\lambda }|^{4} + o_{\P}(1),
$$
as well all the variables $a_{\lambda}$ are i.i.d with $$\E\left[ |a_\lambda|^4\right] =2.$$
\end{proof}
\begin{lemma}\label{lem2} For $\ell =1,2$,
$$\displaylines{
\int_{\mathbb T} H_{4}(\partial_\ell \widetilde T_n(x))\,dx
= \frac{24}{\Nc_{n}}\left( \frac{1}{\sqrt{\Nc_{n}/2}}%
\sum_{\lambda ,\lambda _{2}> 0}\left (\frac{\lambda _{\ell}^{2}}{n}\left( |a_{\lambda
}|^{2}-1\right) \right) + o_{\P}(1) \right) ^{2} -\left( \frac{2}{n}\right) ^{2}\frac{3}{\Nc_{n}^{2}}\sum_{\lambda }\lambda
_{\ell}^{4}|a_{\lambda }|^{4}.}$$
Moreover, as $n_j\to \infty$,
\[
\left( \frac{2}{n_j}\right) ^{2}\frac{3}{\Nc_{n_j}}\sum_{\lambda }\lambda
_{\ell}^{4}|a_{\lambda }|^{4} \stackrel{\P}{\longrightarrow} 3(3+\eta).
\]
\end{lemma}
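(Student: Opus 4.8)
The plan is to follow \emph{verbatim} the strategy of the proof of Lemma \ref{lem1}, replacing the field $T_n$ by its normalised partial derivative $\partial_\ell \widetilde T_n$ and carrying the extra frequency weights $\lambda_\ell$ through every step. First I would use $H_4(t)=t^4-6t^2+3$ to split $\int_{\mathbb T}H_4(\partial_\ell\widetilde T_n(x))\,dx$ into $\int_{\mathbb T}(\partial_\ell\widetilde T_n)^4\,dx-6\int_{\mathbb T}(\partial_\ell\widetilde T_n)^2\,dx+3$. Inserting the explicit expansion \eqref{e:norma}, the factor $i^4=1$ (resp.\ $i^2=-1$) appears, and integrating the complex exponentials over $\mathbb T$ forces the frequencies to sum to zero. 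This turns the quartic term into $(2/n)^2\Nc_n^{-2}\sum_{(\lambda,\lambda',\lambda'',\lambda''')\in S_n(4)}\lambda_\ell\lambda'_\ell\lambda''_\ell\lambda'''_\ell\,a_\lambda a_{\lambda'}a_{\lambda''}a_{\lambda'''}$ over the correlation set $S_n(4)$ of \eqref{eq:S4 def}, while the quadratic term collapses (via $\lambda'=-\lambda$, the sign cancelling that of $i^2$) to $(2/n)\Nc_n^{-1}\sum_\lambda\lambda_\ell^2|a_\lambda|^2$.

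The combinatorial core is then to evaluate the $S_n(4)$-sum using the fine structure of Lemma \ref{lem:S4 fine struct}. I would organise the solutions of $\lambda+\lambda'+\lambda''+\lambda'''=0$ into the three ``diagonal'' pairings; on each of them two of the frequencies are the negatives of the other two, so that each weight $\lambda_\ell(-\lambda_\ell)=-\lambda_\ell^2$ contributes a square and every coefficient $a_\lambda a_{-\lambda}=|a_\lambda|^2$ becomes a modulus. Each pairing thus yields $\big(\sum_\lambda\lambda_\ell^2|a_\lambda|^2\big)^2$, while the three pairwise intersections (where two pairings coincide, forcing a tuple of the form $(\lambda,\lambda,-\lambda,-\lambda)$ up to order) each yield $\sum_\lambda\lambda_\ell^4|a_\lambda|^4$, and the triple intersection is empty. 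Inclusion--exclusion then gives $3\big(\sum_\lambda\lambda_\ell^2|a_\lambda|^2\big)^2-3\sum_\lambda\lambda_\ell^4|a_\lambda|^4$, exactly paralleling the unweighted identity from Lemma \ref{lem1}.

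At this point I would perform the crucial cancellation. Writing $R:=\sum_\lambda\lambda_\ell^2(|a_\lambda|^2-1)$ and recalling from Lemma \ref{lemmavar} that $\sum_\lambda\lambda_\ell^2=\Nc_n\,n/2$, one has $\sum_\lambda\lambda_\ell^2|a_\lambda|^2=R+\Nc_n n/2$. Substituting into the assembled expression $(2/n)^2(3/\Nc_n^2)(\cdots)^2-(12/(n\Nc_n))(\cdots)+3$ and expanding the square, the terms linear in $R$ and all the constants cancel identically (here the coefficient $6$ of $H_2$ matches precisely the cross term generated by the mean $\Nc_n n/2$, and $3-6+3=0$), leaving only $12R^2/(n^2\Nc_n^2)$ --- the single-chaos manifestation of Berry's cancellation. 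It remains to re-express $R$ through the half-sum over $\{\lambda_2>0\}$: since $\lambda_\ell^2$ and $|a_\lambda|^2$ are invariant under $\lambda\mapsto-\lambda$, we have $R=2n\sum_{\lambda:\lambda_2>0}(\lambda_\ell^2/n)(|a_\lambda|^2-1)$ up to the contribution of the at most two frequencies with $\lambda_2=0$; the latter is a bounded random variable, hence $o_\P(1)$ after the normalisation $(\Nc_n/2)^{-1/2}$. Since $12R^2/(n^2\Nc_n^2)=\tfrac{24}{\Nc_n}\big((\Nc_n/2)^{-1/2}\sum_{\lambda:\lambda_2>0}(\lambda_\ell^2/n)(|a_\lambda|^2-1)+o_\P(1)\big)^2$, this produces the first displayed identity, the term $(2/n)^2(3/\Nc_n^2)\sum_\lambda\lambda_\ell^4|a_\lambda|^4$ being carried over unchanged.

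For the second assertion I would apply a weak law of large numbers to $(2/n)^2(3/\Nc_n)\sum_\lambda\lambda_\ell^4|a_\lambda|^4=(12/(n^2\Nc_n))\sum_\lambda\lambda_\ell^4|a_\lambda|^4$. Restricting to $\{\lambda_2>0\}$ makes the summands independent; since $\E[|a_\lambda|^4]=2$, the mean equals $24\,(n^2\Nc_n)^{-1}\sum_\lambda\lambda_\ell^4$, which by Lemma \ref{semplice} is $24\cdot(3+\widehat\mu_{n_j}(4))/8\to 3(3+\eta)$. The centred part tends to $0$ in $L^2$, since $\lambda_\ell^8\le n^4$ forces its variance to be $O(1/\Nc_n)$, whence convergence in probability. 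I expect the only genuinely delicate point to be the exact cancellation of the lower-order terms in the third paragraph, as it hinges on the precise matching of the $H_2$-coefficient with the cross term coming from the mean $\Nc_n n/2$; the weighted inclusion--exclusion, though tedious, is routine once the three pairings of $S_n(4)$ are identified.
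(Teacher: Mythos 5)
Your proposal is correct and follows essentially the same route as the paper: expand $H_4(t)=t^4-6t^2+3$, reduce the quartic term to the spectral correlation set $S_n(4)$ via Lemma \ref{lem:S4 fine struct}, complete the square around $\sum_\lambda \lambda_\ell^2(|a_\lambda|^2-1)$ using $\sum_\lambda\lambda_\ell^2=\Nc_n n/2$ so that the lower-order terms cancel, and conclude the convergence in probability by splitting off the centred part (variance $O(1/\Nc_n)$) and applying Lemma \ref{semplice} to the deterministic part. The only cosmetic difference is that you make the intermediate cancellation $3-6+3=0$ explicit where the paper passes directly to the final display; the handling of the $\lambda_2=0$ frequencies as an $o_{\P}(1)$ correction also matches the paper's treatment.
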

\noindent\begin{proof}
The proof is similar to that of Lemma \ref{lem1}. We have that
\begin{eqnarray}\nonumber
&&\int_{\mathbb T} H_{4}(\partial_\ell \widetilde T_n(x))\,dx
=\int_{\mathbb T}
(\partial_\ell \widetilde T_n(x)^4 -6 \partial_\ell \widetilde T_n(x)^2+3)\,dx\\ \nonumber
&& = \frac{1}{\mathcal N_n^2}\frac{4}{n^2}
\sum_{\lambda, \dots, \lambda''' \in \Lambda_n} \lambda_\ell\lambda'_\ell\lambda''_\ell\lambda'''_\ell a_{\lambda} \overline a_{\lambda'} a_{\lambda''}
\overline a_{\lambda'''}
\int_{\mathbb T}
\exp(2\pi i\langle \lambda -\lambda' + \lambda'' - \lambda''', x\rangle )\,dx+\\ \nonumber
&&  -6\frac{1}{\mathcal N_n}\frac{2}{n}
\sum_{\lambda, \lambda' } \lambda_\ell\lambda'_\ell a_{\lambda}
\overline a_{\lambda'}
\int_{\mathbb T}\exp(2\pi i\langle \lambda - \lambda', x\rangle )\,dx   +3 \\ \nonumber
&&=\frac{1}{\mathcal N_n^2}\frac{4}{n^2}
\sum_{\lambda-\lambda'+\lambda''- \lambda''' =0}\lambda_\ell\lambda'_\ell\lambda''_\ell\lambda'''_\ell a_{\lambda} \overline a_{\lambda'} a_{\lambda''}
\overline a_{\lambda'''}
  -6\frac{1}{\mathcal N_n}\frac{2}{n}
\sum_{\lambda \in \Lambda_n} \lambda_\ell^2 |a_{\lambda}|^2
   +3\\ \nonumber
&&=\frac{3}{\mathcal N_n^2}\frac{4}{n^2}\left (
\sum_{\lambda,\lambda'}\lambda_\ell^2(\lambda'_\ell)^2 |a_{\lambda}|^2 |a_{\lambda'}|^2 - \sum_{\lambda} \lambda_\ell^4 |a_{\lambda}|^4\right )
  -6\frac{1}{\mathcal N_n}\frac{2}{n}
\sum_{\lambda \in \Lambda_n}\lambda_\ell^2 |a_{\lambda}|^2
   +3\\ \label{ciccia}
&& =\frac{24}{\Nc_{n}}\left[ \frac{1}{\sqrt{\Nc_{n}/2}}%
\sum_{\lambda ,\lambda _{2}> 0}\left (\frac{\lambda _{\ell}^{2}}{n}\left( |a_{\lambda
}|^{2}-1\right) \right) + o_{\P}(1) \right] ^{2} -\left( \frac{2}{n}\right) ^{2}\frac{3}{\Nc_{n}^{2}}\sum_{\lambda }\lambda
_{\ell}^{4}|a_{\lambda }|^{4}.
\end{eqnarray}
To conclude the proof, we first observe that the last term in the rhs of \paref{ciccia} may be written as
\begin{eqnarray}
\nonumber
&&\left( \frac{2}{n_j}\right) ^{2}\frac{3}{\Nc_{n_j}}\sum_{\lambda }\lambda
_{\ell}^{4}|a_{\lambda }|^{4} \\ 	\label{avoja}
&&= o_{\P}(1)+ \underbrace{\left( \frac{2}{n_j}\right) ^{2}\frac{3}{\Nc_{n_j}/2}\sum_{\lambda:\lambda_2>0 }\lambda
_{\ell}^{4}(|a_{\lambda }|^{4} -2)}_{=: K_1(n_j)} +\underbrace{\frac{ 24}{n^2_j\Nc_{n_j} }\sum_{\lambda }\lambda
_{\ell}^{4}}_{=:K_2(n_j)}.
\end{eqnarray}
Now for the last term in the rhs of \paref{avoja} we have from Lemma \ref{semplice}
$$
K_2(n_j) = 3 (3+\widehat \mu_n(4)),
$$
so that the conclusion follows from the fact that $\widehat \mu_n(4)\to \eta$, as well as from the fact that, since the random variables $\{|a_\lambda|^4 -2: \lambda\in \Lambda_{n_j}, \, \lambda_2> 0\}$ in $K_1(n_j)$ are i.i.d., square-integrable and centered and $\lambda_\ell^4/n^2\leq 1$, $\E K_1(n_j)^2 = O(\Nc_{n_j}^{-1})\to 0$.

\end{proof}

\medskip

\begin{lemma}\label{lem3} One has that
\begin{eqnarray}\label{e:barber}
&& \int_{\mathbb T} H_2(T_n(x))\Big( H_2(\partial_1 \widetilde
T_n(x)) + H_2(\partial_2 \widetilde
T_n(x))\Big)\,dx\\
&&\hspace{3cm}= \frac{4}{\Nc_{n}}\left( \frac{1}{\sqrt{\Nc_{n}/2}}\sum_{\lambda
,\lambda _{2}> 0}\left( |a_{\lambda }|^{2}-1\right) +o_{\P}(1) \right) ^{2}-\frac{2%
}{\Nc_{n}^{2}}\sum_{\lambda }|a_{\lambda ^{\prime }}|^{4}.\notag
\end{eqnarray}
\end{lemma}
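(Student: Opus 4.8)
The plan is to reduce the integral to sums over the length-$4$ spectral correlation set $S_n(4)$ and to exploit its fine structure exactly as in the proof of Lemma \ref{lem1}. Writing $H_2(t)=t^2-1$, I would first expand the integrand as
\begin{equation*}
H_2(T_n)\big(H_2(\partial_1\widetilde T_n)+H_2(\partial_2\widetilde T_n)\big)=T_n^2\big(\partial_1\widetilde T_n^2+\partial_2\widetilde T_n^2\big)-2\,T_n^2-\big(\partial_1\widetilde T_n^2+\partial_2\widetilde T_n^2\big)+2
\end{equation*}
and integrate the four summands over $\mathbb T$ one at a time.

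The two quadratic summands and the constant are elementary. Using $\int_{\mathbb T}e_{\lambda-\lambda'}(x)\,dx=\delta_\lambda^{\lambda'}$ one gets $\int_{\mathbb T}T_n^2\,dx=\Nc_n^{-1}\sum_\lambda|a_\lambda|^2$, while the same orthogonality together with $\lambda_1^2+\lambda_2^2=n$ (cf.\ Lemma \ref{lemmavar}) gives $\int_{\mathbb T}\big(\partial_1\widetilde T_n^2+\partial_2\widetilde T_n^2\big)\,dx=2\Nc_n^{-1}\sum_\lambda|a_\lambda|^2$. Hence the last three summands contribute $-\tfrac{4}{\Nc_n}\sum_\lambda|a_\lambda|^2+2$, which I keep aside for a cancellation at the end.

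The heart of the matter is the quartic term. Since $\partial_\ell\widetilde T_n(x)^2=-\tfrac2n\Nc_n^{-1}\sum_{\mu,\mu'}\mu_\ell\mu'_\ell a_\mu a_{\mu'}e_{\mu+\mu'}(x)$ (the minus sign coming from $i^2$), integrating the product forces $\mu_1+\mu_2+\mu_3+\mu_4=0$, and after summing over $\ell=1,2$ the weight of each admissible $4$-tuple becomes $\langle\mu_3,\mu_4\rangle a_{\mu_1}a_{\mu_2}a_{\mu_3}a_{\mu_4}$, so that
\begin{equation*}
\int_{\mathbb T}T_n^2\big(\partial_1\widetilde T_n^2+\partial_2\widetilde T_n^2\big)\,dx=-\frac{2}{n\,\Nc_n^2}\sum_{(\mu_1,\mu_2,\mu_3,\mu_4)\in S_n(4)}\langle\mu_3,\mu_4\rangle\,a_{\mu_1}a_{\mu_2}a_{\mu_3}a_{\mu_4}.
\end{equation*}
I would evaluate this sum by inclusion-exclusion over the three pairings of Lemma \ref{lem:S4 fine struct}. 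The pairing $\mu_1=-\mu_2,\ \mu_3=-\mu_4$ contributes $\langle\mu_3,-\mu_3\rangle=-n$ and yields $-n\sum_{\lambda,\lambda'}|a_\lambda|^2|a_{\lambda'}|^2$; the two ``crossed'' pairings each yield $\sum_{\lambda,\lambda'}\langle\lambda,\lambda'\rangle|a_\lambda|^2|a_{\lambda'}|^2$; and the over-counted diagonal tuples $B_n(4)$ combine into a single net correction $+n\sum_\lambda|a_\lambda|^4$. The decisive point is that the crossed contribution vanishes: since $\sum_{\lambda,\lambda'}\langle\lambda,\lambda'\rangle|a_\lambda|^2|a_{\lambda'}|^2=\big\|\sum_\lambda\lambda|a_\lambda|^2\big\|^2$, and the antipodal symmetry $\lambda\mapsto-\lambda$ of $\Lambda_n$ together with $|a_{-\lambda}|^2=|a_\lambda|^2$ (from $a_{-\lambda}=\overline{a_\lambda}$) forces $\sum_\lambda\lambda|a_\lambda|^2=0$. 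Thus the quartic term collapses to $\tfrac{2}{\Nc_n^2}\big((\sum_\lambda|a_\lambda|^2)^2-\sum_\lambda|a_\lambda|^4\big)$.

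Finally I would assemble the pieces. Setting $G:=\sum_\lambda(|a_\lambda|^2-1)$ so that $\sum_\lambda|a_\lambda|^2=\Nc_n+G$, the quartic term equals $\tfrac{2}{\Nc_n^2}G^2+\tfrac{4}{\Nc_n}G+2-\tfrac{2}{\Nc_n^2}\sum_\lambda|a_\lambda|^4$; adding the reserved $-\tfrac{4}{\Nc_n}(\Nc_n+G)+2$ cancels both the linear term $\tfrac{4}{\Nc_n}G$ and the constants, leaving precisely $\tfrac{2}{\Nc_n^2}G^2-\tfrac{2}{\Nc_n^2}\sum_\lambda|a_\lambda|^4$. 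Writing $G=2\sum_{\lambda:\lambda_2>0}(|a_\lambda|^2-1)+R$, where $R$ collects the at most two summands with $\lambda_2=0$ (nonzero only when $\sqrt n\in\Z$), one has $\tfrac{2}{\Nc_n^2}G^2=\tfrac{4}{\Nc_n}\big(\tfrac{1}{\sqrt{\Nc_n/2}}\sum_{\lambda:\lambda_2>0}(|a_\lambda|^2-1)+R/\sqrt{2\Nc_n}\big)^2$, and since $R$ is a sum of at most two terms of bounded variance, $R/\sqrt{2\Nc_n}=o_\P(1)$. This is exactly \eqref{e:barber}. I expect the only real obstacle to be the careful bookkeeping of the inclusion-exclusion for the quartic sum --- tracking the signs from $i^2$ and from $\langle\lambda,-\lambda\rangle=-n$, and correctly identifying the diagonal over-count --- together with spotting the symmetry-induced vanishing of the crossed term, which is what makes the whole expression simplify.
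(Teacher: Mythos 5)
Your argument is correct and follows essentially the same route as the paper's: expand $H_2$, reduce the quartic term to a sum over $S_n(4)$, apply inclusion--exclusion over the three antipodal pairings of Lemma \ref{lem:S4 fine struct}, kill the crossed contribution via $\sum_\lambda \lambda\,|a_\lambda|^2=0$ (the paper's \eqref{zeroo}), and assemble. The only cosmetic difference is that you sum over $\ell=1,2$ before the inclusion--exclusion (so the weight is $\langle\mu_3,\mu_4\rangle$) and work with unconjugated coefficients, whereas the paper treats each $\ell$ separately with the constraint $\lambda-\lambda'+\lambda''-\lambda'''=0$ and sums at the end.
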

\noindent \begin{proof} For $\ell=1,2$,


\begin{eqnarray}\nonumber
&& \int_{\mathbb T} H_2(T_n(x))H_2(\partial_\ell \widetilde
T_n(x))\,dx=\int_{\mathbb T} (T_n(x)^2-1)(\partial_\ell \widetilde
T_n(x)^2-1)\,dx\\ \nonumber
&&=\int_{\mathbb T} \left (\frac{1}{\mathcal N_n} \sum_{\lambda, \lambda'} a_\lambda \overline{a}_{\lambda'} e_\lambda(x) e_{-\lambda'}(x)-1\right )\left (\frac{2}{n}\frac{1}{\mathcal N_n} \sum_{\lambda'', \lambda'''} \lambda_\ell'' \lambda_\ell''' a_{\lambda''} \overline{a}_{\lambda'''} e_{\lambda''}(x) e_{-\lambda'''}(x)-1 \right )\,dx\\
\label{core}
&& =\frac{2}{n}\frac{1}{\mathcal N_n^2} \sum_{\lambda-\lambda'+\lambda''-\lambda'''=0} \lambda_\ell'' \lambda_\ell''' a_\lambda \overline{a}_{\lambda'} a_{\lambda''} \overline{a}_{\lambda'''}-
\frac{1}{\mathcal N_n} \sum_{\lambda} |a_\lambda|^2
-
\frac{2}{n}\frac{1}{\mathcal N_n} \sum_{\lambda}\lambda_\ell^2 |a_\lambda|^2 +1.
\end{eqnarray}
An application of the inclusion-exclusion principle yields that the first summand in the rhs of \paref{core} equals
\begin{eqnarray}\label{buondi}
&&\frac{2}{n}\frac{1}{\mathcal N_n^2} \sum_{\lambda-\lambda'+\lambda''-\lambda'''=0} \lambda_\ell'' \lambda_\ell''' a_\lambda \overline{a}_{\lambda'} a_{\lambda''} \overline{a}_{\lambda'''} \\  \nonumber
&&=
\frac{2}{n}\frac{1}{\mathcal N_n^2} \left ( \sum_{\lambda,\lambda'} \lambda_j^2 |a_\lambda|^2 |a_{\lambda'}|^2 +2\sum_{\lambda,\lambda'} \lambda_j \lambda'_j |a_\lambda|^2 |a_{\lambda'}|^2 - 2\sum_\lambda \lambda_j^2|a_\lambda|^4 + \sum_\lambda \lambda_j^2|a_\lambda|^4 \right ).
\end{eqnarray}
Using the relation $a_{-\lambda} = \overline a_\lambda$, we also infer that
\begin{equation}\label{zeroo}
\sum_{\lambda,\lambda'} \lambda_j \lambda'_j |a_\lambda|^2 |a_{\lambda'}|^2 = \left (\sum_\lambda \lambda_j |a_\lambda|^2 \right)^2 =0.
\end{equation}
Substituting \paref{zeroo} into  \paref{buondi} and then \paref{buondi} into \paref{core} we rewrite \paref{core} as
\begin{eqnarray}\nonumber
&& \int_{\mathbb T} H_2(T_n(x))H_2(\partial_\ell \widetilde
T_n(x))\,dx\\
\label{core2}
&& =\frac{2}{n}\frac{1}{\mathcal N_n^2} \left ( \sum_{\lambda,\lambda'} \lambda_j^2 |a_\lambda|^2 |a_{\lambda'}|^2  - \sum_\lambda \lambda_j^2|a_\lambda|^4  \right ) -
\frac{1}{\mathcal N_n} \sum_{\lambda} |a_\lambda|^2
\\ \nonumber
&&-
\frac{2}{n}\frac{1}{\mathcal N_n} \sum_{\lambda}\lambda_\ell^2 |a_\lambda|^2 +1.
\end{eqnarray}

Summing the terms corresponding to $\partial _{1}$ and $\partial _{2}$ up, i.e.  \paref{core2} for $\ell=1$ and $\ell=2$, we deduce that the lhs of \eqref{e:barber} equals
\begin{eqnarray*}
&&\int_{\mathbb T} H_2(T_n(x))\Big( H_2(\partial_1 \widetilde
T_n(x)) + H_2(\partial_2 \widetilde
T_n(x))\Big)\,dx\\
&&=\frac{2}{n}\frac{1}{\Nc_{n}^{2}}\left( \sum_{\lambda ,\lambda ^{\prime
}}\left( \lambda _{1}^{2}+\lambda _{2}^{2}\right) |a_{\lambda
}|^{2}|a_{\lambda ^{\prime }}|^{2}-\sum_{\lambda }\left( \lambda
_{1}^{2}+\lambda _{2}^{2}\right) |a_{\lambda ^{\prime }}|^{4}\right) \\
&&-\frac{2}{\Nc_{n}}\sum_{\lambda }|a_{\lambda }|^{2}-\frac{2}{n}\frac{1}{\Nc_{n}%
}\sum_{\lambda }\left( \lambda _{1}^{2}+\lambda _{2}^{2}\right) |a_{\lambda
}|^{2}+2\\
&&=\frac{2}{n}\frac{1}{\Nc_{n}^{2}}\left( \sum_{\lambda ,\lambda ^{\prime
}}n|a_{\lambda }|^{2}|a_{\lambda ^{\prime }}|^{2}-\sum_{\lambda
}n|a_{\lambda ^{\prime }}|^{4}\right) -\frac{2}{\Nc_{n}}\sum_{\lambda }|a_{\lambda }|^{2}-\frac{2}{n}\frac{1}{\Nc_{n}%
}\sum_{\lambda }n|a_{\lambda }|^{2}+2 \\
&&=2\frac{1}{\Nc_{n}^{2}}\left( \sum_{\lambda ,\lambda ^{\prime }}|a_{\lambda
}|^{2}|a_{\lambda ^{\prime }}|^{2}-\sum_{\lambda }|a_{\lambda ^{\prime
}}|^{4}\right) -\frac{2}{\Nc_{n}}\sum_{\lambda }|a_{\lambda }|^{2}-\frac{2}{\Nc_{n}}%
\sum_{\lambda }|a_{\lambda }|^{2}+2\\
&&=\frac{2}{\Nc_{n}}\left( \frac{\sqrt{2}}{\sqrt{\Nc_{n}/2}}\sum_{\lambda
,\lambda _{2}> 0}\left( |a_{\lambda }|^{2}-1\right) +o_{\P}(1)  \right ) ^{2}-\frac{2%
}{\Nc_{n}^{2}}\sum_{\lambda }|a_{\lambda ^{\prime }}|^{4},
\end{eqnarray*}
which equals to the rhs of \paref{e:barber}.
\end{proof}

\medskip

Our last lemma allows one to deal with the most challenging term appearing in \eqref{e:nus}.

\begin{lemma}\label{lem4}
We have that
\begin{eqnarray*}
 &&\int H_{2}(\partial _{1}\widetilde{T}_{n})H_{2}(\partial _{2}\widetilde{T}%
_{n})\,dx  \\
&&=-4\left[ \frac{1}{\sqrt{\Nc_{n}/2}}\frac{1}{n}\sum_{\lambda ,\lambda _{2}>
0}\lambda _{2}^{2}(|a_{\lambda }|^{2}-1) \right] ^{2}
-4\left[ \frac{1}{\sqrt{\Nc_{n}/2}}\frac{1}{n}\sum_{\lambda ,\lambda _{2}>
0}\lambda _{1}^{2}(|a_{\lambda }|^{2}-1)+o_{\P}(1)\right] ^{2}\\
&&\quad +4\left[  \frac{1}{\sqrt{\Nc_{n}/2}}\sum_{\lambda ,\lambda _{2}>
0}(|a_{\lambda }|^{2}-1) +o_{\P}(1)
\right] ^{2} \\
&& \quad+16\left[
\frac{1}{\sqrt{\Nc_{n}/2}}\frac{1}{n}\sum_{\lambda ,\lambda _{2}>
0}\lambda _{1}\lambda _{2}\left( |a_{\lambda }|^{2}-1\right)+o_{\P}(1)
\right] ^{2}
-\frac{12}{n^{2}}\frac{1}{\Nc_{n}^{2}}\sum_{\lambda
}\lambda _{1}^{2}\lambda _{2}^{2}|a_{\lambda }|^{4},
\end{eqnarray*}
and the following convergence takes place as $n_j\to\infty$:
\begin{equation}\label{torna a surriento}
\frac{12}{n_j^{2}}\frac{1}{\Nc_{n_j}^{2}}\sum_{\lambda
}\lambda _{1}^{2}\lambda _{2}^{2}|a_{\lambda }|^{4}\stackrel{\P}{\longrightarrow} 3(1-\eta).
\end{equation}

\end{lemma}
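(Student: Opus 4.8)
The plan is to follow verbatim the template already used in Lemmas \ref{lem1}--\ref{lem3}. Since $H_2(t)=t^2-1$, I first write
\[
\int_{\mathbb T} H_2(\partial_1\widetilde T_n)H_2(\partial_2\widetilde T_n)\,dx=\int_{\mathbb T}\partial_1\widetilde T_n^2\,\partial_2\widetilde T_n^2\,dx-\int_{\mathbb T}\partial_1\widetilde T_n^2\,dx-\int_{\mathbb T}\partial_2\widetilde T_n^2\,dx+1 ,
\]
and substitute the Fourier expansion \eqref{e:norma} into every factor. Integration over $\mathbb T$ annihilates all exponentials except those whose four (resp.\ two) frequencies sum to zero. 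Hence the two quadratic integrals reduce, exactly as in \eqref{eq:int H2 expl}, to $1+\tfrac{2}{n\mathcal N_n}\sum_\lambda\lambda_\ell^2(|a_\lambda|^2-1)$, while the quartic integral becomes $\frac{4}{n^2\mathcal N_n^2}$ times a sum over the correlation set $S_n(4)$ of weights $\lambda_1^{(1)}\lambda_1^{(2)}\lambda_2^{(3)}\lambda_2^{(4)}\,a_{\lambda^{(1)}}a_{\lambda^{(2)}}a_{\lambda^{(3)}}a_{\lambda^{(4)}}$ (the factor $i^4=1$ making this term positive).

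Next I feed in the fine structure of $S_n(4)$ from Lemma \ref{lem:S4 fine struct}. Over $A_n(4)$ the sum splits into its three pairings of the four slots; the key point is that the weight $\lambda_1^{(1)}\lambda_1^{(2)}\lambda_2^{(3)}\lambda_2^{(4)}$ is \emph{not} symmetric, so the pairing matching the two first-coordinate slots against each other (and likewise the second-coordinate slots) yields the separable product $\big(\sum_\lambda\lambda_1^2|a_\lambda|^2\big)\big(\sum_\lambda\lambda_2^2|a_\lambda|^2\big)$, whereas the two remaining pairings \emph{each} yield $\big(\sum_\lambda\lambda_1\lambda_2|a_\lambda|^2\big)^2$, producing the multiplicity responsible for the final coefficient $16$. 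The diagonal block $B_n(4)$ contributes the single sum $\propto\sum_\lambda\lambda_1^2\lambda_2^2|a_\lambda|^4$, i.e.\ the last (non-squared) term of the statement. At this stage I use the reflection symmetry $(\lambda_1,\lambda_2)\mapsto(-\lambda_1,\lambda_2)$ of $\Lambda_n$, which gives $\sum_\lambda\lambda_1\lambda_2=0$ and hence makes $\sum_\lambda\lambda_1\lambda_2|a_\lambda|^2=\sum_\lambda\lambda_1\lambda_2(|a_\lambda|^2-1)$ automatically centred.

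I then recentre through $|a_\lambda|^2=(|a_\lambda|^2-1)+1$ and $\sum_\lambda\lambda_\ell^2=n\mathcal N_n/2$ (Lemma \ref{lemmavar}). The deterministic constant and the terms linear in $(|a_\lambda|^2-1)$ coming from the quartic piece cancel precisely against the two quadratic integrals and the additive $1$, exactly as in Lemmas \ref{lem1}--\ref{lem3}; what is left is quadratic in the centred sums $\sum_\lambda\lambda_1^2(|a_\lambda|^2-1),\ \sum_\lambda\lambda_2^2(|a_\lambda|^2-1),\ \sum_\lambda(|a_\lambda|^2-1),\ \sum_\lambda\lambda_1\lambda_2(|a_\lambda|^2-1)$. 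The separable product is turned into the three squares of the statement by polarisation combined with the identity $\lambda_1^2+\lambda_2^2=n$ (so that $\sum_\lambda(\lambda_1^2+\lambda_2^2)(|a_\lambda|^2-1)=n\sum_\lambda(|a_\lambda|^2-1)$); this is exactly what fixes the signs $+4,-4,-4$ in front of the $[\sum(|a_\lambda|^2-1)]^2$, $[\sum\lambda_1^2(\cdots)]^2$, $[\sum\lambda_2^2(\cdots)]^2$ brackets. Finally I replace sums over $\Lambda_n$ by sums over $\{\lambda_2>0\}$, the single leftover frequency $(\pm\sqrt n,0)$ (present only when $\sqrt n\in\mathbb Z$) being absorbed into the $o_\P(1)$ remainders, just as before.

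It remains to establish \eqref{torna a surriento}. Splitting $\sum_\lambda\lambda_1^2\lambda_2^2|a_\lambda|^4=\sum_\lambda\lambda_1^2\lambda_2^2(|a_\lambda|^4-2)+2\sum_\lambda\lambda_1^2\lambda_2^2$, the centred first sum has mean zero and, since the $|a_\lambda|^4$ are i.i.d.\ (up to the identification $\pm\lambda$) with finite variance and $\lambda_1^2\lambda_2^2\le n^2$, a second moment small enough that after normalisation it is $o_\P(1)$; the second sum is computed by Lemma \ref{semplice} as $\sum_\lambda\lambda_1^2\lambda_2^2=\tfrac{1-\widehat\mu_n(4)}{8}n^2\mathcal N_n$, and $\widehat\mu_{n_j}(4)\to\eta$ then gives the limit $3(1-\eta)$. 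I expect the only delicate point to be the combinatorial bookkeeping of the second step: identifying which pairing carries the $\lambda_1^2\lambda_2^2$-weight and which carry the $\lambda_1\lambda_2\cdot\lambda_1\lambda_2$-weight, and tracking the exact over-counting corrections (the $B_n(4)$ diagonal together with the $\lambda_2=0$ boundary) so that every spurious lower-order term cancels and only the four squares plus the single diagonal sum survive.
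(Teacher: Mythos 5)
Your proposal reproduces the paper's own argument essentially verbatim: the same expansion of $H_2\cdot H_2$ over the length-$4$ correlation set of Lemma \ref{lem:S4 fine struct}, the same identification of the one separable and two cross pairings (whence the signs $+4,-4,-4$ via $\lambda_1^2+\lambda_2^2=n$ and the coefficient $16$ from the multiplicity $2$), the same use of $\sum_\lambda\lambda_1\lambda_2=0$ to centre the cross term, and the same splitting combined with Lemma \ref{semplice} for \eqref{torna a surriento}. This is the paper's approach and the argument is sound; no gaps.
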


\noindent\begin{proof} One has that
\begin{equation}
\label{hope1}
\begin{split}
\int H_{2}(\partial _{1}\widetilde{T}_{n})H_{2}(\partial _{2}\widetilde{T}%
_{n})dx
&=\frac{4}{n^{2}}\frac{1}{\Nc_{n}^{2}}\sum_{\lambda -\lambda ^{\prime
}+\lambda ^{\prime \prime }-\lambda ^{\prime \prime \prime }=0}\lambda
_{1}\lambda _{1}^{\prime }\lambda _{2}^{\prime \prime }\lambda _{2}^{\prime
\prime \prime }a_{\lambda }\overline{a}_{\lambda ^{\prime }}a_{\lambda
^{\prime \prime }}\overline{a}_{\lambda ^{\prime \prime \prime }} \\
&-\frac{2}{n}\frac{1}{\Nc_{n}}\sum_{\lambda }\lambda _{1}^{2}|a_{\lambda
}|^{2}-\frac{2}{n}\frac{1}{\Nc_{n}}\sum_{\lambda }\lambda _{2}^{2}|a_{\lambda
}|^{2}+1.
\end{split}
\end{equation}%
First of all, we note that for the first two terms in \paref{hope1}%
\[
\mathbb{E}\left[ \frac{2}{n}\frac{1}{\Nc_{n}}\sum_{\lambda }(\lambda
_{1}^{2}+\lambda _{2}^{2})|a_{\lambda }|^{2}\right] =\mathbb{E}\left[ \frac{2%
}{\Nc_{n}}\sum_{\lambda }|a_{\lambda }|^{2}\right] =2.
\]%
Let us now focus on \eqref{hope1}. Using the structure of $S_4(n)$ in Lemma \ref{lem:S4 fine struct}, we obtain
\begin{eqnarray}\label{viva}
&&\frac{4}{n^{2}}\frac{1}{\Nc_{n}^{2}}\sum_{\lambda -\lambda ^{\prime }+\lambda
^{\prime \prime }-\lambda ^{\prime \prime \prime }=0}\lambda _{1}\lambda
_{1}^{\prime }\lambda _{2}^{\prime \prime }\lambda _{2}^{\prime \prime
\prime }a_{\lambda }\overline{a}_{\lambda ^{\prime }}a_{\lambda ^{\prime
\prime }}\overline{a}_{\lambda ^{\prime \prime \prime}}
\\ \nonumber
&&=\frac{4}{n^{2}}\frac{1}{\Nc_{n}^{2}}\left[ \sum_{\lambda ,\lambda
^{\prime }}\lambda _{1}^{2}(\lambda _{2}^{\prime })^{2}|a_{\lambda
}|^{2}|a_{\lambda ^{\prime }}|^{2}+2\sum_{\lambda ,\lambda
^{\prime }}\lambda _{1}\lambda _{2}\lambda _{1}^{\prime }\lambda
_{2}^{\prime }|a_{\lambda }|^{2}|a_{\lambda ^{\prime
}}|^{2}-3\sum_{\lambda }\lambda _{1}^{2}\lambda
_{2}^{2}|a_{\lambda }|^{4}\right].
\end{eqnarray}
Let us now denote
\begin{eqnarray}\nonumber
A &:=& \frac{4}{n^{2}}\frac{1}{\Nc_{n}^{2}}\sum_{\lambda ,\lambda ^{\prime }}\lambda
_{1}^{2}(\lambda _{2}^{\prime })^{2}|a_{\lambda }|^{2}|a_{\lambda ^{\prime
}}|^{2}, \\ \nonumber
B &:=& \frac{4}{n^{2}}\frac{1}{\Nc_{n}^{2}}2\sum_{\lambda ,\lambda ^{\prime }}\lambda
_{1}\lambda _{2}\lambda _{1}^{\prime }\lambda _{2}^{\prime }|a_{\lambda
}|^{2}|a_{\lambda ^{\prime }}|^{2}, \\ \label{cC}
C &:=& -3\frac{4}{n^{2}}\frac{1}{\Nc_{n}^{2}}\sum_{\lambda }\lambda _{1}^{2}\lambda
_{2}^{2}|a_{\lambda }|^{4}, \\ \label{Dd}
D &:=& \frac{4}{n^{2}}\frac{1}{\Nc_{n}^{2}}\left\{ -N_{n}\frac{n}{2}\sum_{\lambda
}|a_{\lambda }|^{2}+\frac{\Nc_{n}^{2}n^{2}}{4}\right\},
\end{eqnarray}%
so that \eqref{hope1} with \eqref{viva} read
\begin{equation}
\label{sommatot}
\int H_{2}(\partial _{1}\widetilde{T}_{n})H_{2}(\partial _{2}\widetilde{T}%
_{n})\,dx = A + B + C + D.
\end{equation}

We have that 
\begin{eqnarray*}
A  && =\frac{4}{n^{2}}\frac{1}{\Nc_{n}^{2}}\sum_{\lambda ,\lambda ^{\prime
}}\lambda _{1}^{2}(\lambda _{2}^{\prime })^{2}|a_{\lambda }|^{2}|a_{\lambda
^{\prime }}|^{2} \\
&&=\frac{4}{n^{2}}\frac{1}{\Nc_{n}^{2}}\frac{1}{2}\left\{ \sum_{\lambda
,\lambda ^{\prime }}\lambda _{1}^{2}(\lambda _{2}^{\prime })^{2}|a_{\lambda
}|^{2}|a_{\lambda ^{\prime }}|^{2}+\sum_{\lambda ,\lambda ^{\prime }}\lambda
_{1}^{2}(\lambda _{2}^{\prime })^{2}|a_{\lambda }|^{2}|a_{\lambda ^{\prime
}}|^{2}\right\}
\\
&&=\frac{4}{n^{2}}\frac{1}{\Nc_{n}^{2}}\frac{1}{2}\left\{ \sum_{\lambda ,\lambda
^{\prime }}(n-\lambda _{2}^{2})(\lambda _{2}^{\prime })^{2}|a_{\lambda
}|^{2}|a_{\lambda ^{\prime }}|^{2}+\sum_{\lambda ,\lambda ^{\prime }}\lambda
_{1}^{2}(n-(\lambda _{1}^{\prime })^{2})|a_{\lambda }|^{2}|a_{\lambda
^{\prime }}|^{2}\right\},
\end{eqnarray*}
which we may rewrite as
\begin{eqnarray}\nonumber
A &&=\frac{4}{n^{2}}\frac{1}{\Nc_{n}^{2}}\frac{1}{2}\left\{ -\sum_{\lambda
,\lambda ^{\prime }}\lambda _{2}^{2}(\lambda _{2}^{\prime })^{2}|a_{\lambda
}|^{2}|a_{\lambda ^{\prime }}|^{2}-\sum_{\lambda ,\lambda ^{\prime }}\lambda
_{1}^{2}(\lambda _{1}^{\prime })^{2}|a_{\lambda }|^{2}|a_{\lambda ^{\prime
}}|^{2}\right\}  \\ \nonumber
&&+\frac{4}{n^{2}}\frac{1}{\Nc_{n}^{2}}\frac{1}{2}\left\{ n\sum_{\lambda
,\lambda ^{\prime }}(\lambda _{2}^{\prime })^{2}|a_{\lambda
}|^{2}|a_{\lambda ^{\prime }}|^{2}+n\sum_{\lambda ,\lambda ^{\prime
}}\lambda _{1}^{2}|a_{\lambda }|^{2}|a_{\lambda ^{\prime }}|^{2}\right\}
\\ \nonumber
&&=\frac{4}{n^{2}}\frac{1}{\Nc_{n}^{2}}\frac{1}{2}\left\{ -\sum_{\lambda
,\lambda ^{\prime }}\lambda _{2}^{2}(\lambda _{2}^{\prime })^{2}|a_{\lambda
}|^{2}|a_{\lambda ^{\prime }}|^{2}-\sum_{\lambda ,\lambda ^{\prime }}\lambda
_{1}^{2}(\lambda _{1}^{\prime })^{2}|a_{\lambda }|^{2}|a_{\lambda ^{\prime
}}|^{2}\right\}  \\ \label{ollala}
&&+\frac{4}{n^{2}}\frac{1}{\Nc_{n}^{2}}\frac{1}{2}\left\{ n\sum_{\lambda
,\lambda ^{\prime }}(\lambda _{2}^{\prime })^{2}|a_{\lambda
}|^{2}|a_{\lambda ^{\prime }}|^{2}+n\sum_{\lambda ,\lambda ^{\prime
}}\lambda _{1}^{2}|a_{\lambda }|^{2}|a_{\lambda ^{\prime }}|^{2}\right\}.
\end{eqnarray}%
From \paref{Dd} and \paref{ollala}, we get
\begin{eqnarray}\nonumber
A+D &=&-\frac{4}{n^{2}}\frac{1}{\Nc_{n}^{2}}\frac{1}{2}\left[ \sum_{\lambda
}\lambda _{2}^{2}(|a_{\lambda }|^{2}-1)\right] ^{2} -\frac{4}{n^{2}}\frac{1}{\Nc_{n}^{2}}\frac{1}{2}\left[ \sum_{\lambda
}\lambda _{1}^{2}(|a_{\lambda }|^{2}-1)\right] ^{2} \\ \nonumber
&&+\frac{4}{\Nc_{n}^{2}}\frac{1}{2}\left[ \sum_{\lambda }(|a_{\lambda }|^{2}-1)%
\right] ^{2}\\  \label{a+d}
&=&-4\left[ \frac{1}{\sqrt{\Nc_{n}/2}}\frac{1}{n}\sum_{\lambda ,\lambda _{2}>
0}\lambda _{2}^{2}(|a_{\lambda }|^{2}-1) +o_\P(1)\right] ^{2}\\ \nonumber
&&
-4\left[ \frac{1}{\sqrt{\Nc_{n}/2}}\frac{1}{n}\sum_{\lambda ,\lambda _{2}>
0}\lambda _{1}^{2}(|a_{\lambda }|^{2}-1)+o_{\P}(1)\right] ^{2}\\
\nonumber
&&\quad +4\left[  \frac{1}{\sqrt{\Nc_{n}/2}}\sum_{\lambda ,\lambda _{2}>
0}(|a_{\lambda }|^{2}-1) +o_{\P}(1)
\right] ^{2}.
\end{eqnarray}%
On the other hand,
\begin{eqnarray}\nonumber
B &=&\frac{4}{n^{2}}\frac{1}{\Nc_{n}^{2}}2\sum_{\lambda ,\lambda ^{\prime
}}\lambda _{1}\lambda _{2}\lambda _{1}^{\prime }\lambda _{2}^{\prime
}|a_{\lambda }|^{2}|a_{\lambda ^{\prime }}|^{2}
=\frac{4}{n^{2}}\frac{1}{\Nc_{n}^{2}}2\left[ \sum_{\lambda }\lambda
_{1}\lambda _{2}\left( |a_{\lambda }|^{2}-1\right) \right] ^{2}\\ \label{bB}
&=&   16\left[
\frac{1}{\sqrt{\Nc_{n}/2}}\frac{1}{n}\sum_{\lambda ,\lambda _{2}>
0}\lambda _{1}\lambda _{2}\left( |a_{\lambda }|^{2}-1\right)+o_{\P}(1)
\right] ^{2}.
\end{eqnarray}%
The first statement of Lemma \ref{lem4} then follows upon substituting \paref{a+d}, \paref{bB} and \paref{cC} into \paref{sommatot}.

Now to prove \paref{torna a surriento} it suffices to write
\begin{eqnarray}\nonumber
&&\frac{12}{{n_j}^{2}}\frac{1}{\Nc_{n_j}^{2}}\sum_{\lambda }\lambda _{1}^{2}\lambda
_{2}^{2}|a_{\lambda }|^{4}\\
&&=\frac{12}{{n_j}^{2}}\frac{1}{\Nc_{n_j}^{2}}\sum_{\lambda }\lambda _{1}^{2}\lambda
_{2}^{2}(|a_{\lambda }|^{4}-2)  +\frac{12}{{n_j}^{2}}\frac{2}{\Nc_{n_j}^{2}}\sum_{\lambda }\lambda _{1}^{2}\lambda
_{2}^{2}\\ \nonumber
&&= o_\P(1) + \frac{12}{{n_j}^{2}}\frac{2}{\Nc_{n_j}^{2}}\sum_{\lambda: \lambda_2>0 }\lambda _{1}^{2}\lambda
_{2}^{2}(|a_{\lambda }|^{4}-2)  +\frac{24}{{n_j}^{2}}\frac{1}{\Nc_{n_j}^{2}}\sum_{\lambda }\lambda _{1}^{2}\lambda
_{2}^{2},
\end{eqnarray}
and then Lemma \ref{semplice} and an argument similar to the one that concluded the proof of Lemma \ref{lem2} allow to prove the result.

\end{proof}

\subsection{Proof of Lemma \ref{p:formula4}: technical computations}

Substituting the results of Lemmas \ref{lem1}-\ref{lem4} into \eqref{e:nus} we obtain, as $n_j\to +\infty$,
\begin{equation}
\label{e:nus2}
\begin{split}
\mathcal{L}_{n_j}[4]&=\sqrt{\frac{E_{n_j}}{2\Nc^2_{n_j}}}\Bigg(\frac{\alpha_{0,0}\beta_4}{4!}6
\bigg(\frac{1}{\sqrt{\Nc_{n_j}/2}}\sum_{\lambda :\lambda
_{2}> 0}(|a_{\lambda }|^{2}-1)  \bigg)^{2} \\
&+\frac{\alpha_{0,4}\beta_0}{4!}\cdot 24\bigg( \frac{1}{\sqrt{\Nc_{n_j}/2}}%
\sum_{\lambda ,\lambda _{2}> 0}\left (\frac{\lambda _{1}^{2}}{n_j}\left( |a_{\lambda
}|^{2}-1\right) \right)  \bigg) ^{2}\\
&+\frac{\alpha_{4,0}\beta_0}{4!}\cdot 24\bigg( \frac{1}{\sqrt{\Nc_{n_j}/2}}%
\sum_{\lambda ,\lambda _{2}> 0}\bigg(\frac{\lambda _{2}^{2}}{n_j}\left( |a_{\lambda
}|^{2}-1\right) \bigg) \bigg) ^{2}\\
+\frac{\alpha_{0,2}\beta_2}{2!2!}\cdot 4\bigg ( &\frac{1}{\sqrt{\Nc_{n_j}/2}}\sum_{\lambda
,\lambda _{2}> 0}\left( |a_{\lambda }|^{2}-1\right) \bigg )^{2} +\frac{\alpha_{2,2}\beta_0}{2!2!}\Bigg(-4\bigg( \frac{1}{\sqrt{\Nc_{n_j}/2}}\frac{1}{n_j}\sum_{\lambda ,\lambda _{2}\geq
0}\lambda _{2}^{2}(|a_{\lambda }|^{2}-1) \bigg) ^{2}\\
&-4\bigg( \frac{1}{\sqrt{\Nc_{n_j}/2}}\frac{1}{n_j}\sum_{\lambda ,\lambda _{2}>
0}\lambda _{1}^{2}(|a_{\lambda }|^{2}-1)\bigg) ^{2} +4\bigg(  \frac{1}{\sqrt{\Nc_{n_j}/2}}\sum_{\lambda ,\lambda _{2}>
0}(|a_{\lambda }|^{2}-1)
\bigg) ^{2} \\
& \quad+16\bigg(
\frac{1}{\sqrt{\Nc_{n_j}/2}}\frac{1}{n_j}\sum_{\lambda ,\lambda _{2}>
0}\lambda _{1}\lambda _{2}\bigg( |a_{\lambda }|^{2}-1\bigg)
\bigg) ^{2}\Bigg ) + \widetilde R(n_j)\Bigg),
\end{split}
\end{equation}
where $\widetilde R(n_j)$ is a sequence of random variables converging in probability to some constant $\in \R$. Computing the coefficients $\alpha_{\cdot, \cdot}$ (see \eqref{e:alpha}) and $\beta_{\cdot}$ (see \eqref{e:beta}), from \paref{e:nus2} we obtain that, as $n_j\to +\infty$,
\begin{equation*}
\mathcal{L}_{n_j}[4]=\sqrt{\frac{E_{n_j}}{ 512\, \mathcal N^2_{n_j}}}\Big( W_1(n_j) ^2-2W_2(n_j)^2-2W_3(n_j)^2 - 4W_4(n_j)^2+ R(n_j) \Big),
\end{equation*}
where $W_k(n_j)$, $k=1,2,3,4$ are as in \paref{Wdef} and $R(n_j)$ is a sequence of random variables converging in probability to $1$. The proof is now complete.

\qed


\bigskip

\end{document}